\documentclass[11pt]{article}

\usepackage[a4paper,margin=1in]{geometry}
\usepackage{amsmath,amssymb,amsthm,mathtools}
\usepackage{enumitem}
\usepackage{setspace}
\usepackage{microtype}
\usepackage{booktabs,array}
\usepackage[authoryear,round]{natbib}
\usepackage[hidelinks]{hyperref}
\usepackage[nameinlink,capitalize,noabbrev]{cleveref}

\crefname{appendix}{appendix}{appendices}
\Crefname{appendix}{Appendix}{Appendices}

\usepackage{etoolbox}

\AtEndEnvironment{definition}{%
  \unskip\nobreak\hfill\(\lozenge\)%
}

\newtheorem{theorem}{Theorem}
\newtheorem{proposition}{Proposition}
\newtheorem{lemma}{Lemma}
\newtheorem{corollary}{Corollary}
\newtheorem{axiom}{Axiom}
\newtheorem*{conditionstar}{Condition A13$^*$}

\crefname{axiom}{Axiom}{Axioms}
\theoremstyle{definition}
\newtheorem{definition}{Definition}

\newtheorem{example}{Example}
\theoremstyle{remark}

\newcommand{\F}{\mathcal F}
\newcommand{\T}{\mathcal T}

\newcommand{\R}{\mathbb R}
\newcommand{\one}{\mathbf 1}

\newcommand{\E}{\mathbb E}

\title{\textbf{Implicit Smooth Ambiguity: Perception, Attitude, and Identification}}
\author{Kemal Ozbek\thanks{Department of Economics, University of Southampton, University Road, Southampton, S017 1BJ, United Kingdom. Email: mkemalozbek@gmail.com}}
\date{\today}

\begin{document}
\maketitle

\begin{abstract}
This paper develops a model of identifiable smooth ambiguity in a finite-state Anscombe--Aumann framework in which ambiguity perception and the ambiguity-attitude representation may depend on the act's endogenous certainty-equivalent utility level. The framework separates two sources of utility-level dependence that standard smooth-ambiguity models hold fixed. Neutral-transfer experiments within revealed cells recover conditional beliefs, while compensation experiments across cells recover cell weights and locally revealed ambiguity-attitude scales. These objects have distinct behavioral content. After the within-level structure is extended globally, ambiguity perception is identified at each certainty-equivalent level, whereas variation in admissible level-specific global attitude indices need not by itself reflect behaviorally identified variation in ambiguity attitude. Cross-level restrictions determine when the locally revealed attitude scales can instead be represented by one common global ambiguity-attitude index. The baseline characterization yields a doubly implicit smooth-ambiguity representation with level-specific perception and potentially level-specific global attitude indices. Requiring one common ambiguity-attitude index gives an intermediate stable-attitude model in which only perception may vary; imposing perception stability as well yields standard identifiable smooth ambiguity within the maintained full-support, neutrally separated, fixed-partition environment with at least three revealed cells. The contribution is therefore a behavioral decomposition of utility-level dependence in smooth ambiguity into separately revealed perception and attitude components.

\end{abstract}

\section{Introduction}\label{sec:intro}

A growing experimental literature suggests that ambiguity-sensitive choice is not invariant to the circumstances in which uncertainty is faced. Most directly for the present question, \citet{BaillonPlacido2019} reject constancy of ambiguity aversion as decision makers become better off overall and find substantial evidence in the direction of decreasing ambiguity aversion. Related experiments document systematic variation with monetary stakes \citep{BouchouichaEtAl2017}, across gain--loss and likelihood domains \citep{BaillonBleichrodt2015}, and in restrictions such as weak certainty independence that are retained by prominent ambiguity models \citep{TrautmannWakker2018,KonigKerstingKopsTrautmann2023}. These findings need not share a common mechanism, nor do they by themselves establish that the relevant variation is indexed by the decision maker's certainty-equivalent level. They nevertheless raise a basic identification question: when ambiguity-sensitive behavior changes as the decision maker becomes better or worse off, what exactly is changing: ambiguity perception, attitude toward ambiguity, or both?

Smooth ambiguity provides a natural setting in which to ask this question because it separates three conceptually distinct objects: a utility index for objective risk, an ambiguity-attitude index, and a belief over probabilistic models \citep{KMM2005}. In the standard formulation these objects are globally fixed. Once ambiguity-sensitive behavior is allowed to vary with utility, however, an observed reduction in ambiguity sensitivity at a higher utility level has several possible explanations. The decision maker may evaluate the same ambiguity less adversely, may attach different plausibility to the underlying probabilistic models, or may change both perception and attitude. A reduced-form model in which ambiguity evaluation varies with utility can accommodate the observation, but it cannot by itself determine which mechanism is responsible. Separating these possibilities is therefore both a modeling problem and an identification problem.

We begin from a smooth-ambiguity specification in which the way ambiguity is perceived and evaluated may vary with the decision maker's endogenous certainty-equivalent utility level. Let \(V(f)\) denote the certainty-equivalent utility of an act \(f\). At each level \(v\), let \((P_v,\mu_v)\) denote a statistically identifiable family of probabilistic models together with its second-order belief, and let \(\Phi_v\) denote the ambiguity-attitude index used at that level. The level-\(v\) threshold comparison is
\begin{equation}\label{eq:intro-di}
f\succeq\ell_v
\quad\Longleftrightarrow\quad
\sum_{p\in P_v}\mu_v(p)\,
\Phi_v\!\left(\E_p[u(f)]\right)
\geq \Phi_v(v).
\end{equation}
Both components of ambiguity evaluation may therefore vary with \(v\): the perception pair \((P_v,\mu_v)\) and the ambiguity-attitude index \(\Phi_v\). The dependence is implicit because the relevant level is not an exogenous treatment or state variable; when \(V(f)=v\), the act is evaluated using the perception and attitude associated with that same endogenous certainty-equivalent level. We refer to this as the \emph{doubly implicit} specification.

The subsequent characterizations progressively restrict these two sources of level dependence. First, they ask whether the level-specific ambiguity-attitude indices can be replaced by one common continuous strictly increasing index \(\phi\). When this is possible, ambiguity attitude is stable across certainty-equivalent levels and only ambiguity perception may remain level dependent. A final restriction asks whether perception is also invariant across levels. The analysis thus starts from a doubly implicit representation with level-specific perception and level-specific attitude indices, passes to a stable-attitude model in which only perception may vary, and finally reaches standard identifiable smooth ambiguity, in which both perception and attitude are stable, within the maintained full-support, neutrally separated fixed-partition environment with at least three revealed cells.

Finite-state identifiability makes this decomposition especially transparent. Throughout the main characterization the revealed partition $\mathcal T$ is held fixed across certainty-equivalent levels, so utility-level dependence in ambiguity perception refers to changes in the within-cell beliefs and across-cell weights conditional on a stable cell structure. Building on the finite-state implication of the identifiable-model framework of \citet{DentiPomatto2022}, a statistically identifiable perception pair $(P_v,\mu_v)$ can be represented by a partition $\mathcal T$ of the state space, a conditional belief $p_{v,A}$ within each cell $A\in\mathcal T$, and a cell-weight distribution $q_v$ with mass $q_v(A)$ assigned to cell $A$. The level-specific aggregator in \eqref{eq:intro-di} can therefore be written as
\(
\sum_{A\in\mathcal T}
q_v(A)\,
\Phi_v\!\left(
\sum_{s\in A}p_{v,A}(s)u(f(s))
\right).
\)
This representation isolates two distinct margins of changing perception: beliefs can change \emph{within} a model cell through $p_{v,A}$, while the relative plausibility assigned to different cells can change through $q_v$. The identification problem is therefore richer than asking whether a single probability distribution moves with utility.

The paper's main contribution is to recover the relevant components constructively from choice. Objective-lottery comparisons first recover the vNM utility scale and the certainty-equivalent index. At a given certainty-equivalent level, two-state neutral-transfer experiments recover within-cell probability ratios; mean-preserving transfers test whether those probabilities summarize behavior within each revealed cell; and cross-cell compensation experiments recover cell weights and the local ambiguity-attitude scale. Repeating these measurements across certainty-equivalent levels tests which recovered perception margins and locally revealed attitude classes vary with utility. The construction therefore associates distinct observable comparisons with within-cell beliefs, across-cell weights, and ambiguity attitude rather than identifying them only through the fit of an aggregate functional.

The characterization mirrors this hierarchy in behavioral terms. It first recovers the local within-level structure and then determines when that structure extends to the full act domain. The cross-level restrictions subsequently test whether the locally revealed ambiguity-attitude scales are compatible with one common global attitude index and, conditional on such stability, whether the recovered perception margins are invariant across certainty-equivalent levels. Each refinement therefore has a distinct behavioral interpretation: the first concerns within-level representation, the second attitude stability, and the third perception stability. \Cref{sec:alternative-branches} studies weaker local stability requirements and alternative orders of refinement.

The closest benchmarks clarify what this sequence adds. \citet{KMM2005} provides the benchmark decomposition between objective-risk utility, ambiguity attitude, and second-order beliefs, while \citet{DentiPomatto2022} show how identifiable subjective statistical models and the relevant perception objects can be recovered from choice. The present paper introduces an endogenous utility-level dimension into that decomposition and asks which component must vary across levels. Its contribution is therefore not simply to make the smooth-ambiguity representation more flexible, but to decompose cross-level variation behaviorally into changing perception and changing attitude. The stable-attitude model isolates the intermediate case in which perception may change while one common ambiguity-attitude index suffices, and the final invariance restriction characterizes when perception is fixed as well.

The paper thereby contributes a behavioral decomposition of utility-level-dependent ambiguity into economically interpretable components that are separately recoverable within the elicitation construction. After globalization, the perception objects are identified independently of auxiliary elicitation choices at each level, while the stable-attitude restrictions identify one common global ambiguity-attitude class. The analysis therefore shows not only when a smooth-ambiguity representation exists, but also which observable comparisons recover its partition, within-cell beliefs, across-cell weights, and ambiguity-attitude scale; which cross-level restrictions stabilize each component; and how failures of those restrictions can be interpreted component by component. Standard identifiable smooth ambiguity emerges as an empirically meaningful endpoint of these stability restrictions rather than simply as an imposed special case.

The rest of the paper is organized as follows. \Cref{sec:framework} introduces the Anscombe--Aumann environment, the doubly implicit and stable-attitude representations, and finite-state identifiability. \Cref{sec:behavioral} develops the sequential behavioral construction, the global doubly implicit baseline, the successive stability characterizations, and the identification result. \Cref{sec:extensions} studies local attitude stability, alternative orders of refinement, and extensions beyond identifiable smooth ambiguity. \Cref{sec:literature} relates the analysis to the literature, and \Cref{sec:conclusion} concludes. Proofs are collected in the appendices.

\section{Framework and target representations}\label{sec:framework}

This section introduces the common decision environment and the representation classes studied below. Objective lotteries provide a stable vNM utility scale, while ambiguity perception and  attitude may depend on the certainty-equivalent level. In the finite-state identifiable setting, perception is represented by a state partition, within-cell beliefs, and across-cell weights.

\subsection{Anscombe--Aumann acts and canonical lotteries}

Let $S=\{1,\ldots,n\}$ be a finite state space and let $Z$ be a finite prize set. For any finite set $E$, write
\(
\Delta(E):=\left\{p:E\to[0,1]:\sum_{e\in E}p(e)=1\right\},
\)
and say that $p\in\Delta(E)$ has \emph{full support} if $p(e)>0$ for every $e\in E$; equivalently, $p\in\operatorname{int}\Delta(E)$, the relative interior of the simplex. Write $L=\Delta(Z)$ for the set of objective lotteries and $\F=L^S$ for the \citet{AnscombeAumann1963} act space.

Lotteries are identified with constant acts, and mixtures of acts are defined state by state. We endow each finite-dimensional simplex with its usual Euclidean topology and $\F$ with the corresponding finite product topology; all convergence and closed-graph statements below refer to these topologies. The primitive is a preference relation $\succeq$ on $\F$; as usual, $f\sim g$ means $f\succeq g$ and $g\succeq f$, while $f\succ g$ means $f\succeq g$ and not $g\succeq f$. We also write $f\preceq g$ if $g\succeq f$ and $f\prec g$ if $g\succ f$.

We use the following notation throughout. For $z\in Z$, $\delta_z$ is the degenerate lottery on $z$. The vector $\one\in\mathbb R^S$ has every coordinate equal to one, and $e_s\in\mathbb R^S$ is the unit vector at state $s$. For $p\in\Delta(S)$ and $A\subseteq S$, put $p(A):=\sum_{s\in A}p(s)$; when $p(A)>0$, write $p(\cdot\mid A)$ for the conditional probability on $A$ and
\(
E_p[x\mid A]:=\sum_{s\in A}p(s\mid A)x_s
\; \text{and}\;
\E_p[x]:=\sum_{s\in S}p(s)x_s.
\)
Thus $\E_p[u(f)]=\sum_s p(s)u(f(s))$. 

For a set $B$ and a finite index set $I$, $B^I:=\prod_{i\in I}B$. Throughout, an expression $a\gtreqless b\Longleftrightarrow c\gtreqless d$ means that the corresponding statements hold for each of $>$, $=$, and $<$; equivalently, the same relation is used on both sides.

The behavioral conditions below imply a stable vNM utility index $u$ on $L$. Because $Z$ is finite and the preference over lotteries is nontrivial, there are a best and a worst outcome, denoted $z^+$ and $z^-$. We normalize $u(z^-)=0$ and $u(z^+)=1$. For $a\in[0,1]$, let $\ell_a=a\delta_{z^+}+(1-a)\delta_{z^-}$, so $u(\ell_a)=a$. For an act $f$, write $u(f):=(u(f(s)))_{s\in S}\in[0,1]^S$ for its statewise utility vector. For $x\in[0,1]^S$, the canonical act $f_x$ is defined by $f_x(s)=\ell_{x_s}$.
Throughout, a perturbation of a canonical utility vector is called \emph{feasible} when every resulting utility coordinate remains in $[0,1]$, so that the perturbed canonical act remains in the domain.

\subsection{Doubly implicit and stable-attitude smooth ambiguity}

We begin with a broader representation in which both ambiguity perception and ambiguity attitude may depend on the endogenous utility level; we refer to this as the doubly implicit (DI) specification. Given a normalized vNM index $u$ on $L$, a partition $\T$ of $S$, conditional beliefs $p_{v,A}\in\operatorname{int}\Delta(A)$, cell-weight distribution $q_v\in\operatorname{int}\Delta(\T)$, and continuous strictly increasing indices $\Phi_v:[0,1]\to\R$, define
\[
F_v^{\mathrm{DI}}(f)
:=
\sum_{A\in\T}q_v(A)\Phi_v\!\left(m_{v,A}(f)\right),
\qquad
m_{v,A}(f):=\sum_{s\in A}p_{v,A}(s)u(f(s)).
\]
For a utility vector $x\in[0,1]^S$, we use the same notation
\(
m_{v,A}(x):=\sum_{s\in A}p_{v,A}(s)x_s,
\)
so that the two definitions agree whenever $x_s=u(f(s))$ for every state $s$.

Throughout the identifiable representation classes below, \emph{full support} means that each conditional belief $p_{v,A}$ has full support on its own revealed cell $A$ and that $q_v$ has full support on $\T$. It does not mean that the zero extension of $p_{v,A}$ outside $A$ has full support on $S$.

\begin{definition}\label{def:diisa}
A preference admits a \emph{regular doubly implicit identifiable smooth-ambiguity representation} if there exist a normalized vNM index $u$ on $L$, a partition $\T$ of $S$, families $\{p_{v,A}\in\operatorname{int}\Delta(A)\}_{v,A}$ and $\{q_v\in\operatorname{int}\Delta(\T)\}_v$, continuous strictly increasing indices $\{\Phi_v:[0,1]\to\R\}_v$, and a continuous certainty-equivalent index $V:\F\to[0,1]$ such that, with $F_v^{\mathrm{DI}}$ defined above:
\begin{enumerate}[label=(\roman*),leftmargin=2.2em]
\item $f\succeq g$ iff $V(f)\ge V(g)$, and $V(\ell)=u(\ell)$ for every constant lottery $\ell$;
\item $V$ is strictly state-monotone in vNM utility: if $u(f(s))\ge u(g(s))$ for every $s$ and the inequality is strict in at least one state, then $V(f)>V(g)$;
\item for every $f\in\F$ and $v\in(0,1)$,
\(
V(f)\gtreqless v
\;\Longleftrightarrow\;
F_v^{\mathrm{DI}}(f)\gtreqless\Phi_v(v).
\)
\end{enumerate}
\end{definition}

The partition and vNM index are stable across levels, while the conditional beliefs \(p_{v,A}\), the cell-weight distribution \(q_v\), and the level-specific DI index \(\Phi_v\) may vary with \(v\). Two increasing attitude indices that differ by a positive affine transformation represent the same ambiguity attitude. For any continuous strictly increasing index \(\psi\), write
\( [\psi] := \{\alpha+\lambda\psi:\alpha\in\mathbb R,\ \lambda>0\} \)
for its positive-affine equivalence class. Thus \(\widetilde\psi\in[\psi]\) iff \(\widetilde\psi=\alpha+\lambda\psi\) for some \(\alpha\in\mathbb R\) and \(\lambda>0\). Throughout, statements that attitude varies or is stable refer to this equivalence class rather than to a particular normalization.

\begin{definition}\label{def:neutral-separation}
A regular doubly implicit representation in the sense of \Cref{def:diisa} is \emph{neutrally separated} if, for every $v\in(0,1)$, distinct cells $A,B$, and states $s\in A$, $t\in B$, there do not exist $r>0$ and $\eta>0$ such that, for every feasible $|\varepsilon|<\eta$,
\[q_v(A)\!\left[\Phi_v\!\left(v+p_{v,A}(s)\varepsilon\right)-\Phi_v(v)\right]+q_v(B)\!\left[\Phi_v\!\left(v-p_{v,B}(t)r\varepsilon\right)-\Phi_v(v)\right]
=0\]
\end{definition}

The restriction is substantive and is not simply a nonlinearity requirement on
\(\Phi_v\). For example, if \(\Phi_v\) is affine, then neutral separation fails:
for any distinct cells \(A,B\) and states \(s\in A\), \(t\in B\), choosing
\(
r
=
\frac{q_v(A)p_{v,A}(s)}
     {q_v(B)p_{v,B}(t)}
\)
makes the two local changes exactly offset for all sufficiently small
\(\varepsilon\) of either sign. By contrast, if
\(\Phi_v(x)=x^2\), then after any first-order cancellation the resulting
second-order term is strictly positive, so no such \(r\) can generate exact
two-sided compensation. Nonlinearity alone, however, is not sufficient:
for instance, \(\Phi_v(x)=(x-v)^3\) again permits exact proportional
compensation for a suitable \(r>0\). Thus neutral separation rules out a
specific local proportional symmetry across distinct cells, rather than
merely requiring curvature of the ambiguity-attitude index.
Thus no pair of states assigned to distinct represented cells admits the exact two-sided neutral transfer used below to reveal a cell. For the stable-attitude subclass, this definition applies with $\Phi_v=\phi$ for every $v$.

The stable-attitude specification characterized below is the subclass in which one common continuous strictly increasing ambiguity-attitude index is used at every level. Given $u$, $\T$, $\{p_{v,A}\}_{v,A}$, $\{q_v\}_v$, and such an index $\phi:[0,1]\to\R$, put
\[
F_v(f):=\sum_{A\in\T}q_v(A)\phi\!\left(m_{v,A}(f)\right).
\]

\begin{definition}\label{def:iisa}
A regular doubly implicit identifiable smooth-ambiguity representation in the sense of \Cref{def:diisa} has stable attitude if the positive-affine equivalence class of \(\Phi_v\) is independent of \(v\). Equivalently, representatives can be chosen so that there is one continuous strictly increasing \(\phi:[0,1]\to\mathbb R\) satisfying
\[ \Phi_v=\phi\;\text{for all }v\in(0,1). \]
A preference admitting such a representation is said to admit a regular implicit identifiable smooth-ambiguity representation with stable attitude.
\end{definition}

The partition $\T$, the vNM index $u$, and the ambiguity-attitude index $\phi$ are stable across $v$, whereas $p_{v,A}$ and $q_v$ may vary with $v$. Thus, in the stable-attitude subclass, utility-level-dependent perception means variation in these probability margins conditional on a fixed revealed partition. The three-way threshold conditions simultaneously determine whether an act lies above, on, or below every canonical constant. For an act with $V(f)=v\in(0,1)$, the relevant level-specific representation satisfies $F_v^{\mathrm{DI}}(f)=\Phi_v(v)$ in the doubly implicit model and $F_v(f)=\phi(v)$ in the stable-attitude model. Thus, in the implicit equation determining an act's certainty equivalent, \(V(f)\) indexes the relevant perception objects \(p_{V(f),A}\) and \(q_{V(f)}\) and, in the doubly implicit specification, the ambiguity-attitude index \(\Phi_{V(f)}\). In the stable-attitude specification, only perception remains level dependent, since the common index \(\phi\) is used at every level.
 Endpoints are ranked by $V$ itself; no $F_0$ or $F_1$ is required.

The behavioral analysis below first recovers perception and a local attitude class level by level and then imposes global behavioral calibratability to obtain the global doubly implicit baseline on the full domain. The main sequence then asks only two cross-level questions: whether one ambiguity-attitude scale governs all levels globally, and whether ambiguity perception is stable as well. The local DI model is characterized below; the distinction between local and global attitude stability, and alternative orders of refinement, are developed separately in the extensions section.

Define the \emph{predictive probability} by $\pi_v(s)=q_v(A)p_{v,A}(s)$ for $s\in A$. Then $q_v(A)=\pi_v(A)$ and $p_{v,A}=\pi_v(\cdot\mid A)$. Whenever a conditional belief $p_{v,A}\in\Delta(A)$ is treated as a model on $S$, we identify it with its zero extension outside $A$. Accordingly, define the first-order model family and second-order belief by
\(
P_v:=\{p_{v,A}:A\in\T\}
\;\text{and}\;
\mu_v(p_{v,A}):=q_v(A)=\pi_v(A).
\)
Because the cell supports are disjoint and nonempty, the models $p_{v,A}$ are distinct. Thus in either target specification the perception pair $(P_v,\mu_v)$ is equivalently represented by $(\T,\pi_v)$ in finite states. Throughout the paper, \emph{ambiguity perception} refers to this full perception object $(P_v,\mu_v)$, or equivalently $(\T,\pi_v)$ when the partition is fixed. The vector $\pi_v$ alone is called the \emph{predictive probability}, while $p_{v,A}$ and $q_v$ are the within-cell and across-cell \emph{perception margins}. This convention avoids using ``perception'' and ``predictive probability'' interchangeably.

\subsection{Finite-state identifiability}\label{sec:finite-state-identifiability}

\citet{DentiPomatto2022} study smooth ambiguity with identifiable subjective statistical models. In the finite-state setting used here, call a family $P\subseteq\Delta(S)$ \emph{identifiable} if there is a map $k:S\to P$ such that $p(\{s:k(s)=p\})=1$ for every $p\in P$. Such a map is necessarily onto: for every \(p\in P\),
\( p\bigl(\{s\in S:k(s)=p\}\bigr)=1, \)
so the set \(\{s\in S:k(s)=p\}\) is nonempty. Hence some state is mapped to every \(p\in P\). Since $S$ is finite, every identifiable $P$ is therefore finite. To avoid conflating distinct notions, throughout this paper \emph{identifiable} refers to this statistical-model property. We use \emph{recovered} for objects obtained from the primitive preference relation relative to a fixed admissible elicitation construction, and reserve \emph{behaviorally identified} for recovered objects shown to be invariant to those auxiliary elicitation choices, always up to the stated normalizations.

The following lemma records the finite-state reduction used throughout the paper. It converts an identifiable family of probabilistic models into the partition and predictive-probability objects used in the behavioral identification analysis below.

\begin{lemma}\label{lem:finite-ident}
Let $P\subseteq\Delta(S)$ and suppose there is a map $k:S\to P$ such that $p(\{s:k(s)=p\})=1$ for every $p\in P$. For $p\in P$ put $A_p=\{s:k(s)=p\}$. If $\mu\in\Delta(P)$ and $\pi=\sum_{p\in P}\mu(p)p$ has full support, then the sets $A_p$ form a partition, every $p\in P$ receives positive $\mu$-weight, and $\mu(p)=\pi(A_p)$ and $p=\pi(\cdot\mid A_p)$ for every $p\in P$. Conversely, any partition $\T$ and full-support $\pi$ generate an identifiable family by $p_A=\pi(\cdot\mid A)$ and $\mu(p_A)=\pi(A)$.
\end{lemma}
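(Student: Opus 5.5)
The plan is to argue directly from the defining property of the map $k$ and the full-support assumption on $\pi$, treating the forward and converse directions separately.

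\emph{Disjointness and covering.} For distinct $p,p'\in P$ the sets $A_p$ and $A_{p'}$ are disjoint, since each state has a single image under $k$. They cover $S$ because $k$ is defined on all of $S$. Each $A_p$ is nonempty, because $p(A_p)=1$; this is the ontoness remark recorded before the lemma. Hence $\{A_p\}_{p\in P}$ is a partition of $S$. Since $p(A_p)=1$, every $p\in P$ is supported on $A_p$, so $p(s)=0$ for $s\notin A_p$.

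\emph{Positive weights and the identities.} For $s\in A_p$, every $p'\neq p$ vanishes at $s$, because $s\notin A_{p'}$. Therefore
\[
\pi(s)=\sum_{p'\in P}\mu(p')p'(s)=\mu(p)p(s).
\]
Full support of $\pi$ gives $\pi(s)>0$, so $\mu(p)>0$; this is where nonemptiness of $A_p$ is needed, since we must have some $s\in A_p$ to evaluate at. Summing over $s\in A_p$ gives $\pi(A_p)=\mu(p)\,p(A_p)=\mu(p)$. Dividing, $\pi(s\mid A_p)=\mu(p)p(s)/\mu(p)=p(s)$ for $s\in A_p$. Both $\pi(\cdot\mid A_p)$ and $p$ vanish off $A_p$, so $p=\pi(\cdot\mid A_p)$ as distributions on $S$.

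\emph{Converse.} Given a partition $\T$ and a full-support $\pi$, each $\pi(A)>0$, so $p_A=\pi(\cdot\mid A)$ is well defined, with zero extension outside $A$. The $p_A$ are distinct because their supports are disjoint and nonempty. Define $k(s)=p_A$ for the unique cell $A\ni s$. Then $\{s:k(s)=p_A\}=A$ and $p_A(A)=1$, so $P=\{p_A\}$ is identifiable. Finally, $\mu(p_A)=\pi(A)$ lies in $\Delta(P)$, because the cells partition $S$ and hence these weights sum to $\pi(S)=1$.

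The only delicate step is the middle one. It needs both that $A_p$ is nonempty and that the $p'$ for $p'\neq p$ vanish on $A_p$; everything else is bookkeeping.
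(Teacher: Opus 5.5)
Your proof is correct and follows essentially the same route as the paper's. You use disjointness, covering and nonemptiness of the fibres $A_p$, then the vanishing of every $p'\neq p$ on $A_p$, and finally full support of $\pi$ to obtain $\mu(p)=\pi(A_p)>0$ and $p=\pi(\cdot\mid A_p)$; the differences are cosmetic (you derive $\pi(s)=\mu(p)p(s)$ pointwise before summing, whereas the paper sums first, and in the converse you make explicit the distinctness of the $p_A$ and the normalization of $\mu$, which the paper leaves implicit).
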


Thus, in finite states, $\{(P_v,\mu_v)\}_v$ and $\{(\T,\pi_v)\}_v$ are equivalent descriptions, up to relabelling of partition cells (equivalently, models), whenever the partition is stable.

The main characterization works throughout in the following \emph{maintained environment}: full-support identifiable representations that are neutrally separated, have a fixed revealed partition across certainty-equivalent levels, and contain at least three revealed cells. These are background restrictions used by the behavioral identification argument rather than conclusions of the characterization, and they do not define smooth ambiguity itself. With one cell, ambiguity aggregation collapses to utility-level-dependent subjective expected utility and the attitude index is behaviorally irrelevant; with affine ambiguity attitude, only the predictive probability is identified and the partition is observationally redundant. The present fixed-level compensation construction uses a third cell as a balancing coordinate; with only two cells, additional structure would be required for the same identification argument. Finally, strict state monotonicity excludes null states, whose cell membership cannot be recovered by the transfer experiments. These cases delimit where the perception--attitude decomposition is behaviorally identified.

The doubly implicit representation contains the stable-attitude model when the level-specific attitude indices are positive-affinely equivalent. If perception is stable while the $\Phi_v$'s remain level specific, it instead yields a fixed-perception doubly implicit special case. Within this environment, standard identifiable smooth ambiguity is recovered only when perception is stable and the preference admits a representation whose level-specific attitude indices belong to one common positive-affine class.

\section{Behavioral identification and representation}\label{sec:behavioral}

This section first uses axioms A1--A11 to recover the local perception and compensation objects and then adds A12, global behavioral calibratability, to obtain the global doubly implicit baseline. It subsequently imposes local cross-level attitude consistency (A13), global attitude extension and transport (A14), and perception stability (A15) before collecting the resulting identification conclusions.

\subsection{Behavioral construction and axioms}

The conditions are organized sequentially, and the construction records each object at the stage where it becomes behaviorally recoverable. Unless stated otherwise, the conditions are evaluated relative to a fixed admissible elicitation construction, with the terminology ``recovered'' and ``behaviorally identified'' used as defined in \Cref{sec:finite-state-identifiability}. Not every later condition is a primitive comparison axiom: A11(ii), A12(iii), and A14 are extendability or transport conditions imposed on cardinal objects constructed from earlier preference comparisons and are formulated so as to be invariant to the admissible numerical normalizations. The finer distinction between local and global attitude stability, together with alternative orders of refinement, is deferred to \Cref{sec:alternative-branches}.

\subsubsection{Objective risk and certainty equivalents}

For a state $s$, lottery $\ell$, and background act $f$, write $(\ell_s,f_{-s})$ for the act that yields $\ell$ at $s$ and agrees with $f$ elsewhere.

\begin{axiom}[Basic preference and objective-risk structure]\label{ax:a1}
The preference $\succeq$ is complete and transitive and is continuous in the closed-graph sense: whenever $f^n\to f$, $g^n\to g$, and $f^n\succeq g^n$ for every $n$, then $f\succeq g$. Its restriction to constant acts is nontrivial and satisfies vNM independence: for lotteries $\ell,m,n$ and $\alpha\in(0,1)$, $\ell\succeq m$ iff $\alpha\ell+(1-\alpha)n\succeq\alpha m+(1-\alpha)n$.
\end{axiom}

By the vNM expected-utility theorem (see \citet{HersteinMilnor1953}), A1 yields an affine utility representation \(u\) on objective lotteries; with the normalization in \Cref{sec:framework}, this gives the canonical lotteries \(\ell_a\)

\begin{axiom}[Statewise substitution]\label{ax:a2}
If $\ell\sim m$ as constant acts, then $(\ell_s,f_{-s})\sim(m_s,f_{-s})$ for every $s$ and every background act $f$.
\end{axiom}

Statewise substitution makes acts depend only on their statewise objective-utility consequences; the next condition imposes the strict monotonicity needed to order those consequences.

\begin{axiom}[Strict state monotonicity]\label{ax:a3}
If $f(s)\succeq g(s)$ as constant lotteries for every $s$ and the comparison is strict for at least one state, then $f\succ g$.
\end{axiom}

Under A1–A3, every act \(f\) is indifferent to the canonical act \(f_{u(f)}\) and has a unique canonical certainty equivalent \(\ell_{V(f)}\). The resulting index \(V:\mathcal F\to[0,1]\) is continuous.

\subsubsection{Revealing ambiguity perception within cells}

For $v\in(0,1)$, distinct states $s,t$, and $r>0$, define $f_{\varepsilon,r}^{v;s,t}$ by giving $\ell_{v+\varepsilon}$ in state $s$, $\ell_{v-r\varepsilon}$ in state $t$, and $\ell_v$ elsewhere. Say that $r$ is a \emph{neutral transfer rate} for $(s,t)$ at $v$ if $f_{\varepsilon,r}^{v;s,t}\sim\ell_v$ for every sufficiently small feasible positive and negative $\varepsilon$. Write $s\sim_v^0t$ when $s=t$ or such a rate exists. A3 makes the rate unique whenever $s\neq t$ and it exists; set $r_{ss}^v:=1$.

\begin{axiom}[Stable neutral-link structure and rate consistency]\label{ax:a4}
For all $v,w\in(0,1)$, $s\sim_v^0t$ iff $s\sim_w^0t$, and the resulting common relation $\sim^0$ is an equivalence relation with at least three classes. Within every class, neutral rates satisfy
\[
r_{st}^v r_{tr}^v=r_{sr}^v
\]
whenever the displayed rates are defined.
\end{axiom}

Reciprocity, $r_{st}^v r_{ts}^v=1$, need not be imposed separately. If $r_{st}^v$ is neutral for $(s,t)$, reversing the same two-sided perturbation shows that $1/r_{st}^v$ is neutral for $(t,s)$; uniqueness under A3 therefore gives $r_{ts}^v=1/r_{st}^v$.

Write $\T=S/\!\sim^0$ for the revealed partition. For each cell $A$ choose a reference state $s_A\in A$ and define
\begin{equation}\label{eq:p-from-rates}
p_{v,A}(s)=\frac{r_{s s_A}^v}{\sum_{j\in A}r_{j s_A}^v}.
\end{equation}
The multiplicative rate identities make the recovered belief independent of the chosen reference state. Indeed, if $t_A\in A$ is another reference state, then A4 gives
\[
r_{s t_A}^v=r_{s s_A}^v r_{s_A t_A}^v
\qquad\text{for every }s\in A,
\]
so replacing $s_A$ by $t_A$ multiplies every numerator and the denominator in \eqref{eq:p-from-rates} by the same positive factor $r_{s_A t_A}^v$. Moreover,
\(
r_{st}^v r_{t s_A}^v=r_{s s_A}^v,
\)
and hence $r_{st}^v=p_{v,A}(s)/p_{v,A}(t)$. Thus A4 gives a reference-state-independent conditional belief, and local neutral transfers reveal the within-cell conditional beliefs before any assumption is made about behavior away from the local neighborhood of $v$.

The three-cell restriction is used for cross-cell elicitation because one cell can then absorb the net effect of changes in two others. It is an identification requirement rather than a primitive feature of smooth ambiguity; the two-cell case is discussed in \Cref{sec:finite-state-identifiability}.

For $z=(z_A)_{A\in\T}\in[0,1]^\T$, call $z$ a \emph{block vector} and refer to \(z_A\) as the \(A\)-coordinate of \(z\). Define its associated \emph{block act} $\widehat f_z$ by $\widehat f_z(s)=\ell_{z_A}$ for $s\in A$. A block vector is \emph{$v$-indifferent} if $\widehat f_z\sim\ell_v$.

For any cell $A\in\T$ and $a,b\in[0,1]$, write $[a,b]_A$ for the tagged coordinate change $b\to a$ in cell $A$. If $I=[a,b]_A$ and $J=[c,d]_B$ are changes in distinct cells and a block vector $z$ satisfies $z_A=b$ and $z_B=c$, let $z^{I,-J}$ denote the block vector obtained by replacing $z_A$ by $a$ and $z_B$ by $d$, leaving all other coordinates unchanged. A block-coordinate change is called \emph{feasible} when every resulting block coordinate remains in $[0,1]$.

\subsubsection{Local representation and direct-measurement neighborhoods}

The local representation will be stated on an open neighborhood $J_v$ of $v$. Direct compensation experiments need only be run on a smaller neighborhood $I_v$ whose prescribed coordinates can be supported by a $v$-indifferent block act while the remaining cell means stay inside $J_v$.

\begin{lemma}\label{lem:local-solvability}
Suppose A1--A4 hold. For every $v\in(0,1)$ and every open interval $J_v\subset(0,1)$ containing $v$, there is an open interval $I_v$ with $v\in I_v$ and $\overline I_v\subset J_v$ such that, for every distinct cells $A,B$ and every $b,c\in I_v$, there is a $v$-indifferent block vector $z\in J_v^{\T}$ with $z_A=b$ and $z_B=c$.
\end{lemma}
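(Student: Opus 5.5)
The plan is a continuity and intermediate-value argument that uses only the certainty-equivalent index $V$ from A1--A3 and the third cell supplied by A4. Fix $v\in(0,1)$ and an open interval $J_v\ni v$. Choose $\delta>0$ with $[v-\delta,v+\delta]\subset J_v$. By A1--A3, every act is indifferent to its canonical act, $V$ is continuous, and $V$ is strictly state-monotone in utility. Put $G(z):=V(\widehat f_z)$ on $[0,1]^{\T}$. This map is continuous, strictly increasing in each block coordinate (each cell is nonempty, so raising $z_A$ raises utility in at least one state), and satisfies $G(w\one)=w$ for constant block vectors. A block vector $z$ is $v$-indifferent exactly when $G(z)=v$.

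Since A4 gives at least three cells, for each ordered pair of distinct cells $A,B$ fix a third cell $C\notin\{A,B\}$ to act as the balancing coordinate. Set every coordinate other than $z_A,z_B,z_C$ equal to $v$, and consider
\[
h_{b,c}(y):=G\bigl(z^{b,c,y}\bigr),
\]
where $z^{b,c,y}$ has $A$-coordinate $b$, $B$-coordinate $c$, $C$-coordinate $y$, and $v$ elsewhere. The aim is to find $y\in[v-\delta,v+\delta]$ with $h_{b,c}(y)=v$. By monotonicity, $h_{b,c}$ is continuous and strictly increasing in $y$. We also have $h_{v,v}(v+\delta)>G(v\one)=v$ and $h_{v,v}(v-\delta)<v$.

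The two strict inequalities are preserved under small changes in $(b,c)$, because $(b,c)\mapsto h_{b,c}(v\pm\delta)$ is continuous. Hence there is $\eta_{A,B}>0$ such that for $|b-v|,|c-v|<\eta_{A,B}$ we still have $h_{b,c}(v-\delta)<v<h_{b,c}(v+\delta)$. The intermediate value theorem then gives $y\in(v-\delta,v+\delta)$ with $h_{b,c}(y)=v$. The resulting block vector has coordinates $b$, $c$, $y$, and $v$, all in $J_v$ provided $b,c\in J_v$.

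To finish, let $\eta:=\min\{\delta,\min_{A\neq B}\eta_{A,B}\}$. This is positive because $\T$ is finite. Take $I_v:=(v-\eta/2,\,v+\eta/2)$. Then $\overline I_v\subset(v-\delta,v+\delta)\subset J_v$, and the construction above applies uniformly to every pair of distinct cells.

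The main point needing care is the strict monotonicity of $G$ along block coordinates, which makes the endpoint inequalities strict. It follows from A3 through the equality $V(f)=V(f_{u(f)})$ established after A3. The rest is routine compactness and continuity.
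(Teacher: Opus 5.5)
Your proof is correct and follows essentially the same route as the paper: you fix a third balancing cell, use strict state monotonicity to obtain strict comparisons with $\ell_v$ at $v\pm\delta$ in that cell, preserve those comparisons for nearby $(b,c)$ by continuity, apply the intermediate value theorem in the third coordinate, and intersect over the finitely many cell pairs. The differences are cosmetic: you run the continuity step through the continuous index $V$ rather than through openness of the strict preference relation, and you secure the closure condition from the outset by taking $\eta\le\delta$.
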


The two neighborhoods play different roles in the elicitation argument. $J_v$ is the neighborhood on which the local smooth-ambiguity representation is required to describe choice. $I_v$ is the smaller neighborhood on which the cross-cell compensation comparisons used to measure weights and ambiguity attitude can always be implemented directly.

\begin{definition}\label{def:elicitation-construction}
An \emph{admissible elicitation construction} is a collection
\(
\bigl(\{(I_v,J_v)\}_{v\in(0,1)},\triangleleft\bigr)
\)
such that, for every $v$, $J_v\subset(0,1)$ and $I_v\subset(0,1)$ are open intervals with
\(
v\in I_v,
\;
\overline I_v\subset J_v,
\)
and, for every pair of distinct revealed cells $A,B$ and every $b,c\in I_v$, there exists a $v$-indifferent block vector $z\in J_v^{\T}$ satisfying $z_A=b$ and $z_B=c$. The relation $\triangleleft$ is a strict total order on the revealed cells used only to give a common orientation to compensation comparisons.
\end{definition}

By \Cref{lem:local-solvability}, admissible nested neighborhoods of this form exist. For the remainder of the behavioral construction fix one admissible elicitation construction. The tagged coordinate notation introduced above will be used on \(I_v\), on \(J_v\), and on the full domain \([0,1]\). All axioms from A5 onward are understood relative to this fixed choice until the characterization statements make the quantifiers explicit.

\begin{axiom}[Local within-cell mean invariance]\label{ax:a5}
Fix $v$. Suppose $f_x\sim\ell_v$, the vector of revealed cell means $m_v(x):=(m_{v,A}(x))_{A\in\T}$ belongs to $J_v^{\T}$, and $s,t$ lie in the same revealed cell. Then every feasible transfer
\(
f_{x+\varepsilon e_s-r_{st}^v\varepsilon e_t}
\)
is also indifferent to $\ell_v$.
\end{axiom}

A5 is deliberately local. It says that, within the neighborhood on which the local model is meant to apply, a transfer that leaves the revealed conditional mean of a cell unchanged does not change the comparison with $\ell_v$. It does not require the same invariance when the vector of cell means lies outside $J_v^{\T}$; A12(i) later imposes the corresponding invariance on the full domain and therefore subsumes A5.

\begin{lemma}\label{lem:within-cell-connect}
Fix a finite cell \(A\) and \(p\in\operatorname{int}\Delta(A)\). If \(x,y\in[0,1]^A\) satisfy \(p\cdot x=p\cdot y\), then there exist vectors
\(
x=x^0,x^1,\ldots,x^m=y
\)
such that, for each \(k=1,\ldots,m\), (i) \(x^k\) differs from \(x^{k-1}\) in at most two coordinates; (ii) \(p\cdot x^k=p\cdot x^{k-1}\); and (iii) for every \(s\in A\),
\(
x_s^k\in
\bigl[\min\{x_s,y_s\},\,\max\{x_s,y_s\}\bigr].
\)
Thus \(x\) can be transformed into \(y\) through finitely many two-state \(p\)-mean-preserving transfers, with each coordinate remaining between its value in \(x\) and its value in \(y\) throughout the sequence.

\end{lemma}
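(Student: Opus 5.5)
We argue by induction on the number of coordinates in which $x$ and $y$ differ. Put $D(x,y):=\{s\in A:x_s\neq y_s\}$. If $D$ is empty there is nothing to prove, with $m=0$. Otherwise $p\cdot(y-x)=0$ and $p\gg0$, so $D$ cannot consist of a single state. It must contain a state $s$ with $y_s>x_s$ and a state $t$ with $y_t<x_t$, because a nonzero vector $y-x$ with all nonzero coordinates of one sign would give $p\cdot(y-x)\neq0$ under full support.

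The key step is a greedy two-state transfer that closes at least one of these gaps exactly. Set
\[
\varepsilon:=\min\Bigl\{\,y_s-x_s,\ \tfrac{p(t)}{p(s)}\,(x_t-y_t)\Bigr\}>0 .
\]
Define $x^1$ by raising coordinate $s$ by $\varepsilon$ and lowering coordinate $t$ by $\tfrac{p(s)}{p(t)}\varepsilon$, leaving all other coordinates as in $x$. Then:
\begin{itemize}
\item $x^1$ differs from $x$ in at most two coordinates, so (i) holds;
\item $p\cdot x^1=p\cdot x$, since $p(s)\varepsilon-p(t)\tfrac{p(s)}{p(t)}\varepsilon=0$, so (ii) holds;
\item by the choice of $\varepsilon$, $x^1_s\in[x_s,y_s]$ and $x^1_t\in[y_t,x_t]$, with equality $x^1_s=y_s$ or $x^1_t=y_t$ (whichever term attains the minimum).
\end{itemize}
Hence $|D(x^1,y)|\le|D(x,y)|-1$. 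Since $p\cdot x^1=p\cdot y$, we can iterate from $x^1$ to $y$. The process terminates after at most $|A|-1$ steps, because the last remaining discrepancy cannot be a singleton.

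The betweenness property (iii) is the only point needing care, since the induction runs relative to the current vector rather than the original $x$. At every step each coordinate either stays fixed or moves monotonically toward $y_s$ without overshooting. A coordinate, once equal to $y_s$, is never touched again because it leaves $D$. So by induction each coordinate moves monotonically from $x_s$ toward $y_s$, and every intermediate value lies in the interval between $\min\{x_s,y_s\}$ and $\max\{x_s,y_s\}$. In particular all vectors stay in $[0,1]^A$, so each step is a feasible transfer.

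I expect no real obstacle. The only subtleties are using full support to guarantee both that a pair $(s,t)$ of opposite-sign discrepancies exists and that the rate $p(s)/p(t)$ is finite and positive, and checking that the monotone-toward-$y$ invariant yields (iii) relative to the original endpoints.
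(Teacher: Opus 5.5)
Your proposal is correct and is essentially the paper's own argument. Both take one coordinate below its target and one above, make a $p$-mean-preserving two-state transfer sized so that neither overshoots and at least one reaches its target exactly, and iterate. The differences are cosmetic: you parameterize the step by the raise $\varepsilon$ at $s$, while the paper uses the $p$-weighted mass $\lambda$. You also state explicitly the $|A|-1$ bound on the number of steps, and why the betweenness property relative to the current vector implies it relative to the original $x$.
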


The connectivity result lets the local invariance condition be propagated from a single mean-preserving transfer to all acts that share the same vector of within-cell conditional means. Let
\(
m_v(f):=\bigl(m_{v,A}(f)\bigr)_{A\in\mathcal T}
\).

\begin{lemma}\label{lem:local-mean-suff}
Let \(f\) and \(g\) be two acts satisfying
\(
m_v(f)=m_v(g)\in J_v^{\mathcal T}.
\)
Under A1--A5,  for every $R\in\{\succ,\sim,\prec\}$,
\(
f\,R\,\ell_v
\;\Longleftrightarrow\;
g\,R\,\ell_v
\).
\end{lemma}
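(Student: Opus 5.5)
We reduce to canonical acts and then move between utility vectors one cell at a time, invoking A5 at each two-state step. By A1--A3, every act \(f\) is indifferent to its canonical act \(f_{u(f)}\), and \(m_v(f)=m_v(u(f))\) by definition of the cell means. It therefore suffices to take \(x=u(f)\) and \(y=u(g)\) in \([0,1]^S\) with \(m_v(x)=m_v(y)\in J_v^{\T}\), and to show \(f_x\,R\,\ell_v\Leftrightarrow f_y\,R\,\ell_v\) for each \(R\). Since \(\succeq\) is complete and transitive, and the three relations \(\succ,\sim,\prec\) are exhaustive and mutually exclusive, it is enough to prove the single implication \(f_x\sim\ell_v\Rightarrow f_y\sim\ell_v\) together with a strict-case argument, which we give below.

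For the indifference case, order the cells \(A_1,\ldots,A_K\) and set \(x^{(0)}=x\). Let \(x^{(k)}\) agree with \(y\) on \(A_1\cup\cdots\cup A_k\) and with \(x\) elsewhere. Each \(x^{(k)}\) has the same vector of cell means \(m_v(x)\in J_v^{\T}\). Passing from \(x^{(k-1)}\) to \(x^{(k)}\) only changes coordinates in \(A_k\), and \(p_{v,A_k}\cdot x_{A_k}=p_{v,A_k}\cdot y_{A_k}\). By \Cref{lem:within-cell-connect} there is a finite chain inside \(A_k\) of two-coordinate, \(p_{v,A_k}\)-mean-preserving steps that keeps coordinates between their \(x\) and \(y\) values, hence inside \([0,1]\), so every step is feasible. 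A two-coordinate step changing states \(s,t\in A_k\) while preserving \(p_{v,A_k}(s)x_s+p_{v,A_k}(t)x_t\) has the form \(+\varepsilon e_s-\tfrac{p_{v,A_k}(s)}{p_{v,A_k}(t)}\varepsilon e_t\). By the rate identity derived from A4, \(r^v_{st}=p_{v,A}(s)/p_{v,A}(t)\), so this equals \(\varepsilon e_s-r^v_{st}\varepsilon e_t\). Every intermediate vector has cell-mean vector \(m_v(x)\in J_v^{\T}\). So whenever the current canonical act is indifferent to \(\ell_v\), A5 applies and the next one is also indifferent. Induction along the chain, and then over cells, gives \(f_y\sim\ell_v\); by symmetry the converse holds as well.

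The main obstacle is the strict case, because A5 is stated only for acts already indifferent to \(\ell_v\). We plan to handle it by contradiction using continuity and monotonicity. Suppose \(f_x\succ\ell_v\) but \(f_y\preceq\ell_v\). If \(f_y\sim\ell_v\), the indifference case gives \(f_x\sim\ell_v\), a contradiction. So assume \(f_y\prec\ell_v\). Consider the path \(x(\lambda)=(1-\lambda)x+\lambda y\). Every \(x(\lambda)\) has cell means \(m_v(x)\) because the means are linear, and it lies in \([0,1]^S\). The map \(\lambda\mapsto V(f_{x(\lambda)})\) is continuous, since \(V\) is continuous (A1--A3) and the canonical-act map is continuous. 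By the intermediate value theorem there is \(\lambda^*\) with \(f_{x(\lambda^*)}\sim\ell_v\). The indifference case applied to \(x(\lambda^*)\) and \(x\), which share cell means in \(J_v^{\T}\), then forces \(f_x\sim\ell_v\), a contradiction. The case \(\prec\) is symmetric. This strict-case step is the delicate one; the only care needed is to confirm that the endpoint continuity of \(V\) from A1--A3 applies and that the interpolated vectors stay feasible, and both hold by convexity.
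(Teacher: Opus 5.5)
Your proof is correct and uses the same ingredients as the paper's: canonicalization through A2, the two-state mean-preserving chains of \Cref{lem:within-cell-connect} applied cell by cell with A5 at each step, and continuity of $V$ with the intermediate value theorem. The difference is only in where continuity enters. The paper applies it on each two-state segment to show that the full ranking relative to $\ell_v$ is constant step by step. You instead propagate indifference along the whole chain and then apply the intermediate value theorem once on the straight segment from $x$ to $y$, which is legitimate because the cell-mean map is linear, so every point of that segment has the same cell-mean vector in $J_v^{\T}$.
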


The extension from A5’s indifference invariance to preservation of the full ranking relative to \(\ell_v\) along a feasible mean-preserving transfer uses continuity: if two points on such a segment had different signs relative to \(\ell_v\), continuity would yield an indifferent intermediate act, and A5 would then make the whole segment indifferent. The appendix then applies \Cref{lem:within-cell-connect} cell by cell.

\subsubsection{Direct cross-cell compensation measurement}

A tagged change $[a,b]_A$ with $a,b\in I_v$ is called a \emph{directly measurable change} at level $v$. For two such changes $I=[a,b]_A$ and $J=[c,d]_B$ in distinct cells, \Cref{lem:local-solvability} supplies a $v$-indifferent block vector $z\in J_v^{\T}$ with any prescribed pair of starting coordinates in $I_v$. The strict total order $\triangleleft$ is used only to orient the compensation comparison below; the changes themselves are not relabelled.

\begin{axiom}[Local background invariance of direct compensation]\label{ax:a6}
For every $v$, every $A\triangleleft B$, all $a,b,c,d\in I_v$, all supporting $v$-indifferent block vectors $z,\widetilde z\in J_v^{\T}$ with $z_A=\widetilde z_A=b$ and $z_B=\widetilde z_B=c$, and every $R\in\{\succ,\sim,\prec\}$,
\(
\widehat f_{z^{I,-J}}\,R\,\ell_v
\;\Longleftrightarrow\;
\widehat f_{\widetilde z^{I,-J}}\,R\,\ell_v,
\)
where $I=[a,b]_A$ and $J=[c,d]_B$.
\end{axiom}

The direct compensation relation is defined as follows. If $I=[a,b]_A$ and $J=[c,d]_B$ with $A\triangleleft B$, set
\[
I\succeq_v^{\mathrm{dir}}J
\quad\Longleftrightarrow\quad
\widehat f_{z^{I,-J}}\succeq\ell_v
\]
for any supporting background $z$; A6 makes the definition independent of that background. If instead $B\triangleleft A$, use any supporting $v$-indifferent background $z\in J_v^{\T}$ with $z_B=d$ and $z_A=a$ and set
\[
I\succeq_v^{\mathrm{dir}}J
\quad\Longleftrightarrow\quad
\widehat f_{z^{J,-I}}\preceq\ell_v.
\]
Again A6 makes the verdict independent of the supporting background. Thus in either cell ordering $I\succeq_v^{\mathrm{dir}}J$ means that the change $I$ is at least as valuable, in compensation terms, as $J$. Recall that the compensation experiment applies \(I\) together with the reverse of \(J\), so its net represented change is the value of \(I\) minus the value of \(J\). Accordingly, exact compensation is recorded as equality of the two tagged-change values. Let $\sim_v^{\mathrm{dir}}$ and $\succ_v^{\mathrm{dir}}$ denote the symmetric and asymmetric parts. From this point through A9 the axioms concern the compensation comparisons generated from choice rather than primitive act comparisons.

A \emph{finite direct chain} from $I$ to $J$ is a finite sequence $I=K_0,\ldots,K_m=J$ of directly measurable changes, with adjacent changes in distinct cells and $K_{r-1}\succeq_v^{\mathrm{dir}}K_r$. We allow $m=0$, so every directly measurable change is chained to itself. Write $I\succeq_v^{\mathrm{ch}}J$.

For a directly measurable change \(I=[a,b]_A\), define its \emph{endpoint record} \(\chi_I\) on cell--endpoint pairs \((B,x)\in\mathcal T\times I_v\) by

\[
\chi_I(B,x)
:=
\mathbf 1\{B=A,\ x=a\}
-
\mathbf 1\{B=A,\ x=b\}.
\]

Thus \(\chi_I\) records \(+1\) at the terminal endpoint \(a\) and \(-1\) at the initial endpoint \(b\) in cell \(A\), and is zero elsewhere.
For a finite list of changes \(I_1,\ldots,I_m\), define its aggregate endpoint record by
\(
\chi_{(I_1,\ldots,I_m)}
:=
\sum_{k=1}^m \chi_{I_k},
\)
where the sum is taken pointwise on \(\mathcal T\times I_v\).

\begin{axiom}[Finite cancellation]\label{ax:a7}
For every $v$, if
\(
\chi_{(I_1,\ldots,I_m)}
=
\chi_{(J_1,\ldots,J_m)}
\) and $I_k\succeq_v^{\mathrm{dir}}J_k$ for $k=1,\ldots,m$, then \(I_k\sim_v^{\mathrm{dir}}J_k\) for every \(k=1,\ldots,m\).
\end{axiom}

A \emph{decomposition} of a directly measurable change $[a,b]_A$ is specified by a finite sequence of subdivision points
\(
b=x_0,x_1,\ldots,x_m=a,
\)
all in $I_v$, that is monotone in the direction from $b$ to $a$: weakly increasing when $a\ge b$ and weakly decreasing when $a\le b$. Its pieces are
\[
I_k=[x_k,x_{k-1}]_A,
\qquad k=1,\ldots,m.
\]
Consecutive subdivision points may coincide; when \(x_k=x_{k-1}\), the corresponding piece \([x_k,x_{k-1}]_A=[x_k,x_k]_A\) is a zero change. Two decompositions are \emph{matched} when they contain the same number of pieces.

\begin{axiom}[Finite matched decomposition]\label{ax:a8}
For every $v$ and every two directly measurable changes $I,J$, there are matched decompositions $I=(I_1,\ldots,I_m)$ and $J=(J_1,\ldots,J_m)$ such that either $I_k\succeq_v^{\mathrm{ch}}J_k$ for every $k$ or $J_k\succeq_v^{\mathrm{ch}}I_k$ for every $k$.
\end{axiom}

Define $I\succeq_v^FJ$ when some matched decompositions satisfy $I_k\succeq_v^{\mathrm{ch}}J_k$ for every $k$. Let $I\sim_v^FJ$ denote the symmetric part and $I\succ_v^FJ$ the asymmetric part. Write $- [a,b]_A=[b,a]_A$.

The following lemma records the properties of the derived compensation order that will be used in the measurement argument.

\begin{lemma}\label{lem:derived-order}
Given the fixed admissible elicitation construction, under A6--A8, $\succeq_v^F$ is a weak order on directly measurable changes. Moreover:
\begin{enumerate}[label=(\roman*),leftmargin=2.1em]
\item every direct comparison is preserved: $I\succ_v^{\mathrm{dir}}J$ implies $I\succ_v^FJ$, and $I\sim_v^{\mathrm{dir}}J$ implies $I\sim_v^FJ$;
\item all zero changes $[a,a]_A$ are mutually indifferent; denote their common $\sim_v^F$-class by $0$;
\item reversal holds: $I\succeq_v^FJ$ iff $-J\succeq_v^F-I$;
\item concatenation holds: if $[a,b]_A\succeq_v^F[a',b']_B$ and $[b,c]_A\succeq_v^F[b',c']_B$, then $[a,c]_A\succeq_v^F[a',c']_B$, with a strict conclusion if at least one premise is strict.
\end{enumerate}
\end{lemma}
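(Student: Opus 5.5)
The proof rests on one \emph{cycle principle} derived from A7. Suppose \(I_1,\dots,I_m\) and \(J_1,\dots,J_m\) are directly measurable changes with \(I_k\succeq_v^{\mathrm{dir}}J_k\) for every \(k\), and suppose the two lists have the same aggregate endpoint record. Then A7 forces \(I_k\sim_v^{\mathrm{dir}}J_k\) for every \(k\). I will produce such lists from finite direct chains and matched decompositions using two bookkeeping facts.

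\begin{enumerate}[label=(\alph*),leftmargin=2.2em]
\item Along a chain \(K_0\succeq_v^{\mathrm{dir}}K_1\succeq_v^{\mathrm{dir}}\cdots\succeq_v^{\mathrm{dir}}K_m\), take the pairs \((K_{r-1},K_r)\). Each intermediate \(K_r\) appears once on the left and once on the right, so the difference between the left and right records is \(\chi_{K_0}-\chi_{K_m}\).
\item The records of the pieces of a decomposition of \([a,b]_A\) telescope to \(\chi_{[a,b]_A}\), because each interior subdivision point enters once with \(+1\) and once with \(-1\).
\end{enumerate}

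\textbf{Completeness and transitivity.} Completeness of \(\succeq_v^F\) is immediate from A8. For transitivity, suppose \(I\succeq_v^FJ\) and \(J\succeq_v^FK\) but not \(I\succeq_v^FK\). By A8, \(K\succeq_v^FI\) holds via some matched decompositions, and no matched decompositions give \(I\succeq_v^FK\). Unfold every \(\succeq_v^{\mathrm{ch}}\) in the three decomposition comparisons into its direct pairs, and pool all of them into one left list and one right list of equal length; pairs coming from length-zero chains contribute nothing. By (a) and (b), the difference between the left and right records is
\[
(\chi_I-\chi_J)+(\chi_J-\chi_K)+(\chi_K-\chi_I)=0.
\]
The cycle principle then makes every direct link an indifference. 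Each chain in the \(K\succeq_v^FI\) decomposition can then be read backwards, since \(\sim_v^{\mathrm{dir}}\) gives the reverse direct comparison. This yields \(I\succeq_v^FK\) via the same decompositions, a contradiction. Reflexivity follows from the trivial one-piece decomposition and a chain of length zero.

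\textbf{Part (i).} If \(I\succ_v^{\mathrm{dir}}J\), then \(I\succeq_v^FJ\) holds with the trivial decomposition and a chain of length one. If also \(J\succeq_v^FI\), the same pooling argument applied to the two-step cycle \(I\to J\to I\) gives \(I\sim_v^{\mathrm{dir}}J\), which contradicts strictness. If \(I\sim_v^{\mathrm{dir}}J\), single-link chains in both directions give \(I\sim_v^FJ\).

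\textbf{Part (ii).} For zero changes in distinct cells, \(z^{I,-J}=z\) whenever \(I=[a,a]_A\) and \(J=[c,c]_B\). The compensation act is therefore the \(v\)-indifferent background itself, so \(I\sim_v^{\mathrm{dir}}J\). Two zero changes in the same cell are linked through a zero change in a different cell, using at least three cells, and transitivity finishes the argument.

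\textbf{Part (iii).} For reversal, I first check at the direct level that \(I\succeq_v^{\mathrm{dir}}J\) iff \(-J\succeq_v^{\mathrm{dir}}-I\). Both comparisons are evaluated on the same block act \(z^{I,-J}\), with the orientation convention for \(\triangleleft\) handled in the two cases \(A\triangleleft B\) and \(B\triangleleft A\). A6 makes the choice of background irrelevant. This direct equivalence extends to chains by reversing the order of links. It extends to decompositions because reversing the subdivision points of a decomposition of \(I\) gives a decomposition of \(-I\) whose pieces are the negatives of the original pieces.

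\textbf{Part (iv).} For concatenation, I concatenate the matched decompositions of \([a,b]_A\) with those of \([b,c]_A\). The subdivision points stay monotone when \(a,b,c\) are ordered monotonically. The non-monotone case is handled by first cancelling the overlapping segment, using reversal together with the zero class. For the strict conclusion, suppose that also \([a',c']_B\succeq_v^F[a,c]_A\). Pooling all direct pairs into a cycle, the cycle principle forces every link to be an indifference. That would make both premises hold with \(\sim_v^F\), contradicting the assumed strictness.

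I expect the main obstacle to be the bookkeeping in the transitivity step. It has to be checked that pooling pieces and chain links yields two lists of \emph{equal length}, as A7 requires; unfolding each chain of length \(r\) into \(r\) direct pairs guarantees this. It also has to be checked that the terminal and initial endpoints cancel exactly as pointwise records on \(\mathcal T\times I_v\). This includes the degenerate case in which zero pieces appear in a decomposition.
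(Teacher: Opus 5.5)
Your transitivity argument and part (i) are the paper's A7 cycle argument, and your direct treatment of (ii) is a valid shortcut: a pair of zero changes leaves the $v$-indifferent background unchanged, so $I\sim_v^{\mathrm{dir}}J$. Two steps, however, fail as written.

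\textbf{Part (iii).} The claim that $I\succeq_v^{\mathrm{dir}}J$ and $-J\succeq_v^{\mathrm{dir}}-I$ are evaluated on the same block act is false. Take $I=[a,b]_A$, $J=[c,d]_B$ with $A\triangleleft B$.
\begin{itemize}
\item The comparison $I\succeq_v^{\mathrm{dir}}J$ starts from a $v$-indifferent background with $(z_A,z_B)=(b,c)$. It asks whether the act at $(a,d)$ is weakly above $\ell_v$.
\item For $-J=[d,c]_B$ against $-I=[b,a]_A$, the first-listed cell is $B$ and $A\triangleleft B$, so the second branch of the definition applies. It starts from a $v$-indifferent background with $(z_A,z_B)=(a,d)$ and asks whether the act at $(b,c)$ is weakly below $\ell_v$.
\end{itemize}
The case $B\triangleleft A$ is symmetric. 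These are the forward and reverse experiments on backgrounds with different prescribed coordinates. A6 only equates backgrounds with the same prescribed coordinates, so it cannot link them. Moreover, when the forward post-change act is strictly above $\ell_v$, it is not even an admissible background for the reverse experiment. The same block act governs $I\succeq_v^{\mathrm{dir}}J$ versus $J\succeq_v^{\mathrm{dir}}I$; that gives completeness, not reversal.

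The direct-level equivalence is nonetheless true, but it needs A7. Apply your cycle principle to the lists $(I,-I)$ and $(J,-J)$, whose aggregate records both vanish, and use completeness of cross-cell direct comparisons. Once this is done, your extension to chains and to decompositions works.

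\textbf{Part (iv).} Concatenating matched decompositions yields monotone subdivisions only when $b$ lies between $a$ and $c$ \emph{and} $b'$ lies between $a'$ and $c'$. You check only the first triple. The appeal to ``cancelling the overlapping segment using reversal together with the zero class'' does not cover the remaining cases.

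For instance, take $c<a<b$ and $a'<c'<b'$. Among the two premises, their reversals, and the negated conclusion, every available junction ($a/a'$, $b/b'$, $c/c'$) fails monotonicity on at least one side, so monotone concatenation never applies. Splitting off a segment would require interval solvability (A9), which is not among A6--A8.

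The fix is the argument you already use for strictness. The identity $\chi_{[a,b]_A}+\chi_{[b,c]_A}=\chi_{[a,c]_A}$ holds pointwise for every ordering of $a,b,c$. So if $[a',c']_B\succ_v^F[a,c]_A$, pooling the three comparisons gives an A7 cycle and every link becomes an indifference. Reversing the chains of the third comparison then yields $[a,c]_A\succeq_v^F[a',c']_B$, a contradiction.

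This is how the paper proves both (iii) and (iv): it pools derived comparisons into A7 cycles rather than manipulating decompositions or individual direct experiments.
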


For a cell $A$, a finite or infinite sequence $(x_i)_{i=1}^{N}$ in $I_v$, where $N\in\{2,3,\ldots\}\cup\{\infty\}$, is a \emph{standard sequence} if
\(
[x_{i+1},x_i]_A\sim_v^F[x_2,x_1]_A
\;\text{for every consecutive pair}
\; \text{with}\;
[x_2,x_1]_A\not\sim_v^F0.
\)
It is \emph{strictly bounded} if there are $u,w\in I_v$ such that
\(
[u,w]_A\succ_v^F[x_i,x_1]_A\succ_v^F[w,u]_A
\)
for every admissible finite index $i$ (that is, $1\le i\le N$ when $N<\infty$, and every $i\in\mathbb N$ when $N=\infty$).

\begin{axiom}[Compensation continuity, solvability, and boundedness]\label{ax:a9}
For every $v$ and cells $A,B$, the set
\(
\{(a,b,c,d)\in I_v^4:[a,b]_A\succeq_v^F[c,d]_B\}
\)
is closed relative to $I_v^4$. For every cell $A$, every $a,b\in I_v$, and every directly measurable change $I$ (in any cell), if $[a,b]_A\succeq_v^F I\succeq_v^F0$, there are $x,y\in I_v$ such that $[a,x]_A\sim_v^F I\sim_v^F[y,b]_A$. No infinite standard sequence is strictly bounded.
\end{axiom}

These order, continuity, solvability, and Archimedean properties put the derived compensation relation in the form needed for additive difference measurement.

\begin{lemma}\label{lem:additive}
Given the fixed admissible elicitation construction, under A3 and A6--A9 there are continuous strictly increasing functions $\psi_{v,A}:I_v\to\mathbb R$ such that
\begin{equation}\label{eq:local-additive}
[a,b]_A\succeq_v^F[c,d]_B
\quad\Longleftrightarrow\quad
\psi_{v,A}(a)-\psi_{v,A}(b)
\ge
\psi_{v,B}(c)-\psi_{v,B}(d).
\end{equation}
The family is unique up to cell-specific origins and one common positive unit across cells. We normalize $\psi_{v,A}(v)=0$.
\end{lemma}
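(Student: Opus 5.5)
We prove \Cref{lem:additive} by reducing it to the standard representation theorem for algebraic-difference (or "difference") structures, as in Krantz--Luce--Suppes--Tversky, Vol.~I, Chapter~4, applied on the union of the cell-tagged intervals. The natural domain is the disjoint union $D:=\bigsqcup_{A\in\T}I_v\times\{A\}$. A directly measurable change $[a,b]_A$ is identified with the ordered pair $((a,A),(b,A))$. Cross-cell comparisons are available only for pairs whose endpoints lie in a common cell. So the plan is, in order, to build an additive difference representation on each cell separately, to link the cells through cross-cell comparisons, and then to verify uniqueness.

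\textbf{Step 1: a difference representation on each cell.} Fix a cell $A$. By \Cref{lem:derived-order}, $\succeq_v^F$ restricted to changes in $A$ is a weak order. The lemma also supplies the zero class (ii), the reversal axiom (iii), and weak monotonicity/concatenation (iv), which are exactly the algebraic-difference axioms. A9 adds the remaining ingredients.
\begin{itemize}
\item Its solvability clause gives the solvability condition.
\item The exclusion of strictly bounded infinite standard sequences gives the Archimedean axiom.
\item Closedness gives continuity.
\end{itemize}
The KLST theorem then yields $\psi_{v,A}:I_v\to\R$, unique up to positive affine transformation, such that $[a,b]_A\succeq_v^F[c,d]_A$ if and only if $\psi_{v,A}(a)-\psi_{v,A}(b)\ge\psi_{v,A}(c)-\psi_{v,A}(d)$.

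Strict monotonicity of $\psi_{v,A}$ comes from A3. For $a>b$, raising the $A$-coordinate of a $v$-indifferent block act while $J$ is a zero change produces an act strictly preferred to $\ell_v$, so $[a,b]_A\succ_v^{\mathrm{dir}}0$. By \Cref{lem:derived-order}(i), this gives $[a,b]_A\succ_v^F0$. Continuity of $\psi_{v,A}$ follows from strict monotonicity together with solvability: the image of $I_v$ has no gaps, since any intermediate difference is attained. We then normalize $\psi_{v,A}(v)=0$.

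\textbf{Step 2: gluing across cells.} This is the main obstacle, because each $\psi_{v,A}$ carries its own unit, and a single common unit must be shown to represent cross-cell comparisons. Fix a reference cell $A_0$ and a nonzero change $K_0=[a_0,b_0]_{A_0}$ with $a_0>b_0$ close to $v$. For another cell $B$, I would use solvability in A9 to find a change in $B$ that is $\sim_v^F$ to $K_0$, possibly after shrinking $K_0$ so that it is dominated by some change available in $B$. Rescale $\psi_{v,B}$ so that its value on that matched change equals $\psi_{v,A_0}(a_0)-\psi_{v,A_0}(b_0)$.

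Next I would show that every cross-cell comparison is then represented by \eqref{eq:local-additive}. Take $[a,b]_A$ and $[c,d]_B$. By transitivity of the weak order $\succeq_v^F$, and since within-cell differences are already represented, it suffices to show that equal represented differences correspond to $\sim_v^F$. Concretely, suppose $[a,b]_A\sim_v^F[c,d]_B$. For any $n$, subdivide both changes into $n$ equal-$\psi$ pieces using solvability and the standard-sequence construction. Concatenation, \Cref{lem:derived-order}(iv), together with the weak-order property then forces the cross-cell ratio of units to be the same for all magnitudes: if two cell scales disagreed at some magnitude, concatenating standard sequences would contradict A7. Finite cancellation A7 is the key input here. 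It plays the role of the Reidemeister/Thomsen-type cancellation that rules out cross-cell inconsistencies in the units. I expect the technical care to lie in keeping all subdivision points inside $I_v$ and in handling signs via reversal.

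\textbf{Step 3: uniqueness.} For uniqueness, any other family satisfying \eqref{eq:local-additive} must, by the within-cell uniqueness from Step 1, be positive affine in each cell. A single cross-cell indifference between nonzero changes then forces the slopes to coincide across cells. This leaves cell-specific origins and one common positive unit, which the normalization $\psi_{v,A}(v)=0$ pins down except for the common unit.
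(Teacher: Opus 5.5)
Your proposal is correct and follows essentially the same route as the paper, which packages the argument as \Cref{lem:measurement-bridge}: KLST within each cell, scalar orientation from A3 via \Cref{lem:derived-order}(i), continuity from solvability, calibration of the cell units against a reference cell, and uniqueness from a single nonzero cross-cell indifference (for orientation, also mention the reversed case $B\triangleleft A$, where the direct experiment lowers the $A$-coordinate). The paper makes your Step~2 precise in three moves: it defines a matching map $h_A$ on a neighborhood of zero, where matches exist because the closedness in A9 makes the strict part open; it shows via concatenation that $h_A$ is locally additive and hence linear by the monotone Cauchy argument; and it then handles arbitrary cross-cell comparisons by subdividing into $m$ small matched pieces, so the operative input is concatenation from \Cref{lem:derived-order}(iv) rather than a direct appeal to A7.
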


The proof of \Cref{lem:additive} uses the algebraic-difference theorem of \citet{KLST1971} only through the auxiliary measurement result proved in the Appendix. Applied separately to the restriction of the derived order to each cell \(A\), that theorem yields a cell-specific function \(\varphi_A\), unique up to positive affine transformation, whose differences represent comparisons between changes within that cell:
\(
[a,b]_A\succeq_v^F[c,d]_A
\;\Longleftrightarrow\;
\varphi_A(a)-\varphi_A(b)
\ge
\varphi_A(c)-\varphi_A(d).
\)
Continuity and increasing orientation of these cell-specific scales, as well as their alignment onto the common cross-cell unit used to obtain the family \(\{\psi_{v,A}\}_A\), are established from the present assumptions in the proof of \Cref{lem:measurement-bridge}; no cross-cell cardinal comparison is imported from KLST.

\subsubsection{Local ambiguity-attitude measurement and local calibration}

For distinct cells $A,B$ and changes $I=[a,b]_A$, $J=[c,d]_B$ with endpoints in $J_v$, call the pair \emph{locally supportable} if there is a $v$-indifferent block vector $z\in J_v^{\T}$ with $z_A=b$ and $z_B=c$. Because all four endpoints lie in $J_v$, the changed vector $z^{I,-J}$ then also lies in $J_v^{\T}$. A locally supportable pair is \emph{locally exactly balanced} if $\widehat f_{z^{I,-J}}\sim\ell_v$ for some such supporting block vector. Under A11(i), the choice of supporting block vector is immaterial.

The next lemma links the measured derived order back to the primitive compensation experiment; it will be used repeatedly in the balancing arguments below.

\begin{lemma}\label{lem:derived-exact-balance}
Given the fixed admissible elicitation construction, under A6--A8, if $I$ and $J$ are directly measurable changes in distinct cells and $I\sim_v^F J$, then the pair $(I,J)$ is locally exactly balanced at level $v$.
\end{lemma}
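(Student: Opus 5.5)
The plan is to reduce the derived indifference $I\sim_v^F J$ to a direct indifference $I\sim_v^{\mathrm{dir}}J$. Once that holds, the definition of $\succeq_v^{\mathrm{dir}}$ gives the conclusion at once. Write $I=[a,b]_A$ and $J=[c,d]_B$ with $A\neq B$ and all endpoints in $I_v\subset J_v$. Consider first the case $A\triangleleft B$. By \Cref{lem:local-solvability}, or equivalently by admissibility of the construction, there is a $v$-indifferent $z\in J_v^{\T}$ with $z_A=b$ and $z_B=c$, so the pair is locally supportable. If $I\sim_v^{\mathrm{dir}}J$, then both $\widehat f_{z^{I,-J}}\succeq\ell_v$ and $\widehat f_{z^{I,-J}}\preceq\ell_v$ hold, so $\widehat f_{z^{I,-J}}\sim\ell_v$. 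The case $B\triangleleft A$ is symmetric: we use a background with $z_B=d$ and $z_A=a$, and the pair $(J,I)$ reversed. Exact balance of $(I,J)$ then follows by relabelling $z^{J,-I}$ as the balanced configuration for the reversed changes. If needed, \Cref{lem:derived-order}(iii) converts between $I\sim J$ and $-J\sim -I$ so that the supported starting coordinates match.

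The key step is to show that $I\sim_v^F J$ forces $I\sim_v^{\mathrm{dir}}J$. Since the direct experiment always returns one of $\succ,\sim,\prec$ relative to $\ell_v$ (completeness, A1), the direct relation between $I$ and $J$ is total on pairs in distinct cells. So either $I\succeq_v^{\mathrm{dir}}J$ or $J\succeq_v^{\mathrm{dir}}I$ holds. Suppose the comparison were strict, say $I\succ_v^{\mathrm{dir}}J$. Then \Cref{lem:derived-order}(i) gives $I\succ_v^F J$, which contradicts $I\sim_v^F J$ because $\succeq_v^F$ is a weak order. The reverse strict case is handled the same way. Hence $I\sim_v^{\mathrm{dir}}J$.

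The main obstacle is making sure the totality and orientation of the direct relation are handled correctly when $B\triangleleft A$. In that case the defining experiment uses the reversed configuration and the inequality $\preceq$, so one must check that "not $I\succeq^{\mathrm{dir}}J$" really yields $J\succ^{\mathrm{dir}}I$ as defined. The way to do this is to unwind the two definitions. For $B\triangleleft A$, the statements $I\succeq^{\mathrm{dir}}J$ and $J\succeq^{\mathrm{dir}}I$ both refer to the same act $\widehat f_{z^{J,-I}}$, compared with $\ell_v$ in opposite directions. A6 ensures the verdict does not depend on the chosen background, so trichotomy of $\widehat f_{z^{J,-I}}$ against $\ell_v$ gives exactly the required trichotomy for the direct relation.
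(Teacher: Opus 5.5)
Your proposal is correct and is essentially the paper's proof. Completeness makes the direct relation total on distinct-cell pairs, because both orientations read off the same post-change act, so any strict direct verdict would contradict $I\sim_v^F J$ via \Cref{lem:derived-order}(i). In the case $B\triangleleft A$, the terminal vector $z^{J,-I}$, whose coordinates are $(b,c)$, is itself the $v$-indifferent supporting background, and its $(I,-J)$-image is the original indifferent $z$.

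Your alternative via reversal also works, since local exact balance of $(-J,-I)$ is literally the same condition as local exact balance of $(I,J)$.
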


We next require the derived difference order to be consistent across cells, so that utility differences can be measured in common units.

\begin{axiom}[Cell-unit consistency]\label{ax:a10}
For every $v$, all cells $A,B$, and all $a,b,c,d\in I_v$,
\[
[a,b]_A\succeq_v^F[c,d]_A
\quad\Longleftrightarrow\quad
[a,b]_B\succeq_v^F[c,d]_B.
\]
\end{axiom}

A10 is stated entirely in terms of the behaviorally constructed difference order. Its measurement consequence is recorded next and is then used to formulate the local calibration requirements.

\begin{lemma}\label{lem:common-scale}
Given the fixed admissible elicitation construction, under A3 and A6--A10, for every \(v\), the scales \(\{\psi_{v,A}\}_{A\in\mathcal T}\) from \(\Cref{lem:additive}\) can be normalized so that there exist a continuous strictly increasing \(\phi_v:I_v\to\mathbb R\) and numbers \(\beta_{v,A}>0\) satisfying
\(\psi_{v,A}=\beta_{v,A}\phi_v \)
for every \(A\).
\end{lemma}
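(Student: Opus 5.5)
The plan is to use A10 to show that the within-cell restrictions of the derived order coincide across cells, and then to appeal to the uniqueness clause of \Cref{lem:additive}.

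Fix $v$ and a reference cell $A_0\in\T$. By \Cref{lem:additive}, for each cell $A$ the function $\psi_{v,A}$ is continuous and strictly increasing on $I_v$. It represents the within-cell restriction of $\succeq_v^F$ by differences:
\[
[a,b]_A\succeq_v^F[c,d]_A
\;\Longleftrightarrow\;
\psi_{v,A}(a)-\psi_{v,A}(b)\ge\psi_{v,A}(c)-\psi_{v,A}(d).
\]
A10 says that, as a relation on quadruples $(a,b,c,d)\in I_v^4$, this restriction is the same for every cell. Hence $\psi_{v,A}$ and $\psi_{v,A_0}$ are two continuous strictly increasing difference representations of one algebraic-difference structure on $I_v$.

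By the uniqueness part of the algebraic-difference theorem (the KLST result invoked in the proof of \Cref{lem:additive}, applied to a single cell), such representations are unique up to positive affine transformation. So there are $\alpha_A\in\R$ and $\beta_{v,A}>0$ with $\psi_{v,A}=\alpha_A+\beta_{v,A}\psi_{v,A_0}$. The normalization $\psi_{v,A}(v)=0=\psi_{v,A_0}(v)$ forces $\alpha_A=0$. Set $\phi_v:=\psi_{v,A_0}$, which is continuous and strictly increasing. Then $\psi_{v,A}=\beta_{v,A}\phi_v$, with $\beta_{v,A_0}=1$. The common-unit freedom in \Cref{lem:additive} only rescales all $\beta_{v,A}$ together, and it does not affect the conclusion.

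The one step that needs care is checking that the within-cell uniqueness is actually available on $I_v$. A9 supplies the solvability and Archimedean properties there. Strict monotonicity and continuity make the structure nondegenerate: $I_v$ is an interval and $\psi_{v,A}$ is nonconstant. So the KLST uniqueness applies cell by cell.

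If I preferred not to re-invoke KLST, I would argue directly. Define $h:=\psi_{v,A}\circ\psi_{v,A_0}^{-1}$ on the interval $\psi_{v,A_0}(I_v)$. By A10 it preserves the ordering of differences, so equal differences map to equal differences. In particular $h$ maps midpoints to midpoints: take $b$ with $[a,b]\sim[b,c]$ in both cells, which exists by solvability. A continuous midpoint-affine function on an interval is affine. This again gives $\psi_{v,A}=\alpha_A+\beta_{v,A}\phi_v$ with $\beta_{v,A}>0$, since $h$ is strictly increasing.
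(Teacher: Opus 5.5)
Your proof is correct and has the same skeleton as the paper's. A10 makes the transition map $\psi_{v,A}\circ\psi_{v,A_0}^{-1}$ preserve equalities of differences, that map is affine, and the normalization at $v$ removes the intercept.

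The two arguments differ only in how affinity is established. The paper proves it by hand: it notes that $F_A(x+t)-F_A(x)$ depends only on $t$, solves the resulting local Cauchy equation using continuity, and then extends the slope to the whole interval by subdivision.

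Your primary route reads A10 as saying that $\psi_{v,A}$ and $\psi_{v,A_0}$ represent one and the same within-cell algebraic-difference structure, and then invokes the KLST uniqueness clause. This is legitimate, because within-cell uniqueness is already part of what the paper imports from KLST in the proof of \Cref{lem:measurement-bridge}. It makes the lemma an immediate corollary, at the cost of leaning on the external theorem once more.

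Your fallback Jensen argument is self-contained and slightly shorter than the paper's: continuity plus midpoint-affinity gives affinity in one step, with no local-to-global subdivision. One small point there: the midpoint $b$ comes directly from the intermediate value theorem applied to the continuous $\psi_{v,A_0}$. You simply choose $b$ with $\psi_{v,A_0}(b)$ equal to the numerical midpoint, so you do not need the A9 solvability clause for that step.
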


By the common-unit uniqueness in \Cref{lem:additive}, the ratios $\beta_{v,A}/\beta_{v,B}$ are uniquely determined by the derived compensation order. Define the recovered normalized cell weights by
\[
q_v(A):=\frac{\beta_{v,A}}{\sum_{C\in\T}\beta_{v,C}}.
\]
Thus $q_v$ is behaviorally recovered relative to the fixed elicitation construction.

For later numerical and small-difference arguments, fix for each $v$ any nondegenerate compact interval
\[
K_v=[k_v^-,k_v^+]\subset I_v,
\qquad
v\in\operatorname{int}K_v.
\]
The core $K_v$ is an auxiliary numerical device, not part of the admissible elicitation construction, and its choice imposes no behavioral restriction. It is used only for small-difference arguments. Because $v\in\operatorname{int}K_v$ and $\phi_v$ is continuous and strictly increasing, the set of differences $\{\phi_v(a)-\phi_v(b):a,b\in K_v\}$ contains a neighborhood of zero. Thus \(K_v\) supplies the arbitrarily small difference units needed in the subsequent subdivision arguments. Extension of the level-\(v\) ambiguity-attitude scale \(\phi_v\) beyond the directly measured domain \(I_v\) is not obtained from \(K_v\); it is governed separately by the behavioral calibration requirements in A11 and A12.

Let $\mathcal E_v(\phi_v;I_v,J_v)$ be the set of continuous strictly increasing functions $\Theta:J_v\to\mathbb R$ satisfying $\Theta=\phi_v$ on $I_v$. For $\Theta\in\mathcal E_v(\phi_v;I_v,J_v)$, write $\Delta_v^\Theta(a,b)=\Theta(a)-\Theta(b)$.

\begin{axiom}[Local calibration consistency]\label{ax:a11}
For every \(v\), the following requirements hold relative to the level-$v$ ambiguity-attitude scale \(\phi_v\) recovered under A10.
\begin{enumerate}[label=(\roman*),leftmargin=2.1em]
\item \emph{Local background calibration.} For every pair of distinct cells $A,B$, every $a,b,c,d\in J_v$, all supporting $v$-indifferent block vectors $z,\widetilde z\in J_v^{\T}$ with $z_A=\widetilde z_A=b$ and $z_B=\widetilde z_B=c$, and every $R\in\{\succ,\sim,\prec\}$,
\[
\widehat f_{z^{I,-J}}\,R\,\ell_v
\quad\Longleftrightarrow\quad
\widehat f_{\widetilde z^{I,-J}}\,R\,\ell_v,
\]
where $I=[a,b]_A$ and $J=[c,d]_B$.
\item \emph{Local compatible calibration.} There exists $\Theta_v\in\mathcal E_v(\phi_v;I_v,J_v)$ such that, whenever $I=[a,b]_A$ and $J=[c,d]_B$, with $a,b,c,d\in J_v$, are locally exactly balanced and a locally supportable replacement pair $I'=[a',b']_A$ \text{and} $J'=[c',d']_B$, with $ a',b',c',d'\in J_v$, satisfies
\[
\Delta_v^{\Theta_v}(a',b')=\Delta_v^{\Theta_v}(a,b),
\qquad
\Delta_v^{\Theta_v}(c',d')=\Delta_v^{\Theta_v}(c,d),
\]
then $(I',J')$ is also locally exactly balanced.
\end{enumerate}
\end{axiom}

A11(i) deliberately subsumes A6 on directly measurable changes. A6 is retained separately because it is needed earlier to define the direct compensation relation and construct the local measurement system before A11 is imposed. A11(ii) is the substantive behavioral extendability requirement: the selected $\Theta_v$ must extend the entire directly measured scale on $I_v$ and transport local exact compensation throughout $J_v$, beyond the directly measured region.

\begin{definition}\label{def:locally-valid}
Fix $v$ and the normalization $\phi_v(v)=0$. A function \(\Theta_v:J_v\to\mathbb R\) is a \emph{locally valid ambiguity-attitude index} if it extends the recovered ambiguity-attitude scale \(\phi_v\), that is, \(\Theta_v\in\mathcal E_v(\phi_v;I_v,J_v)\), and satisfies the replacement requirement in A11(ii). Let \(\mathcal E_v^{L}\) denote the set of locally valid ambiguity-attitude indices. A11(ii) requires $\mathcal E_v^{L}\neq\varnothing$.
\end{definition}

Before the normalization at $v$ is imposed, the definition is independent of the positive-affine representative of the directly measured attitude class: if $\widetilde\phi_v=\alpha+\lambda\phi_v$ with $\lambda>0$, then $\Theta_v\in\mathcal E_v^{L}$ if and only if $\alpha+\lambda\Theta_v$ is locally valid relative to $\widetilde\phi_v$. After normalizing the representative to satisfy $\phi_v(v)=0$, the remaining freedom is multiplication by a positive constant. Hence A11(ii) restricts the positive-affine attitude class rather than an arbitrary choice of origin or units.

Thus a locally valid index is not an arbitrary continuation of the measured attitude scale: after normalization it agrees with the directly measured scale on all of $I_v$ and its differences correctly transport local exact compensation throughout the larger local representation neighborhood $J_v$. For $a,b,c,d\in I_v$, define the common same-cell difference order by
\[
[a,b]\succeq_v^\Delta[c,d]
\quad\Longleftrightarrow\quad
[a,b]_A\succeq_v^F[c,d]_A
\]
for any revealed cell $A$. By A10, the right-hand side is independent of the choice of $A$. The separation between A10 and A11 makes the logical sequence explicit: A10 recovers the common local difference scale, while A11 tests whether that recovered scale continues to organize choice on the larger local domain.

\begin{definition}\label{def:local-di}
Given the local representation neighborhoods \(\{J_v\}_{v\in(0,1)}\) from a fixed admissible elicitation construction, a preference admits a \emph{regular neutrally separated local doubly implicit representation} if there exist a normalized nonconstant vNM utility index $u$ on $L$, a continuous certainty-equivalent index $V:\F\to[0,1]$, a fixed partition $\T$ with at least three cells, full-support conditional beliefs $p_{v,A}\in\operatorname{int}\Delta(A)$, full-support cell weights $q_v\in\operatorname{int}\Delta(\T)$, and continuous strictly increasing indices $\Theta_v:J_v\to\mathbb R$ such that:
\begin{enumerate}[label=(\roman*),leftmargin=2.2em]
\item $f\succeq g$ iff $V(f)\ge V(g)$, $V(\ell)=u(\ell)$ for every constant lottery $\ell$, and acts with the same statewise $u$-values have the same $V$-value;
\item $V$ is strictly state-monotone in statewise vNM utility;
\item whenever $(m_{v,A}(f))_{A\in\T}\in J_v^{\T}$,
\begin{equation}\label{eq:local-di}
V(f)\gtreqless v
\quad\Longleftrightarrow\quad
\sum_{A\in\T}q_v(A)\Theta_v\!\left(m_{v,A}(f)\right)
\gtreqless \Theta_v(v),
\end{equation}
with the same relation on both sides;
\item \emph{local neutral separation}: for every $v$, distinct cells $A,B$, and states $s\in A$, $t\in B$, there do not exist $r>0$ and $\eta>0$ such that, for every sufficiently small feasible $|\varepsilon|<\eta$ for which the two perturbed cell means remain in $J_v$,
\[
q_v(A)\!\left[\Theta_v\!\left(v+p_{v,A}(s)\varepsilon\right)-\Theta_v(v)\right]
+
q_v(B)\!\left[\Theta_v\!\left(v-p_{v,B}(t)r\varepsilon\right)-\Theta_v(v)\right]
=0.
\]
\end{enumerate}
\end{definition}

The preceding measurement and calibration ingredients can now be collected into the local characterization.

\begin{proposition}\label{prop:local-baseline}
Relative to a fixed admissible elicitation construction, A1--A11 hold if and only if the preference admits a regular neutrally separated local doubly implicit representation in the sense of \Cref{def:local-di}.
\end{proposition}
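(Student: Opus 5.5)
For sufficiency (A1--A11 $\Rightarrow$ representation), I would assemble the objects in the order the construction already produces them. A1--A3 give the normalized vNM index $u$, the fact that every act is indifferent to its canonical act $f_{u(f)}$, and the continuous certainty-equivalent index $V$. Together these deliver (i) and (ii) of \Cref{def:local-di}. A4 gives the fixed partition $\T=S/\!\sim^0$ with at least three cells and the reference-state-independent beliefs $p_{v,A}$ from \eqref{eq:p-from-rates}. Full support of $p_{v,A}$ holds because neutral rates are positive. \Cref{lem:additive} and \Cref{lem:common-scale} give $\psi_{v,A}=\beta_{v,A}\phi_v$ on $I_v$ and hence the full-support weights $q_v$. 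A11(ii) supplies some $\Theta_v\in\mathcal E_v^{L}$; I normalize it with $\Theta_v(v)=0$ and take it as the local index.

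The core step is (iii). By \Cref{lem:local-mean-suff}, whenever $m_v(f)\in J_v^\T$ the comparison of $f$ with $\ell_v$ depends only on $m_v(f)$. It therefore equals the comparison of the block act $\widehat f_{m_v(f)}$ with $\ell_v$. Define $G_v(z)=\sum_A q_v(A)\Theta_v(z_A)$ on $J_v^\T$.

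The plan is to show that $\widehat f_z\sim\ell_v$ holds iff $G_v(z)=0$.
\begin{itemize}
\item On $I_v$, \Cref{lem:derived-exact-balance} together with the additive representation shows that two changes are exactly balanced precisely when their $\beta_{v,A}\phi_v$-differences agree. This is where $q_v$ enters.
\item A11(i) removes dependence on the background.
\item A11(ii) transports exact balance to all pairs with endpoints in $J_v$ that have matching $\Theta_v$-differences.
\end{itemize}
Starting from the diagonal $z=v\one$, any $z\in J_v^\T$ with $G_v(z)=0$ is reached by finitely many two-cell compensating moves, each preserving $G_v=0$. The third cell is used as a balancing coordinate, exactly as in \Cref{lem:local-solvability}. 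Each move is locally exactly balanced, and transitivity keeps the block act indifferent to $\ell_v$.

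Conversely, suppose $\widehat f_z\sim\ell_v$ but $G_v(z)\neq0$. Adjusting one coordinate to $z'$ so that $G_v(z')=0$ gives $\widehat f_{z'}\sim\ell_v$, and A3 then yields a contradiction. Strict state monotonicity and continuity turn this indifference characterization into the three-way statement: on each side of the level set $G_v=0$, the sign of $G_v$ matches the strict preference.

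Local neutral separation (iv) is proved by contradiction. If some $r,\eta$ made the displayed expression vanish, then by (iii) the perturbation $f^{v;s,t}_{\varepsilon,r}$ would be indifferent to $\ell_v$ for all small $\varepsilon$ of both signs. Then $s\sim^0_v t$ across distinct cells, contradicting the construction of $\T$.

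For necessity, given the representation, I would verify each axiom directly.
\begin{itemize}
\item A1--A3 follow from (i)--(ii) and continuity of $V$.
\item For A4, within a cell the rate $p_{v,A}(s)/p_{v,A}(t)$ is neutral. Across cells no rate exists by (iv). So $\sim^0$ equals the partition at every $v$, and the rates are multiplicative.
\item A5 and A6 hold because (iii) depends only on cell means.
\item The derived order is represented by $q_v(A)\Theta_v$-differences, and from this A7--A10 follow. Finite cancellation follows from summing the differences. Decomposition, continuity, solvability, and the Archimedean property follow from continuity and strict monotonicity of $\Theta_v$ on the bounded interval $I_v$. Cell-unit consistency holds because every cell uses the same $\Theta_v$.
\item A11 holds with $\Theta_v$ itself, after checking that the recovered $\phi_v$ is a positive multiple of $\Theta_v|_{I_v}$ by the uniqueness in \Cref{lem:additive}.
\end{itemize}

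The main obstacle is the sufficiency step showing that the zero level set of $G_v$ on $J_v^\T$ coincides with the indifference set. Points outside $I_v^\T$ must be reached through balanced moves whose supporting block vectors remain in $J_v^\T$. I would handle this with a small-step path argument: subdivide along a continuous path within $\{G_v=0\}$, using the third cell to absorb each two-cell move. A11(ii) then guarantees that each step is balanced.
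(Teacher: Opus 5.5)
Your overall architecture matches the paper's. You recover $u$, $V$, $\T$ and $p_{v,A}$ from A1--A4, and $\phi_v$, $\beta_{v,A}$ and $q_v$ from A6--A10. You take $\Theta_v$ from A11(ii), prove a block-act threshold formula on $J_v^{\T}$, and transfer it to arbitrary acts by \Cref{lem:local-mean-suff}. Neutral separation follows because a cross-cell neutral rate would contradict the revealed partition, and necessity is checked axiom by axiom.

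The gap is the core step, which you correctly flag but do not close. A11(ii) only transports exact balance \emph{from} a pair already known to be balanced. The only anchors you have are pairs of core changes in $K_v\subset I_v$, which are $\sim_v^F$ and hence exactly balanced by \Cref{lem:derived-exact-balance}. This imposes two requirements:
\begin{enumerate}[label=(\alph*),leftmargin=2.2em]
\item every elementary move must have $\Theta_v$-increments small enough to be realized inside $K_v$;
\item every intermediate block vector must be $v$-indifferent and lie in $J_v^{\T}$, because it is the supporting background on which A11(i) delivers the verdict. This, not transitivity, is what propagates indifference.
\end{enumerate}
The device you name, a third cell absorbing each two-cell move ``as in \Cref{lem:local-solvability},'' does not secure (b). A fixed absorbing coordinate accumulates all compensations and can leave $J_v$, or even $[0,1]$. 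You also never construct the continuous path in $\{G_v=0\}\cap J_v^{\T}$ along which you would take small steps.

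The ingredient you are missing is \Cref{lem:balancing-schedule}. Applied to the targets $\Theta_v(z_A)-\Theta_v(v)$, it pairs one positive with one negative residual at each step. Every coordinate then moves monotonically from $v$ to $z_A$, so all intermediate vectors stay in the box spanned by $v\one$ and $z$, hence in $J_v^{\T}$ since $J_v$ is an interval. Each scheduled move is then cut into $N$ pieces with increments $\Delta_A/N$ and $\Delta_B/N$ in the $K_v$-difference range. No third cell is needed, because the current indifferent vector is itself the support. Your route could instead be repaired by noting that in $\Theta_v$-coordinates the zero set inside the box is convex, so a straight path with a positive margin exists; but that argument has to be supplied.

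Second, your argument that $\widehat f_z\sim\ell_v$ forces $G_v(z)=0$ by adjusting \emph{one} coordinate fails in general. A single coordinate confined to $J_v$ moves $G_v$ by at most $q_v(A)$ times the $\Theta_v$-oscillation over $J_v$, which need not reach zero when several coordinates of $z$ sit near one end of $J_v$. For $z'\notin J_v^{\T}$ you have no indifference result to invoke. The paper handles the case $G_v(z)>0$ as follows:
\begin{enumerate}[label=(\alph*),leftmargin=2.2em]
\item pick $c\in J_v$ below $v$ and below every coordinate of $z$, so that the segment from $c\one$ to $z$ lies in $J_v^{\T}$;
\item use continuity of $G_v$ along that segment to find $z^0\le z$, strictly below in some coordinate, with $G_v(z^0)=0$;
\item apply the zero case and A3 to get $\widehat f_z\succ\ell_v$.
\end{enumerate}
The case $G_v(z)<0$ is symmetric. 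This gives the three-way formula and, with it, the converse you wanted.

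A smaller point on necessity: for A8, two changes in the same cell are never directly comparable. ``Decomposition follows from continuity and monotonicity'' must therefore be replaced by the paper's construction: subdivide both changes into pieces of equal represented value and link each matched pair through a small change in another cell.
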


\paragraph{Proof sketch.}
A1--A4 recover the objective-risk scale, certainty equivalents, revealed cells, and within-cell conditional beliefs. A5 makes the conditional mean sufficient inside $J_v^{\T}$. A6--A9 measure cross-cell compensation on \(I_v\), while A10 decomposes the resulting cell-specific difference scales into a level-\(v\) ambiguity-attitude scale and cell-specific multipliers, whose normalization yields the cell weights \(q_v\). The auxiliary core $K_v\subset I_v$ supplies the small directly measured differences used in the balancing argument; A11(i) imposes background-independent compensation throughout $J_v$ and therefore subsumes A6, while A11(ii) asks whether the full measured scale on $I_v$ has a behaviorally compatible extension throughout $J_v$. Finite pairwise balancing then gives \eqref{eq:local-di}. Local neutral separation follows from the revealed-cell construction. If, for states \(s\) and \(t\) in distinct revealed cells, there existed a rate \(r>0\) for which the numerical equality in the local-neutral-separation clause of \Cref{def:local-di} held for every sufficiently small feasible positive and negative \(\varepsilon\), then the local threshold representation would make the corresponding two-state transfers indifferent to \(\ell_v\). This would imply \(s\sim_v^0 t\), contradicting their membership in distinct revealed cells. Conversely, a regular neutrally separated local DI representation delivers exactly these local restrictions. The appendix gives the two directions in detail. \hfill$\square$

\paragraph{Identification scope.}
The local characterization is relative to the fixed admissible elicitation construction. A11 guarantees at least one behaviorally valid continuation of the measured scale across $J_v$, but it does not by itself establish uniqueness of that continuation where local compensation choices are silent. Construction-independent identification of perception is established after globalization; identification of one common cross-level attitude class requires the later stability restrictions.

\subsection{Global doubly implicit baseline}\label{sec:gbc-globalization}

We now use the tagged-change notation on the full coordinate domain. A tagged change $[a,b]_A$ with $a,b\in[0,1]$ is called a \emph{global tagged change}.

Consider two global tagged changes in distinct cells,
\(
I=[a,b]_A
\;\text{and}\;
J=[c,d]_B.
\)
As in the local construction, the compensation experiment applies \(I\) together with the reverse of \(J\): the \(A\)-coordinate changes from \(b\) to \(a\), while the \(B\)-coordinate changes from \(c\) to \(d\). The required initial coordinates are therefore \(b\) in cell \(A\) and \(c\) in cell \(B\). Define the set of \(v\)-indifferent backgrounds supporting these initial coordinates by

$$
\mathcal Z_v^{A,B}(b,c)
:=
\left\{
z\in[0,1]^{\T}:
z_A=b,\ z_B=c,\ \widehat f_z\sim\ell_v
\right\}.
$$

The transformed block vector $z^{I,-J}$ is the one defined above: it replaces the two prescribed starting coordinates $b,c$ by $a,d$, respectively. For the remainder of this subsection maintain A1--A11.

For global changes \(I=[a,b]_A\) and \(J=[c,d]_B\), call the pair \emph{balance-feasible at level \(v\)} if
$
\mathcal Z_v^{A,B}(b,c)\neq\varnothing.
$
Balance-feasibility requires that the prescribed starting coordinates \(b\) and \(c\) can be jointly embedded in a \(v\)-indifferent block vector. A balance-feasible pair \((I,J)\) is \emph{globally exactly balanced at level \(v\)} if there exists
$
z\in\mathcal Z_v^{A,B}(b,c)
$
such that
$
\widehat f_{z^{I,-J}}\sim\ell_v.
$

\begin{axiom}[Global behavioral calibratability]\label{ax:a12}
For every $v\in(0,1)$ the following full-domain requirements hold.
\begin{enumerate}[label=(\roman*),leftmargin=2.1em]
\item \emph{Global within-cell mean invariance.} If $f_x\sim\ell_v$ and $s,t$ lie in the same revealed cell, then every feasible transfer
\(
f_{x+\varepsilon e_s-r_{st}^v\varepsilon e_t}
\)
is also indifferent to $\ell_v$.
\item \emph{Global background calibration.} For every ordered pair of distinct cells $A,B$, every $a,b,c,d\in[0,1]$, with $I=[a,b]_A$ and $J=[c,d]_B$, all $z,\widetilde z\in\mathcal Z_v^{A,B}(b,c)$, and every $R\in\{\succ,\sim,\prec\}$,
\[
\widehat f_{z^{I,-J}}\,R\,\ell_v
\quad\Longleftrightarrow\quad
\widehat f_{\widetilde z^{I,-J}}\,R\,\ell_v.
\]
\item \emph{Global compatible calibration.} There exists a continuous strictly increasing function $\Phi_v:[0,1]\to\mathbb R$ satisfying $\Phi_v=\phi_v$ on $I_v$ such that equal $\Phi_v$-differences preserve global exact compensation throughout the full domain. More precisely, if $I=[a,b]_A$ and $J=[c,d]_B$ are globally exactly balanced and replacements $I'=[a',b']_A$, $J'=[c',d']_B$ satisfy
\[
\Phi_v(a')-\Phi_v(b')=\Phi_v(a)-\Phi_v(b),
\qquad
\Phi_v(c')-\Phi_v(d')=\Phi_v(c)-\Phi_v(d),
\]
then $(I',J')$ is globally exactly balanced whenever it is balance-feasible at level $v$. One-sided replacement is included by taking $I'=I$ or $J'=J$.
\end{enumerate}
\end{axiom}

A12 is intentionally stronger than the preceding local conditions: A12(i) subsumes A5, A12(ii) subsumes A11(i), and A12(ii)--(iii) jointly subsume A11(ii). These earlier restrictions are retained because they characterize the weaker local model before globalization; the axiom sequence is cumulative rather than logically minimal.

\begin{theorem}\label{thm:global-di-baseline}
Relative to a fixed admissible elicitation construction, the following are equivalent:
\begin{enumerate}[label=(\roman*),leftmargin=2.2em]
\item A1--A12 hold.
\item The preference admits a neutrally separated regular doubly implicit identifiable smooth-ambiguity representation in the sense of \Cref{def:diisa}, with a fixed partition containing at least three cells. Equivalently, there are global level-specific indices $\Phi_v:[0,1]\to\R$ such that, for every $f\in\F$ and $v\in(0,1)$,
\begin{equation}\label{eq:global-di}
V(f)\gtreqless v
\quad\Longleftrightarrow\quad
\sum_{A\in\T}q_v(A)\,\Phi_v\!\left(m_{v,A}(f)\right)
\gtreqless \Phi_v(v),
\end{equation}
with the same relation on both sides.
\end{enumerate}

\end{theorem}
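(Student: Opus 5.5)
The proof will establish the two directions separately, taking \Cref{prop:local-baseline} as given for the local structure.

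\emph{Necessity (ii)$\Rightarrow$(i).} Given a neutrally separated regular DI representation, restricting \eqref{eq:global-di} to the neighborhoods $J_v$ gives a local DI representation with $\Theta_v=\Phi_v|_{J_v}$. Hence \Cref{prop:local-baseline} delivers A1--A11. For A12 I would argue directly.
\begin{itemize}[leftmargin=2.2em]
\item \emph{A12(i).} Any feasible transfer with rate $r^v_{st}=p_{v,A}(s)/p_{v,A}(t)$ leaves $m_{v,A}$ unchanged. By \eqref{eq:global-di}, the comparison with $\ell_v$ is therefore unchanged.
\item \emph{A12(ii).} For $z\in\mathcal Z_v^{A,B}(b,c)$, \eqref{eq:global-di} gives $\sum_C q_v(C)\Phi_v(z_C)=\Phi_v(v)$. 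The verdict on $\widehat f_{z^{I,-J}}$ is then the sign of $q_v(A)[\Phi_v(a)-\Phi_v(b)]-q_v(B)[\Phi_v(c)-\Phi_v(d)]$, which does not depend on $z$.
\item \emph{A12(iii).} The same sign formula, with $\Phi_v$ itself as the calibrating index, shows that equal $\Phi_v$-differences preserve exact balance whenever the replacement pair is balance-feasible.
\end{itemize}
One point remains: $\Phi_v$ must agree with the recovered $\phi_v$ on $I_v$. This follows from the uniqueness in \Cref{lem:additive} and \Cref{lem:common-scale}, after a positive affine renormalization of $\Phi_v$, which leaves \eqref{eq:global-di} invariant.

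\emph{Sufficiency (i)$\Rightarrow$(ii).} Fix $v$ and take $u$, $V$, $\T$, $p_{v,A}$ and $q_v$ from the local construction. Let $\Phi_v$ be the index supplied by A12(iii). The argument then proceeds in three steps.
\begin{enumerate}[label=(\arabic*),leftmargin=2.2em]
\item \emph{Reduction to block vectors.} Using A12(i) together with \Cref{lem:within-cell-connect}, and continuity as in \Cref{lem:local-mean-suff} but now without the $J_v$ restriction, every act $f$ compares to $\ell_v$ exactly as the block act $\widehat f_{m_v(f)}$ does. So it suffices to prove, for all $z\in[0,1]^{\T}$, that $\widehat f_z\gtreqless\ell_v$ holds iff $G_v(z):=\sum_C q_v(C)\Phi_v(z_C)\gtreqless\Phi_v(v)$.
\item \emph{Level set of indifference.} Show that the level set $\{z:\widehat f_z\sim\ell_v\}$ coincides with $\{z:G_v(z)=\Phi_v(v)\}$. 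Start at the diagonal block vector $(v,\dots,v)$ and move toward a target $z$ through a finite chain of two-cell exchanges. At each step, pick a third cell as the balancing coordinate (this is where $|\T|\ge3$ is used) and keep the vector $v$-indifferent throughout. Two facts drive this. Locally, \Cref{prop:local-baseline} together with the agreement $\Phi_v=\phi_v$ on $I_v$ shows that exact balance near $v$ is characterized by $q_v(A)\Delta\Phi_v=q_v(B)\Delta\Phi_v$. A12(ii)--(iii) then transport this equal-$\Phi_v$-difference characterization of exact balance to the full domain, for every balance-feasible pair.
\item \emph{Strict comparisons.} Given the equality case, strict state monotonicity (A3) orients the rest: increasing any coordinate of a $v$-indifferent vector gives strict preference and strictly increases $G_v$. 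Continuity and connectedness of the regions then fix the sign on each side of the level set, which yields \eqref{eq:global-di}.
\end{enumerate}
Neutral separation follows exactly as in the proof sketch of \Cref{prop:local-baseline}: an exact two-sided compensation across distinct cells would imply $s\sim^0_v t$, contradicting their membership in distinct revealed cells.

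\emph{Main obstacle.} I expect the hardest part to be step (2): reaching block vectors far from the diagonal, for example $z$ with some coordinates near $0$ or $1$, through chains of balance-feasible exchanges that keep every intermediate vector $v$-indifferent. A12(iii) only transports balance among balance-feasible pairs, so the path must be built carefully. The first approach to try is induction on the number of cells differing from $v$. Within a single exchange, intermediate values would be supplied by continuity (A1) together with the intermediate value theorem on $\Phi_v$. The third cell serves as a slack variable, adjusted to restore indifference whenever a coordinate is pushed toward a boundary.
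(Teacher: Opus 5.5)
Your necessity direction, and the overall shape of your sufficiency argument, match the paper. That shape is: reduce to block acts via A12(i) and \Cref{lem:within-cell-connect}, build the $v$-indifference set from $v\one$ by pairwise moves, then get the strict cases from A3 and continuity. The gap is step (2), which you leave open as the ``main obstacle.'' Your claim that A12(ii)--(iii) ``transport this characterization to the full domain, for every balance-feasible pair'' is exactly what must be proved; in the paper it is \Cref{lem:di-global-balance-suff}. It does not follow directly from A12(iii). That clause only transports exact balance from a pair already known to be globally exactly balanced \emph{with the same two $\Phi_v$-differences}. The only pairs you know to be balanced are directly measured ones near $v$, and their differences are small. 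Note also that ``near $v$'' must mean endpoints in $I_v$, since A11's $\Theta_v$ and A12's $\Phi_v$ need only agree there. The missing idea is subdivision. Split a balance-feasible pair with $q_v(A)\Delta_A=q_v(B)\Delta_B$ into $m$ consecutive pieces with $\Phi_v$-increments $\Delta_A/m$ and $\Delta_B/m$, small enough to lie in the difference range of the core $K_v\subset I_v$. Each piece then has a core twin whose measured weighted values are equal, because $\Phi_v=\phi_v$ on $I_v$ and $q_v\propto\beta_{v,\cdot}$. That twin is $\sim_v^F$-indifferent, hence exactly balanced by \Cref{lem:derived-exact-balance}, and A12(iii) carries the balance to the piece.

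The balance-feasibility worry that motivates your third-cell slack variable and induction on the number of cells is unfounded. Generate the path by applying \Cref{lem:balancing-schedule} to the targets $\Phi_v(z_A)-\Phi_v(v)$ and then subdividing as above. At each piece, the current block vector is $v$-indifferent by induction and has exactly the starting coordinates the next piece requires. It is therefore itself an element of $\mathcal Z_v^{A,B}$, so balance-feasibility is automatic, and A12(ii) makes the verdict independent of background. Coordinates near $0$ or $1$ cause no difficulty, because every coordinate stays between $v$ and its target $z_A$. No third balancing cell is needed here; the three-cell requirement has already done its work in the local construction (\Cref{lem:local-solvability}). Your slack-variable scheme, as sketched, would move a third coordinate away from its target, with no argument for termination or feasibility. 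With these two ingredients step (2) closes, and your step (3) is fine. Its cleanest form is the paper's: if $G_v(z)>\Phi_v(v)$, take $z^0\le z$ on the segment from $0\one$ to $z$ with $G_v(z^0)=\Phi_v(v)$, strict in some coordinate, and apply A3.
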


Under the equivalent conditions of \Cref{thm:global-di-baseline}, any global DI representation supplied by the theorem yields, for every act $f$ with $V(f)\in(0,1)$, the doubly implicit certainty-equivalent equation
\begin{equation}\label{eq:global-di-functional}
V(f)
=
\Phi_{V(f)}^{-1}\!\left(
\sum_{A\in\T}q_{V(f)}(A)\,
\Phi_{V(f)}\!\left(
\sum_{s\in A}p_{V(f),A}(s)u(f(s))
\right)
\right).
\end{equation}
Indeed, this follows by setting $v=V(f)$ in \eqref{eq:global-di}. The equation is implicit because the attitude index, conditional beliefs, and cell weights on the right-hand side all depend on the endogenous certainty-equivalent level $V(f)$.

The behaviorally recovered $u$, $\T$, $p_{v,A}$, and $q_v$ enter this global representation directly. A12(iii) requires at least one global index $\Phi_v$ that agrees with the entire measured scale on $I_v$ and whose differences remain coherent with global exact-compensation choices throughout the full domain. Neither A11 nor A12 by itself establishes uniqueness of this behaviorally valid continuation where compensation choices do not distinguish among continuations. Thus \eqref{eq:global-di-functional} is an admissible doubly implicit certainty-equivalent representation, not a claim that the particular global continuation $\Phi_v$ is itself uniquely identified.

\paragraph{Proof sketch.}
A1--A11 give the local DI representation. A12(iii) supplies, for each level $v$, a continuous strictly increasing global continuation $\Phi_v:[0,1]\to\mathbb R$ of the measured scale on $I_v$ and requires equal $\Phi_v$-differences to preserve exact compensation throughout the full domain. A12(ii) makes the resulting global compensation verdict independent of the particular supporting $v$-indifferent background. Together, these two clauses allow any balance-feasible pair of global changes with cancelling weighted $\Phi_v$-differences to be reduced to sufficiently small calibrated changes inside the directly measured region and then transported back to the original pair.

Applying the finite balancing argument to such pairwise moves yields the global threshold representation for all block acts. A12(i) then globalizes within-cell mean sufficiency: any arbitrary act has the same comparison with $\ell_v$ as the block act with the same revealed conditional means, so the block-act representation extends to the full act domain. Finally, since $\Phi_v$ agrees with the directly measured scale on $I_v$, the local neutral-separation property obtained under A1--A11 carries over to the global representation. Conversely, a regular neutrally separated global DI representation immediately implies A1--A11 and the three full-domain clauses of A12. \hfill$\square$

The theorem supplies the global baseline for the cross-level stability refinements. At this stage ambiguity perception \((p_{v,A},q_v)\) may depend on the certainty-equivalent level, and the representation uses a level-specific global index \(\Phi_v\). The latter need not be uniquely determined by choice outside the regions on which compensation behavior disciplines its continuation. Consequently, differences between particular admissible global continuations \(\Phi_v\) and \(\Phi_w\) do not by themselves establish that ambiguity attitude varies behaviorally across levels.

\subsection{Successive stability refinements}\label{sec:postbaseline}

\subsubsection{Stabilizing ambiguity attitude}\label{sec:attitude-stability}

The global doubly implicit baseline leaves open whether ambiguity attitude genuinely varies across certainty-equivalent levels. Although the representation permits a level-specific global index \(\Phi_v\) at each \(v\), such indices may reflect nonunique continuations of the directly measured attitude scales rather than behaviorally meaningful differences in attitude. Attitude stability must therefore be tested using the attitude comparisons directly revealed by choice rather than by comparing arbitrary global continuations \(\Phi_v\).

The first restriction (A13) asks whether the locally revealed attitude-difference orders are compatible across certainty-equivalent levels. If they are, the corresponding directly measured scales can be aligned into one common positive-affine attitude class; the subsequent restriction (A14) asks whether this common class extends to the boundary and governs compensation globally.

\begin{axiom}[Local cross-level attitude consistency]\label{ax:a13}
For every $v,w\in(0,1)$, if $a,b,c,d\in I_v\cap I_w$, then
\[
[a,b]\succeq_v^\Delta[c,d]
\quad\Longleftrightarrow\quad
[a,b]\succeq_w^\Delta[c,d].
\]
\end{axiom}

A13 contains no global attitude function. Its implication for the directly measured scales is recorded next.

\begin{lemma}[Cross-level scale alignment]\label{lem:cross-level-scale-alignment}
Suppose A1--A10 hold. If A13 holds, the directly measured scales $\{\phi_v:I_v\to\mathbb R\}_v$ can be positively affinely normalized so that they agree on overlaps. They therefore define one continuous strictly increasing function $\phi:(0,1)\to\mathbb R$, unique up to one common positive affine transformation, whose restriction to each $I_v$ represents the directly measured objective-difference order.
\end{lemma}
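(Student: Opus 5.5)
The plan is to work level by level with the within-level uniqueness of the measured scales, and then glue across levels using A13 and connectedness of $(0,1)$.

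\emph{Step 1: within-level uniqueness.} Fix $v$. By \Cref{lem:additive} and \Cref{lem:common-scale}, $\phi_v:I_v\to\mathbb R$ is continuous and strictly increasing. Its differences represent the same-cell difference order $\succeq_v^\Delta$ on $I_v$, which is well defined by A10. I will show that any continuous strictly increasing $\chi:I_v\to\mathbb R$ whose differences also represent $\succeq_v^\Delta$ lies in $[\phi_v]$. This follows from the uniqueness part of the algebraic-difference theorem already used in \Cref{lem:additive}, applied within one cell. Concretely, standard sequences built from small differences (supplied by $K_v$ and the solvability clause of A9) are equally spaced under both scales, and continuity then extends the affine relation from these dense grids to all of $I_v$.

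\emph{Step 2: pairwise alignment on overlaps.} Take $v,w$ with $I_v\cap I_w\neq\varnothing$. The overlap is an open interval $O$, being the intersection of two open intervals. By A13, $\phi_v|_O$ and $\phi_w|_O$ represent the same difference order on $O$. The restricted order still satisfies the needed solvability and Archimedean structure, since $O$ is an open interval and both scales are continuous and strictly increasing on it, so their differences fill neighborhoods of zero. Hence the Step 1 argument on $O$ gives $\phi_w=\alpha+\lambda\phi_v$ on $O$ with $\lambda>0$. Since $\phi_w$ is unique only up to positive affine transformation on $I_w$, I can renormalize $\phi_w$ on all of $I_w$ so that it agrees with $\phi_v$ on $O$.

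\emph{Step 3: global gluing.} This is the main obstacle, because renormalizations must be made consistently for all levels simultaneously. The ingredient that makes this work is rigidity. If two renormalized scales agree on a nondegenerate interval, the affine map relating them on any larger overlap is already pinned down by that interval, since a positive affine map is determined by its values at two points.

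I would proceed as follows.
\begin{enumerate}
\item Fix a base level $v_0$. For each $w$, choose a finite chain $v_0=v^0,v^1,\ldots,v^k=w$ with consecutive $I$'s overlapping. Such chains exist because the intervals $I_v$ cover $(0,1)$ and $(0,1)$ is connected: the set of points reachable by chains is open, and so is its complement.
\item Transport the normalization along the chain, as in Step 2.
\item Show that the result is chain-independent. Take two chains ending at $w$, and let $\phi_w'$ and $\phi_w''$ be the resulting normalizations, so $\phi_w''=\alpha+\lambda\phi_w'$. I need to show $\alpha=0$ and $\lambda=1$.
\end{enumerate}
To prove chain independence, I would argue on the union $U_t=\bigcup_{s\le t}$ of the chain intervals. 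Inductively, the glued function on $U_t$ is a single well-defined function, because each new interval overlaps the union on an interval and rigidity fixes it there. The two chains then give two functions on interval unions, $U'$ and $U''$, both containing $I_{v_0}$ and $I_w$. The union of two intervals is an interval whenever they overlap, so $U'$ and $U''$ are intervals. Each glued function represents the difference order locally everywhere on its domain, so by Step 1 combined with a connectedness argument they differ by one affine map on $U'\cap U''$. That map is the identity on $I_{v_0}$, hence the identity everywhere, which forces $\phi_w'=\phi_w''$ on $I_w$.

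The normalized scales then agree on every overlap $I_v\cap I_w$. I would check this with the same rigidity argument, comparing the normalization at $v$ with the one obtained by extending $v$'s chain through $w$. Defining $\phi(x):=\phi_v(x)$ for any $v$ with $x\in I_v$ gives a well-defined function on $(0,1)$. It is continuous and strictly increasing, because locally it coincides with some $\phi_v$ and a function that is locally strictly increasing on an interval is strictly increasing. By construction, its restriction to each $I_v$ represents $\succeq_v^\Delta$.

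\emph{Step 4: uniqueness.} Suppose $\tilde\phi$ is another function with the same property. By Step 1, $\tilde\phi=\alpha_v+\lambda_v\phi$ on each $I_v$. The coefficients $(\alpha_v,\lambda_v)$ are locally constant in $v$, because overlaps are nondegenerate intervals and rigidity applies. Since $(0,1)$ is connected, they are globally constant, so $\tilde\phi$ lies in the single common class $[\phi]$.
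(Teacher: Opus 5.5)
Your proof is correct and has the same overall structure as the paper's. A13 together with within-level uniqueness gives a unique positive-affine transition on each nonempty overlap, and connectedness of $(0,1)$ gives finite overlap chains; after that come path independence, gluing, and propagation of uniqueness.

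The two arguments differ at the step you identify as the crux. The paper proves path independence combinatorially, as a cocycle condition on the transition maps. A closed overlap chain of intervals of length at least four has a chord, so it reduces to triangles. Three pairwise-overlapping open intervals share a common open subinterval, and there the direct and composed transitions act on the same nonconstant function and must coincide.

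You instead glue along each chain into a single function on an interval. You then compare two glued functions on the interval $U'\cap U''$: functions that locally represent the same A13-consistent difference order have locally constant affine coefficients, and connectedness turns these into one global affine relation.

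What your route buys:
\begin{itemize}
\item It avoids the chordality fact for interval overlap graphs and the bookkeeping for closed walks that revisit intervals.
\item The same tool proves the uniqueness clause against an arbitrary competitor $\widetilde\phi$, which the paper only sketches by remarking that changing the base normalization moves every scale by one affine map.
\item Your Step~2 justifies affine uniqueness on the proper subinterval $I_v\cap I_w$. The paper takes this directly from the uniqueness of the difference measures on all of $I_v$.
\end{itemize}
The paper's route is shorter and works purely with the transition maps.

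One point of wording. In your inductive gluing step, agreement of the renormalized $\phi_{v^{t+1}}$ with the glued function on all of $U_t\cap I_{v^{t+1}}$ does not follow from two-point rigidity alone, because that set can be strictly larger than $I_{v^t}\cap I_{v^{t+1}}$. It needs the same local-affine-plus-connectedness argument you use to compare the two chains. You already have that argument, so this is not a gap.
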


Thus A13 links the level-specific directly measured attitude scales into one common positive-affine class. Fix any representative \(\phi\) of this recovered class and define

$$
\delta^\circ(a,b):=\phi(a)-\phi(b),
\qquad a,b\in(0,1).
$$

The domain is initially \((0,1)^2\) because A13 aligns the directly measured attitude scales only across interior certainty-equivalent levels. It does not by itself ensure that the corresponding attitude differences extend continuously and finitely to comparisons involving \(0\) or \(1\). A14(i) therefore imposes continuous real-valued extension of this interior difference scale to the boundary, while A14(ii) uses the resulting global difference scale to transport exact compensation on the full domain.

These requirements depend only on the positive-scale equivalence class of \(\delta^\circ\). Indeed, replacing \(\phi\) by \(\alpha+\lambda\phi\), with \(\lambda>0\), multiplies every value of \(\delta^\circ\) by the same positive constant. Hence neither the existence of a continuous boundary extension nor the difference equalities used in A14 depend on the particular normalization of \(\phi\).

\begin{axiom}[Global attitude extension and transport]\label{ax:a14}
\begin{enumerate}[label=(\roman*),leftmargin=2.1em]
\item \emph{Boundary regularity}: the recovered common interior difference scale $\delta^\circ$ admits a continuous real-valued extension $\delta:[0,1]^2\to\R$ with $\delta=\delta^\circ$ on $(0,1)^2$.
\item \emph{Common-scale difference transport}: for every $v$, every pair of distinct cells, and every globally exactly balanced pair $I=[a,b]_A$, $J=[c,d]_B$, if replacements $I'=[a',b']_A$, $J'=[c',d']_B$ satisfy
\(
\delta(a',b')=\delta(a,b)
\;\text{and}\;
\delta(c',d')=\delta(c,d),
\)
then $(I',J')$ is globally exactly balanced whenever it is balance-feasible at level $v$. One-sided replacement is the special case $I'=I$ or $J'=J$.
\end{enumerate}
\end{axiom}

A13 is the construction-relative cross-level consistency restriction stated in revealed-preference terms; A14 imposes boundary regularity and global transport on the common cardinal object that A13 has already recovered. In particular, A14(i) is a substantive boundary-extendability condition rather than an assertion that an arbitrary numerical representation exists.
Any extension in A14(i) is necessarily unique. Moreover, continuity extends the interior difference identity
\(\delta(a,c)=\delta(a,b)+\delta(b,c)\)
to all $a,b,c\in[0,1]$.

\begin{theorem}\label{thm:main}
Fix an admissible elicitation construction and suppose the global baseline axioms A1--A12 hold relative to it. The following are equivalent:
\begin{enumerate}[label=(\roman*),leftmargin=2.2em]
\item A13 and A14 hold.
\item The preference admits a neutrally separated regular implicit
identifiable smooth-ambiguity representation with stable attitude in the
sense of \Cref{def:iisa}, with a fixed partition containing at least three
cells. Equivalently, there is one continuous strictly increasing
$\phi:[0,1]\to\R$ such that, for every $f\in\F$ and $v\in(0,1)$,
\begin{equation}
V(f)\gtreqless v
\quad\Longleftrightarrow\quad
\sum_{A\in\T}q_v(A)\,
\phi\!\left(m_{v,A}(f)\right)
\gtreqless \phi(v),
\end{equation}
with the same relation on both sides.
\end{enumerate}

\end{theorem}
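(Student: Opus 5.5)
The plan is to prove the two directions separately, taking \Cref{thm:global-di-baseline} as the base. Under A1--A12 we may fix a global DI representation with level-specific continuations $\Phi_v$; the task is to show that A13--A14 are exactly what allow all these $\Phi_v$ to be replaced by a single $\phi$.

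\emph{(i)$\Rightarrow$(ii).} By \Cref{lem:cross-level-scale-alignment}, A13 produces a continuous strictly increasing $\phi:(0,1)\to\R$, unique up to one common positive affine map. Its restriction to each $I_v$ lies in $[\phi_v]$. After renormalising the level-$v$ scale we may assume $\phi=\phi_v$ on $I_v$, so that $\phi(x)-\phi(v)$ and $\phi_v$ agree on $I_v$ up to a level-specific positive factor, which is irrelevant for compensation comparisons.

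A14(i) gives the unique continuous extension $\delta$ on $[0,1]^2$. Fix an interior anchor $x_0$ and set $\phi(x):=\delta(x,x_0)$ on $[0,1]$. The additivity of $\delta$ noted after A14 shows that $\phi$ is continuous and real-valued, satisfies $\delta(a,b)=\phi(a)-\phi(b)$, and is strictly increasing on $[0,1]$. Strictness at the endpoints follows from continuity together with strict monotonicity on the interior.

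The key step is to show that, for each fixed $v$, the function $\phi$ is itself an admissible choice for the global index in A12(iii). It agrees with the measured scale on $I_v$, up to the harmless normalisation above, and A14(ii) says that equal $\phi$-differences preserve global exact compensation. One then reruns the globalisation argument of \Cref{thm:global-di-baseline} with $\Phi_v$ replaced by $\phi$. That argument reduces a balance-feasible pair with cancelling weighted differences to small calibrated changes inside $I_v$, balances them there, and transports back. This yields, for all block acts,
\[
V(\widehat f_z)\gtreqless v\iff \sum_A q_v(A)\phi(z_A)\gtreqless\phi(v).
\]
A12(i) extends this from block acts to all acts through the conditional means $m_{v,A}$.

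The weights $q_v$ are unchanged because they come from the ratios $\beta_{v,A}/\beta_{v,B}$, which are invariant to the affine renormalisation. Neutral separation transfers as in the baseline: any exact proportional compensation in the $\phi$-representation would make the two-state transfers indifferent, so $s\sim^0 t$, contradicting distinct revealed cells. Setting $\Phi_v:=\phi$ for all $v$ then gives \Cref{def:iisa}.

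\emph{(ii)$\Rightarrow$(i).} Given a stable-attitude representation, the measurement uniqueness in \Cref{lem:additive} and \Cref{lem:common-scale} forces $\phi_v\in[\phi|_{I_v}]$. The reason is that the derived order $\succeq_v^\Delta$ on $I_v$ is represented by $\phi$-differences, which one checks using \Cref{lem:derived-exact-balance} and the threshold representation. Hence $\succeq_v^\Delta$ and $\succeq_w^\Delta$ are both represented by $\phi$-differences on $I_v\cap I_w$, which is A13.

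By the uniqueness in \Cref{lem:cross-level-scale-alignment}, the recovered $\delta^\circ$ is a positive multiple of $\phi(a)-\phi(b)$ on $(0,1)^2$. Since $\phi$ is continuous and finite on $[0,1]$, this gives the extension required in A14(i). For A14(ii), global exact balance at level $v$ of a balance-feasible pair $(I,J)$ is equivalent, via the representation applied to $z$ and $z^{I,-J}$ with $\widehat f_z\sim\ell_v$, to
\[
q_v(A)\,\delta(a,b)=q_v(B)\,\delta(c,d).
\]
This condition depends only on the two $\delta$-differences, so replacements with equal differences stay balanced whenever they are balance-feasible.

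The main obstacle is the key step in (i)$\Rightarrow$(ii): showing that the globalisation argument really only uses the properties of $\Phi_v$ that $\phi$ inherits. Two points need care. First, the reduction to small changes inside $I_v$ requires that $\phi$ coincide with the measured $\phi_v$ on the whole of $I_v$, not merely on $K_v$. This is what A13 and \Cref{lem:cross-level-scale-alignment} provide, after absorbing the level-dependent affine constants and checking that the $\beta_{v,A}$ absorb the units consistently. Second, the boundary coordinates $0$ and $1$ must be handled by the continuity supplied by A14(i) rather than by direct measurement.
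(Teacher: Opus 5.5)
Your proposal is correct and follows essentially the same route as the paper. You align the measured scales via A13 and \Cref{lem:cross-level-scale-alignment}, extend to $[0,1]$ via A14(i), rerun the pairwise-balancing globalization with A14(ii) in place of A12(iii), pass to arbitrary acts with A12(i), and read the converse off the threshold representation. The paper packages your key step as \Cref{lem:stable-global-balance-suff}, a copy of \Cref{lem:di-global-balance-suff} that uses only that the common scale is a positive-affine transform of $\phi_v$ on $I_v$, so the obstacle you flag is already handled. Your converse and your neutral-separation check are, if anything, more explicit than the paper's.
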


Under the equivalent conditions of \Cref{thm:main}, every act $f$ with $V(f)\in(0,1)$ satisfies the stable-attitude implicit certainty-equivalent equation
\begin{equation}\label{eq:stable-attitude-functional}
V(f)
=
\phi^{-1}\!\left(
\sum_{A\in\T}q_{V(f)}(A)\,
\phi\!\left(
\sum_{s\in A}p_{V(f),A}(s)u(f(s))
\right)
\right).
\end{equation}
This follows by setting $v=V(f)$ in the threshold representation of \Cref{thm:main}. Relative to \eqref{eq:global-di-functional}, the ambiguity-attitude index is now fixed, while the conditional beliefs and cell weights may still depend on the endogenous certainty-equivalent level.

\paragraph{Proof sketch.}
By the global doubly implicit baseline, A1--A12 already give, at each certainty-equivalent level $v$, a global representation with level-specific ambiguity-attitude index $\Phi_v$. A13 removes the local part of this level dependence: together with \Cref{lem:cross-level-scale-alignment}, it aligns the directly measured scales across overlapping neighborhoods so that, after positive-affine renormalization, their restrictions are represented by one continuous strictly increasing common scale $\phi:(0,1)\to\mathbb R$. A14(i) extends the associated common difference scale continuously to the boundary and thereby yields a continuous strictly increasing extension of $\phi$ to $[0,1]$.

It remains to show that this common scale, rather than the separate level-specific indices $\Phi_v$, governs compensation behavior globally. A14(ii) provides exactly this transport property: if two changes have the same respective common-scale differences as an already exactly balanced pair, then the replacement pair is also exactly balanced whenever it is balance-feasible. Combining this with the global background calibration supplied by A12(ii), any balance-feasible pair whose $q_v$-weighted $\phi$-differences cancel can be reduced to sufficiently small directly measured changes and then transported back to the original global pair. Hence such cancelling pairwise moves preserve $v$-indifference throughout the full domain.

Applying the finite balancing schedule to the changes from the constant block vector $v\mathbf 1$ to an arbitrary block vector $z$ therefore gives
\(
\widehat f_z\sim\ell_v
\;\Longleftrightarrow\;
\sum_{A\in\mathcal T}q_v(A)\phi(z_A)=\phi(v),
\)
with the strict cases following from continuity and strict state monotonicity. The global within-cell mean sufficiency already supplied by A12(i) then extends this block-act representation to arbitrary acts. Thus the same ambiguity-attitude index $\phi$ is valid at every certainty-equivalent level, while the conditional beliefs $p_{v,A}$ and cell weights $q_v$ may still vary with $v$, yielding the global stable-attitude representation.
Conversely, a representation with one common global $\phi$ immediately makes the measured difference order invariant across levels and satisfies the boundary-extension and common-scale transport requirements in A13--A14. \hfill$\square$

The two main attitude-stability restrictions perform distinct tasks. A13 rules out cross-level disagreement in the directly measured attitude-difference orders on overlaps \(I_v\cap I_w\), thereby aligning the level-specific measured scales into one common positive-affine class. A14 then asks whether this common interior class extends continuously to the boundary and governs compensation on the full domain. The additional intermediate strengthening A13\(^{*}\), studied in \Cref{sec:alternative-branches}, asks whether the A13-common class already governs compensation throughout each larger local representation neighborhood \(J_v\).

The examples in the Appendix separate these requirements. \Cref{ex:a13-failure} satisfies A1--A12 but violates A13, whereas \Cref{ex:a14-transport-failure} satisfies A13\(^{*}\) and A14(i) but violates the global transport condition A14(ii). Thus, validity of one common attitude class throughout every \(J_v\) does not by itself imply that the same class governs compensation on the full domain. A distinct identification issue appears in \Cref{ex:nonunique-global-continuation}: even when a common global ambiguity-attitude class exists, the encompassing global DI representation may admit level-specific indices that are not globally positive-affine equivalent to that class.

Accordingly, A13--A14 identify one common positive-affine ambiguity-attitude class that can be used consistently across certainty-equivalent levels and throughout the global compensation system; they do not require every admissible level-specific index \(\Phi_v\) in the broader DI representation to belong to that class. The remaining structural level dependence in the stable-attitude characterization is therefore confined to ambiguity perception, through the conditional beliefs \(p_{v,A}\) and cell weights \(q_v\). The next axiom asks whether these perception margins are stable across levels as well.

\subsubsection{Cross-level neutral-rate and compensation invariance}\label{sec:stable-perception}

Having stabilized ambiguity attitude, the main characterization path leaves only ambiguity perception potentially dependent on the certainty-equivalent level. The next restriction therefore asks whether the two recovered perception margins are stable across levels: the within-cell conditional beliefs, revealed through neutral-transfer rates, and the across-cell weights, revealed through cross-cell compensation comparisons.

The restriction is formulated directly in terms of these behavioral comparisons and is therefore meaningful independently of the order in which the stability refinements are imposed. On the main characterization path, where A13--A14 have already established a common ambiguity-attitude class, invariance of within-cell neutral rates and of the directly measured cross-cell compensation order has the interpretation of perception stability. Without A13--A14, however, the same compensation-invariance requirement can also restrict the level-specific attitude scales, as shown in \Cref{prop:perception-local}.

By A4, every pair of distinct states in the same revealed cell has a neutral transfer rate at every level, and A3 makes that rate unique.

\begin{axiom}[Cross-level neutral-rate and compensation invariance]\label{ax:a15}
For every pair of certainty-equivalent levels $v,w\in(0,1)$:
\begin{enumerate}[label=(\roman*),leftmargin=2.2em]
\item \emph{Within-cell neutral-transfer stability.} For every pair of distinct states $s,t$ in the same revealed cell,
\[
r_{st}^v=r_{st}^w.
\]

\item \emph{Cross-cell compensation stability.} Whenever $I_v\cap I_w\neq\varnothing$, for every pair of distinct revealed cells $A,B\in\T$ and every $a,b,c,d\in I_v\cap I_w$,
\[
[a,b]_A\succeq_v^F[c,d]_B
\;\Longleftrightarrow\;
[a,b]_A\succeq_w^F[c,d]_B.
\]
\end{enumerate}
\end{axiom}

Under the stable-attitude representation supplied by A13--A14, A15(i) fixes the conditional beliefs $p_{v,A}$ across levels, while A15(ii) fixes the cell-weight distribution $q_v$. Thus, on the main characterization path, A15 removes the remaining utility-level dependence in perception; outside that path its formal content is the neutral-rate and compensation invariance just stated.

\begin{definition}\label{def:standard-sa}
A preference admits a \emph{regular standard identifiable smooth-ambiguity representation with fixed perception} if it admits a regular stable-attitude representation in the sense of \Cref{def:iisa} for which there exist conditional beliefs $\{p_A\in\operatorname{int}\Delta(A)\}_{A\in\T}$ and a weight vector $q\in\operatorname{int}\Delta(\T)$ satisfying
\[
p_{v,A}=p_A
\quad\text{and}\quad
q_v=q
\qquad\text{for every }v\in(0,1),\ A\in\T.
\]
Equivalently, its threshold representation uses one fixed partition, one fixed collection of within-cell beliefs, one fixed across-cell weight vector, and one common ambiguity-attitude index at every certainty-equivalent level. If the underlying regular representation is neutrally separated in the sense of \Cref{def:neutral-separation}, we call the standard representation neutrally separated.
\end{definition}

\begin{theorem}\label{thm:standard-sa}
Fix an admissible elicitation construction and suppose A1--A14 hold relative to it. The following are equivalent:
\begin{enumerate}[label=(\roman*),leftmargin=2.2em]
\item A15 holds.
\item The preference admits a neutrally separated standard identifiable
smooth-ambiguity representation with fixed perception in the sense of \Cref{def:standard-sa}: there exist a fixed
partition $\T$ containing at least three cells, fixed conditional beliefs
$\{p_A\in\operatorname{int}\Delta(A)\}_{A\in\T}$, fixed weights
$q\in\operatorname{int}\Delta(\T)$, the normalized vNM index $u$, and one
continuous strictly increasing $\phi:[0,1]\to\mathbb R$ such that, for every
$f\in\F$ and $v\in(0,1)$,
\begin{equation}
V(f)\gtreqless v
\quad\Longleftrightarrow\quad
\sum_{A\in\T}q(A)\,
\phi\!\left(
\sum_{s\in A}p_A(s)u(f(s))
\right)
\gtreqless \phi(v),
\end{equation}
with the same relation on both sides.
\end{enumerate}

\end{theorem}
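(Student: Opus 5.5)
The plan starts from the stable-attitude representation that \Cref{thm:main} supplies under A1--A14. It gives a fixed partition $\T$ with at least three cells, the recovered within-cell beliefs $p_{v,A}$ and cell weights $q_v$, and one continuous strictly increasing $\phi:[0,1]\to\R$ such that the threshold representation holds at every $v\in(0,1)$. What remains is to show that A15 forces $p_{v,A}$ and $q_v$ to be constant in $v$; the converse direction is then a direct verification.

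For sufficiency, I would first handle the within-cell margin. Under A4 the recovered belief satisfies $r^v_{st}=p_{v,A}(s)/p_{v,A}(t)$ for $s,t\in A$, and \eqref{eq:p-from-rates} expresses $p_{v,A}$ through these rates alone. A15(i) makes every rate independent of $v$, so $p_{v,A}=p_A$ for all $v$.

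For the across-cell margin, I would use \Cref{lem:additive} and \Cref{lem:common-scale}. On $I_v$ the derived order is represented by $\psi_{v,A}=\beta_{v,A}\phi_v$, and by \Cref{lem:cross-level-scale-alignment} $\phi_v$ may be taken to be the restriction of the common $\phi$, up to affine normalization. Hence, on $I_v\cap I_w$,
\[
[a,b]_A\succeq^F_v[c,d]_B \iff q_v(A)\,\delta(a,b)\ge q_v(B)\,\delta(c,d).
\]
When $I_v\cap I_w$ is a nondegenerate interval, A15(ii) says the $v$- and $w$-versions of this relation coincide on that interval. Because $\phi$ is continuous and strictly increasing, there I can choose $\delta(a,b)=\kappa\,\delta(c,d)$ for any ratio $\kappa$ in a neighbourhood of $q_v(B)/q_v(A)$, using small differences as in the $K_v$ argument. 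Solvability then shows that the indifference ratio $q_v(B)/q_v(A)$ is pinned down by the order. Thus $q_v(B)/q_v(A)=q_w(B)/q_w(A)$, and normalization gives $q_v=q_w$.

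The main obstacle is that A15(ii) only constrains pairs of levels whose direct-measurement neighbourhoods overlap. To globalize, I would connect any $v<w$ by a finite chain of levels with overlapping $I$'s. The open cover $\{I_x\}_{x\in[v,w]}$ of the compact interval $[v,w]$ has a finite subcover. For two consecutive members $I_{x_1},I_{x_2}$ with $x_1<x_2$ in a chain ordered along $[v,w]$, the intersection contains a nonempty open interval, so the ratio argument applies. I also need the fact that an intersection of two open intervals, when nonempty, is itself an open interval. Chaining then gives $q_v=q_w$ for all $v,w$, and likewise $p_{v,A}$ is constant, although that part needed no chaining.

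Substituting the constant $p_A$ and $q$ into the representation of \Cref{thm:main} gives the displayed standard representation. Neutral separation carries over unchanged, since the objects in \Cref{def:neutral-separation} are the same.

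For necessity, a fixed-perception representation has neutral rates $r_{st}^v=p_A(s)/p_A(t)$, independent of $v$, which is A15(i). Its derived order on $I_v$ is represented by $q(A)\phi$-differences for every $v$, as follows from the uniqueness part of \Cref{lem:additive}. These differences do not depend on $v$, which gives A15(ii).
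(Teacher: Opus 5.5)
Your proposal is correct and follows the paper's proof essentially step for step. \Cref{thm:main} supplies the common $\phi$, A15(i) fixes $p_{v,A}$ through \eqref{eq:p-from-rates}, and A15(ii) applied to small common-scale differences on $I_v\cap I_w$ forces $q_v(A)/q_v(B)=q_w(A)/q_w(B)$. An overlap-chain argument then propagates this to all levels; you use a finite subcover of $[v,w]$ where the paper uses connectedness of $(0,1)$, but select the chain so that successive intervals actually overlap, since ordering the subcover by the index levels $x_i$ alone does not guarantee this. In the converse you do not need the uniqueness part of \Cref{lem:additive}: for distinct cells $\succeq_v^F$ coincides with the complete direct order $\succeq_v^{\mathrm{dir}}$ by \Cref{lem:derived-order}(i), and the fixed representation evaluates that order directly as $q(A)[\phi(a)-\phi(b)]\ge q(B)[\phi(c)-\phi(d)]$, which contains no level-specific object.
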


Under the equivalent conditions of \Cref{thm:standard-sa}, the threshold representation yields the standard smooth-ambiguity certainty-equivalent functional form
\begin{equation}\label{eq:standard-sa-functional}
V(f)
=
\phi^{-1}\!\left(
\sum_{A\in\T}q(A)\,
\phi\!\left(
\sum_{s\in A}p_A(s)u(f(s))
\right)
\right),
\qquad f\in\F.
\end{equation}
Unlike \eqref{eq:global-di-functional} and \eqref{eq:stable-attitude-functional}, the right-hand side of \eqref{eq:standard-sa-functional} contains no object indexed by $V(f)$: ambiguity attitude and both perception margins are fixed.

Thus, relative to the fixed admissible elicitation construction and within the maintained environment, A1--A15 characterize the standard identifiable smooth-ambiguity benchmark. These background restrictions serve identification and are not part of the general definition of smooth ambiguity.

\paragraph{Proof sketch.}
By \Cref{thm:main}, A1--A14 already yield the global stable-attitude representation with one common ambiguity-attitude index \(\phi:[0,1]\to\mathbb R\); the only remaining level dependence is therefore in the perception terms \(p_{v,A}\) and \(q_v\).
A15(i) removes the within-cell component of that dependence. Since the revealed neutral-transfer rate between distinct states \(s,t\in A\) is
\(
r_{st}^v=\frac{p_{v,A}(s)}{p_{v,A}(t)},
\)
its invariance across certainty-equivalent levels makes all pairwise probability ratios within a cell level independent. Full-support normalization then gives one common conditional belief \(p_A\) for every nonsingleton cell, while the conclusion is immediate for singleton cells. It remains to stabilize the across-cell weights. Fix two levels \(v,w\) with \(I_v\cap I_w\neq\varnothing\). Because the ambiguity-attitude scale \(\phi\) is already common, a directly measured cross-cell indifference at level \(v\) has the form
\(
q_v(A)\,[\phi(a)-\phi(b)]
=
q_v(B)\,[\phi(c)-\phi(d)].
\)
A15(ii) requires the same cross-cell compensation comparison at level \(w\). Hence the same two nonzero common-scale differences must satisfy
\(
q_w(A)\,[\phi(a)-\phi(b)]
=
q_w(B)\,[\phi(c)-\phi(d)],
\)
which implies
\(
\frac{q_v(A)}{q_v(B)}
=
\frac{q_w(A)}{q_w(B)}.
\)
Thus every pairwise weight ratio is constant across overlapping measurement neighborhoods. Since the intervals \(\{I_v\}_{v\in(0,1)}\) cover the connected interval \((0,1)\), these equalities propagate along finite chains of overlaps. Normalization then yields one common weight vector \(q\).
Conversely, under such a fixed-perception representation the neutral-transfer rates \(p_A(s)/p_A(t)\) and the cross-cell compensation order
\(
q(A)[\phi(a)-\phi(b)]
\gtreqless
q(B)[\phi(c)-\phi(d)]
\)
contain no level-specific object. Therefore both clauses of A15 hold. \hfill$\square$

\paragraph{Main characterization path.}
Within this environment, the primary results form a nested sequence. The restrictions are intentionally cumulative rather than claimed to be logically independent; earlier clauses isolate the local structure needed for the intermediate characterizations, while later full-domain clauses may subsume those local restrictions. Table~\ref{tab:main-characterization} summarizes this principal sequential characterization path, and its representation labels are understood within that environment throughout.

\begin{table}[ht]
\centering
\footnotesize
\renewcommand{\arraystretch}{1.16}
\begin{tabular}{@{}>{\raggedright\arraybackslash}p{0.22\textwidth}>{\raggedright\arraybackslash}p{0.23\textwidth}>{\raggedright\arraybackslash}p{0.20\textwidth}>{\raggedright\arraybackslash}p{0.25\textwidth}@{}}
\toprule
\textbf{Behavioral restrictions} & \textbf{Ambiguity attitude} & \textbf{Ambiguity perception} & \textbf{Characterized representation} \\
\midrule
\textbf{A1--A12} & Level-specific DI indices permitted & May vary with $v$ & Global doubly implicit SA \\
\textbf{+ A13--A14} & Common globally & May vary with $v$ & Global stable-attitude implicit SA \\
\textbf{+ A15} & Common globally & Fixed & Standard identifiable SA \\
\bottomrule
\end{tabular}
\caption{Main characterization path through the global representation classes.}
\label{tab:main-characterization}
\end{table}

\Cref{thm:standard-sa} closes the main characterization path: A13--A14 establish one common global ambiguity-attitude class, and A15 removes the remaining structural level dependence in ambiguity perception. These characterization results are stated relative to a fixed admissible elicitation construction. The next subsection asks which of the recovered representation objects are invariant across such constructions and hence behaviorally identified.

\subsection{Identification}\label{sec:identification}

A1--A11 characterize the local doubly implicit representation relative to a chosen admissible elicitation construction, and \Cref{thm:global-di-baseline} shows that A12 globalizes that representation. At the local stage we use \emph{recover} for objects obtained relative to the fixed neighborhoods $\{(I_v,J_v)\}_{v\in(0,1)}$ and orientation order $\triangleleft$, reserving \emph{identify} for objects shown to be invariant to those auxiliary choices. The identification result has two layers. Under A1--A12, globalization already permits construction-independent identification of ambiguity perception at each fixed certainty-equivalent level: the partition and within-cell beliefs come directly from the primitive neutral-transfer relation, while the fact that any two global DI representations of the same preference generate the same local \(v\)-indifference set pins down the normalized across-cell weights. The attitude side has two distinct sources of representational nonuniqueness. First, the directly measured scale on $I_v$ is cardinal and therefore retains the usual level-specific positive-affine normalization freedom. Second, even after fixing a normalization on $I_v$, A12(iii) may admit several behaviorally compatible full-domain continuations $\Phi_v$ that are not positive-affine transformations of one another. Thus the baseline need not identify a unique full-domain level-specific attitude class. A13--A14 do not eliminate every such DI continuation; rather, they link the level-specific measured attitude classes and identify the unique common positive-affine class that can be used consistently across all certainty-equivalent levels and extended to govern global compensation behavior.

\begin{lemma}\label{lem:family-invariance}
Suppose the same preference satisfies A1--A12 relative to two admissible
elicitation constructions, possibly with different elicitation neighborhoods
and different strict total orders on the revealed cells. Let
\(
(p_{v,A},q_v,\Phi_v)
\;\text{and}\;
(\widetilde p_{v,A},\widetilde q_v,\widetilde\Phi_v)
\)
be global doubly implicit representation components supplied by
\Cref{thm:global-di-baseline} under the two constructions, choosing the
positive-affine representative of each global index so that
\(
\Phi_v|_{I_v}=\phi_v
\;\text{and}\;
\widetilde\Phi_v|_{\widetilde I_v}=\widetilde\phi_v,
\)
where $\phi_v$ and $\widetilde\phi_v$ are the directly measured level-$v$
scales under the respective constructions.

\begin{enumerate}[label=(\roman*),leftmargin=2.2em]

\item \emph{Fixed-level identification.}
For every $v$ and $A$,
\(
\widetilde p_{v,A}=p_{v,A}
\;\text{and}\;
\widetilde q_v(A)=q_v(A).
\)
Moreover, for every $v$ there exist an open neighborhood
$N_v\subset I_v\cap\widetilde I_v$ of $v$ and $\lambda_v>0$ such that
\[
\widetilde\Phi_v(a)-\widetilde\Phi_v(v)
=
\lambda_v
\bigl[\Phi_v(a)-\Phi_v(v)\bigr]
\quad
\text{for every }a\in N_v.
\]

\item \emph{Stable-attitude identification.}
If, in addition, A13--A14 hold relative to both constructions, let
$\phi$ and $\widetilde\phi$ be any two common global stable-attitude indices
that represent the same preference under the respective construction-specific
perception components. Then there exist
$\alpha\in\mathbb R$ and $\lambda>0$ such that
\[
\widetilde\phi(a)
=
\alpha+\lambda\phi(a)
\quad
\text{for every }a\in[0,1].
\]

\end{enumerate}
\end{lemma}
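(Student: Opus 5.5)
The plan is to show that every recovered object is determined by the primitive preference through construction-free characterizations, and then to compare the two representations on sets where both describe the same indifference behavior.

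\emph{Step 1: partition and within-cell beliefs.} The neutral-transfer relation $\sim_v^0$ and the rates $r_{st}^v$ are defined from $\succeq$ alone, using only the acts $f_{\varepsilon,r}^{v;s,t}$ and no neighborhoods or orders. So the revealed partition $\T$ is common to both constructions. Formula \eqref{eq:p-from-rates} is reference-state independent, which gives $\widetilde p_{v,A}=p_{v,A}$. Note that the global representations of \Cref{thm:global-di-baseline} use these behaviorally recovered objects directly.

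\emph{Step 2: cell weights.} Both global representations describe the same $v$-indifference set of block vectors,
\[
\Bigl\{z\in[0,1]^{\T}:\sum_A q_v(A)\Phi_v(z_A)=\Phi_v(v)\Bigr\}=\Bigl\{z:\sum_A \widetilde q_v(A)\widetilde\Phi_v(z_A)=\widetilde\Phi_v(v)\Bigr\}.
\]
I would restrict to a small neighborhood $N_v\subset I_v\cap\widetilde I_v$ of $v$. On $N_v$ the indices agree with the measured scales $\phi_v$ and $\widetilde\phi_v$. Now consider the direct compensation experiments $[a,b]_A$ versus $[c,d]_B$ with all endpoints in $N_v$.
\begin{itemize}
\item Exact balance is a primitive indifference $\widehat f_{z^{I,-J}}\sim\ell_v$.
\item By A11(i)/A12(ii) it does not depend on the background.
\item It does not depend on the orientation $\triangleleft$, because reversing the orientation only reverses the comparison.
\end{itemize}
Hence the restrictions of $\succeq_v^F$ and $\widetilde\succeq_v^F$ to changes inside $N_v$ have the same indifference classes. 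Strictness then follows from state monotonicity and continuity.

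Next, take three cells $A,B,C$. Fix the $C$-coordinate as a balancing coordinate and vary $z_A,z_B$ near $v$ along the indifference curve. This shows that $\sum_A q_v(A)\Phi_v(z_A)$ and $\sum_A\widetilde q_v(A)\widetilde\Phi_v(z_A)$ have the same level set through $v\mathbf 1$ on $N_v^{\T}$.

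\emph{Step 3: a Cauchy-type argument on the level sets.} Restricting to two cells $A,B$, with the other coordinates held at $v$ but adjusted via a third cell, gives a functional equation of the form
\[
q_v(A)[\Phi_v(a)-\Phi_v(v)]=q_v(B)[\Phi_v(v)-\Phi_v(d)]
\iff
\widetilde q_v(A)[\widetilde\Phi_v(a)-\widetilde\Phi_v(v)]=\widetilde q_v(B)[\widetilde\Phi_v(v)-\widetilde\Phi_v(d)].
\]
Combined with A10 (the same function in every cell), this should give the result. Write $g=\widetilde\Phi_v\circ\Phi_v^{-1}$. Then $g$ transforms the differences $x\mapsto\rho x$ compatibly, with ratio $\rho=q_v(A)/q_v(B)$ on one side and $\widetilde\rho$ on the other. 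Iterating across the three cell pairs forces $g$ to be additive on a neighborhood of $\Phi_v(v)$. Continuity then makes $g$ locally affine, giving $\lambda_v$.

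The final comparison uses that neutral separation excludes degenerate (affine-like) symmetries. Plugging back gives $\widetilde q_v(A)/\widetilde q_v(B)=q_v(A)/q_v(B)$, and normalization yields $\widetilde q_v=q_v$.

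\textbf{This is the step I expect to be the main obstacle:} deriving local affinity of $g$ and equality of the weight ratios, rather than only an inclusion, in the presence of level-set data alone. I would first try the standard Thomsen/additive-conjoint uniqueness. Near $v\mathbf 1$, with at least three cells and continuity, an additive representation $\sum_A h_A(z_A)$ of a level-set family is unique up to a common positive scale and cell-wise origins. Applying this to $h_A=q_v(A)\Phi_v$ and $\widetilde h_A=\widetilde q_v(A)\widetilde\Phi_v$ gives $\widetilde q_v(A)\widetilde\Phi_v=\lambda q_v(A)\Phi_v+\alpha_A$ on $N_v$. The equality $\widetilde\Phi_v=\widetilde\Phi_v$ across cells then forces the weight ratios to match, and $\lambda_v=\lambda$.

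\emph{Part (ii).} Take two global stable-attitude representations $\phi,\widetilde\phi$ of the same preference. By part (i), at each $v$ they share $p_{v,A}$ and $q_v$, and they are positive-affinely related on a neighborhood $N_v$ of $v$, by the same uniqueness argument applied with $\Phi_v=\phi$. The local affine maps agree on overlaps of neighborhoods. Since $(0,1)$ is connected, a chaining argument (cover a compact subinterval by finitely many $N_v$) gives one global $\alpha,\lambda$ on $(0,1)$. Continuity of $\phi$ and $\widetilde\phi$ on $[0,1]$ extends this to the endpoints.
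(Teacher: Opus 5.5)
Your overall route is the paper's. Both arguments take the partition and $p_{v,A}$ from the primitive neutral-transfer relation, use the shared $v$-indifference contour, run a three-cell Cauchy argument that gives local affinity and equal weight ratios, and finish part (ii) by chaining over overlaps. The gap is in the step you flag yourself: it is never actually carried out, and neither justification you give works as stated.

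\emph{The displayed two-cell equation is too weak.} With the other coordinates at $v$, it only gives $g_B(-t)=-g_A(t)$ in centred coordinates. Running it over the three pairs $(A,B),(B,C),(A,C)$ shows that all the $g_A$ coincide and are odd. It never yields additivity, which requires a genuinely three-coordinate zero-sum move.

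\emph{The additive-conjoint (Thomsen) uniqueness theorem does not apply directly.} That theorem is about additive representations of a whole weak order. Here $\sum_A q_v(A)\Phi_v(z_A)$ represents only the single contour $\{V=v\}$; nearby indifference contours belong to other levels $w$ and are described by $(q_w,\Phi_w)$. The citation can be rescued by using the third cell as a balancing coordinate, which turns the single contour into an ordering on $(z_A,z_B)$, and then invoking two-component Pexider uniqueness. That reduction has to be made explicitly.

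\emph{Neutral separation plays no role.} Equal weight ratios follow from a single index $\widetilde\Phi_v$ serving every cell, exactly as in your fallback.

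The paper closes the step directly. Put
\[
g_A(y)=\widetilde q_v(A)\bigl[\widetilde\Phi_v\bigl(\Phi_v^{-1}(\Phi_v(v)+y/q_v(A))\bigr)-\widetilde\Phi_v(v)\bigr].
\]
Then the shared contour reads $\sum_A y_A=0\iff\sum_A g_A(y_A)=0$ near $0$, with $g_A(0)=0$. Plugging in $(x,y,-x-y,0,\ldots,0)$ gives $g_A(x)+g_B(y)+g_C(-x-y)=0$. Setting $y=0$, then $x=0$, then $y=-x$ shows $g_A=g_B=g_C=:g$ with $g$ odd. Hence $g(x+y)=g(x)+g(y)$, and continuity gives $g(t)=\lambda_v t$. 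Pairing any further cell $D$ with $C$ gives $g_D(t)=\lambda_v t$. Translating back yields $\widetilde q_v(A)[\widetilde\Phi_v(a)-\widetilde\Phi_v(v)]=\lambda_v q_v(A)[\Phi_v(a)-\Phi_v(v)]$ near $v$. Comparing two cells at the same $a\neq v$ gives equal weight ratios, and normalization gives $\widetilde q_v=q_v$.

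Your Step 2 detour through construction-independence of $\succeq_v^F$ and orientation is unnecessary. The shared contour is immediate, since both representations describe the same $V$. Your part (ii) matches the paper.
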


Part (i) identifies fixed-level perception and the locally revealed attitude class in a neighborhood of each $v$; part (ii) identifies the common global stable-attitude class up to one common positive affine transformation.

The attitude-identification conclusion can be summarized schematically by separating the larger set of level-specific DI representations from the common structural class selected by the stable-attitude restrictions. Relative to the fixed admissible elicitation construction under discussion, for each certainty-equivalent level $v$, let $\mathcal A_v$ be the set of all continuous strictly increasing functions $\Psi:[0,1]\to\mathbb R$ such that, holding the behaviorally identified $p_{v,A}$ and $q_v$ fixed, (i) the level-$v$ threshold relation is represented by
\[
V(f)\gtreqless v
\quad\Longleftrightarrow\quad
\sum_{A\in\T}q_v(A)\Psi\!\left(m_{v,A}(f)\right)
\gtreqless \Psi(v),
\]
(ii) $\Psi|_{I_v}$ belongs to the positive-affine class of the directly measured scale $\phi_v$, and (iii) equal $\Psi$-differences preserve global exact compensation at level $v$ as in A12(iii). By construction, $\mathcal A_v$ is closed under positive affine transformations.

The global DI baseline may leave $\mathcal A_v$ containing several distinct positive-affine classes. A13--A14 do not collapse $\mathcal A_v$ to one class. Instead, they identify the unique common positive-affine class $[\phi]$ satisfying
\[
[\phi]\subseteq\mathcal A_v
\qquad\text{for every }v\in(0,1),
\]
so that the same class is behaviorally operative across levels and on the full domain. To see the uniqueness explicitly, suppose another positive-affine class $[\psi]$ were contained in every $\mathcal A_v$. A single representative $\psi$ would then supply the same preference with another common global stable-attitude representation. Applying part~(ii) of \Cref{lem:family-invariance} with the same admissible elicitation construction on both sides gives $[\psi]=[\phi]$. Thus what A13--A14 identify is a common structural attitude class, not a unique parametrization of the larger global DI representation. In particular, apparent level dependence of the indices $\Phi_v$ in a particular DI representation need not reflect genuine behavioral variation in ambiguity attitude. \Cref{ex:nonunique-global-continuation} in the Appendix shows that these inclusions can be strict: the same preference can admit the common stable-attitude index $\phi(x)=x^2$ and also a global DI representation whose level-specific indices are not all positive-affine transformations of one another.

Lemma~\ref{lem:family-invariance} establishes invariance to the auxiliary elicitation construction. Combining that result with the earlier behavioral recovery of the revealed partition and within-cell neutral rates yields the following identification statement for the economic objects of the model.

\begin{corollary}\label{cor:identification}
Within the neutrally separated, full-support global stable-attitude class characterized by \Cref{thm:main}, the partition is behaviorally identified as
$
\T=S/\!\sim^0.
$
For any reference state \(s_A\in A\), the conditional belief \(p_{v,A}\) is recovered from the neutral-transfer rates by \eqref{eq:p-from-rates}, while part~(i) of \Cref{lem:family-invariance} identifies the normalized cell weights \(q_v\). Hence the predictive probability \(\pi_v\), and equivalently the statistically identifiable perception pair \((P_v,\mu_v)\), are behaviorally identified at each certainty-equivalent level, up to relabelling of cells (equivalently, models). Under A13--A14, part~(ii) of \Cref{lem:family-invariance} additionally identifies the common global ambiguity-attitude class \([\phi]\) up to one common positive affine transformation. This does not imply uniqueness of the larger family of admissible level-specific DI indices.
\end{corollary}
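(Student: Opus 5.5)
The corollary is essentially an assembly of earlier results, so the plan is to verify each clause in turn from the primitive relation and \Cref{lem:family-invariance}.

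\emph{Partition.} By A4, the relation \(\sim^0\) is defined from indifference comparisons \(f^{v;s,t}_{\varepsilon,r}\sim\ell_v\) alone. These use only canonical lotteries \(\ell_a\), which depend only on \(z^+,z^-\) and the vNM scale. That scale is unique under A1 with the normalization \(u(z^-)=0,u(z^+)=1\). Hence \(S/\!\sim^0\) involves no neighborhood \(I_v,J_v\) and no order \(\triangleleft\).

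I would then check that this revealed partition coincides with the partition of any neutrally separated, full-support stable-attitude representation.
\begin{itemize}
\item For \(s,t\) in the same represented cell, the threshold representation of \Cref{thm:main} shows that the rate \(p_{v,A}(s)/p_{v,A}(t)\) leaves \(m_{v,A}\) unchanged. The transfer is therefore neutral, and \(s\sim^0 t\).
\item For \(s,t\) in distinct represented cells, a neutral rate \(r\) would, by the threshold equivalence, give exactly the equality excluded in \Cref{def:neutral-separation} with \(\Phi_v=\phi\). So \(s\not\sim^0 t\).
\end{itemize}
Thus the represented partition equals \(S/\!\sim^0\) up to relabelling.

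\emph{Within-cell beliefs.} Neutral rates are unique by A3, and we showed \(r^v_{st}=p_{v,A}(s)/p_{v,A}(t)\) in any representation. Full-support normalization then forces \(p_{v,A}\) to be given by \eqref{eq:p-from-rates}, and this is independent of the reference state by the A4 multiplicativity argument already given.

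\emph{Cell weights.} Here I would invoke \Cref{lem:family-invariance}(i). It applies because A1--A14 include A1--A12, and a stable-attitude representation is in particular a global DI representation with \(\Phi_v=\phi\), after the positive-affine renormalization making \(\Phi_v|_{I_v}=\phi_v\). Part (i) then gives \(\widetilde q_v=q_v\) across constructions. Combining the two margins, \(\pi_v(s)=q_v(A)p_{v,A}(s)\) is identified. \Cref{lem:finite-ident} translates \((\T,\pi_v)\) into \((P_v,\mu_v)\), unique up to relabelling of cells.

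\emph{Attitude class.} Under A13--A14, \Cref{lem:family-invariance}(ii) gives \(\widetilde\phi=\alpha+\lambda\phi\) on \([0,1]\) for any two common indices, whether they come from the same or different constructions. The final disclaimer, that the larger family of DI indices need not be unique, is not proved here. It is simply pointed to \Cref{ex:nonunique-global-continuation}, which exhibits level-specific \(\Phi_v\) outside \([\phi]\).

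The main obstacle is the partition step. One must make sure the direction ``distinct represented cells \(\Rightarrow\) no neutral rate'' really uses only \emph{small feasible} \(\varepsilon\) of both signs around \(v\in(0,1)\). This matches the quantifiers in \Cref{def:neutral-separation}, so the threshold equivalence at level \(v\) turns behavioral neutrality into exactly that excluded numerical identity. I would write this translation out carefully, and treat the rest as citation.
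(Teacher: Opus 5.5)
Your proposal is correct and follows the paper's route. The corollary is an assembly of four pieces:
\begin{itemize}
\item the construction-free recovery of $\T$ and $p_{v,A}$ from unique neutral rates via A3--A4 and \eqref{eq:p-from-rates};
\item \Cref{lem:family-invariance}(i) for $q_v$, followed by \Cref{lem:finite-ident} for $(P_v,\mu_v)$;
\item \Cref{lem:family-invariance}(ii) for $[\phi]$;
\item \Cref{ex:nonunique-global-continuation} for the disclaimer.
\end{itemize}
Your explicit check that a neutrally separated representation's partition coincides with $S/\!\sim^0$ is the same argument the paper gives in the converse direction of \Cref{prop:local-baseline}.
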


Under A1--A12, perception is identified separately at each certainty-equivalent level but may vary with \(v\). A13--A14 add identification of one common global ambiguity-attitude class, while A15 further requires the already identified perception objects to be invariant across levels. The local representation, the intermediate notion of local attitude stability, and alternative orders of refinement are studied in \Cref{sec:alternative-branches}.

\section{Alternative refinements and extensions}\label{sec:extensions}

This section studies alternative refinement paths relative to the main characterization, isolates an intermediate notion of local attitude stability, and then considers extensions beyond statistically identifiable smooth ambiguity. Throughout this section, ``local DI,'' ``global DI,'' and ``stable-attitude implicit SA'' retain the regularity, full-support, neutral-separation, fixed-partition, and at-least-three-cell qualifications of the corresponding characterization results.

\subsection{Local representation and alternative orders of refinement}\label{sec:alternative-branches}

A1--A11 already characterize the local DI model in \Cref{prop:local-baseline}; A12 is needed only to extend that representation to the full objective-utility domain. This makes it possible to separate local attitude stability from global attitude stability and to consider perception stability before globalization.

\begin{proposition}\label{prop:gbc-a12}
Suppose A1--A12 hold. Then A13 holds if and only if there exists a global DI representation supplied by \Cref{thm:global-di-baseline} whose level-specific indices $\Phi_v$ can be normalized so that their restrictions to the directly measured neighborhoods $I_v$ are generated by one common continuous strictly increasing scale $\phi:(0,1)\to\mathbb R$. The $\Phi_v$ may still differ away from those directly measured neighborhoods.
\end{proposition}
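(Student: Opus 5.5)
The plan is to prove the two directions separately, working entirely with the directly measured scales \(\phi_v\) on \(I_v\) and the fact, from \Cref{thm:global-di-baseline} and the choice of representatives in \Cref{lem:family-invariance}, that the global indices can be chosen with \(\Phi_v|_{I_v}=\phi_v\).

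For the forward direction, assume A1--A13. \Cref{lem:cross-level-scale-alignment} supplies positive affine normalizations \(\alpha_v+\lambda_v\phi_v\), with \(\lambda_v>0\), that agree on overlaps. Together they define one continuous strictly increasing \(\phi:(0,1)\to\mathbb R\) with \(\phi|_{I_v}=\alpha_v+\lambda_v\phi_v\).

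Take any global DI representation \((p_{v,A},q_v,\Phi_v)\) from \Cref{thm:global-di-baseline} with \(\Phi_v|_{I_v}=\phi_v\), and replace each \(\Phi_v\) by \(\widehat\Phi_v:=\alpha_v+\lambda_v\Phi_v\). This is a level-by-level positive affine transformation, so the threshold equivalence \eqref{eq:global-di} is preserved: the weights \(q_v\) sum to one, so the constant \(\alpha_v\) cancels, and \(\lambda_v>0\) preserves every relation \(\gtreqless\). The transformed family is again continuous and strictly increasing. Neutral separation of \Cref{def:neutral-separation} is also preserved, since the displayed equality is homogeneous in \(\Phi_v\)-differences. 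Hence the transformed family is still a global DI representation supplied by the theorem. By construction \(\widehat\Phi_v|_{I_v}=\phi|_{I_v}\). This step is routine.

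For the converse, suppose some global DI representation has indices \(\Phi_v\) with \(\Phi_v|_{I_v}=\phi|_{I_v}\) for one common \(\phi\), after level-specific normalization. The key link is that on \(I_v\) the measured same-cell difference order \(\succeq_v^\Delta\) is represented by the differences of \(\phi_v\). This follows from \Cref{lem:additive} and \Cref{lem:common-scale}: \(\psi_{v,A}=\beta_{v,A}\phi_v\), so within a cell \([a,b]_A\succeq_v^F[c,d]_A\) iff \(\phi_v(a)-\phi_v(b)\ge\phi_v(c)-\phi_v(d)\).

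The main obstacle is to argue that the \(\Phi_v\) of an arbitrary representation, restricted to \(I_v\), lie in the positive-affine class of the measured \(\phi_v\). Only then can the hypothesis be translated into a statement about \(\succeq_v^\Delta\). I would prove this first. In the representation, a direct compensation \(I\succeq_v^{\mathrm{dir}}J\) with \(A\triangleleft B\) holds iff
\[
q_v(A)[\Phi_v(a)-\Phi_v(b)]\ge q_v(B)[\Phi_v(c)-\Phi_v(d)].
\]
This uses the threshold representation applied to the block act \(\widehat f_{z^{I,-J}}\) and the cancellation of the unchanged coordinates. The derived order \(\succeq_v^F\) is built from chains and decompositions of such comparisons, so it is then represented by the cell functions \(q_v(A)\Phi_v|_{I_v}\). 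The uniqueness clause of \Cref{lem:additive} (cell origins plus one common unit) then forces \(\Phi_v|_{I_v}=\alpha+\lambda\phi_v\). Alternatively, one could cite part (i) of \Cref{lem:family-invariance} applied with the same construction on both sides, but that lemma gives the conclusion only on a neighborhood \(N_v\), so the direct uniqueness argument on all of \(I_v\) is preferable.

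With that in hand, for \(a,b,c,d\in I_v\cap I_w\) we get
\[
[a,b]\succeq_v^\Delta[c,d]\iff\phi(a)-\phi(b)\ge\phi(c)-\phi(d)\iff[a,b]\succeq_w^\Delta[c,d],
\]
which is A13. The final sentence of the proposition needs no proof beyond noting that nothing constrains \(\Phi_v\) off \(I_v\); one can point to \Cref{ex:nonunique-global-continuation} for illustration.
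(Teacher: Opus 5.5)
Your proposal is correct and follows the paper's own argument. The forward direction is \Cref{lem:cross-level-scale-alignment} plus level-by-level positive-affine renormalization of the $\Phi_v$, and the converse observes that the common scale then represents the measured same-cell difference order at every level on overlaps. Your extra verification that each $\Phi_v|_{I_v}$ represents the derived order is exactly what the converse of \Cref{prop:local-baseline} establishes via the $W_v$ order, and it makes explicit a step the paper leaves implicit. The subsequent appeal to the uniqueness clause of \Cref{lem:additive} is unnecessary, however: the factor $q_v(A)>0$ cancels in same-cell comparisons, which already gives $[a,b]\succeq_v^\Delta[c,d]$ iff $\Phi_v(a)-\Phi_v(b)\ge\Phi_v(c)-\Phi_v(d)$.
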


This result isolates the distinction between the common ambiguity-attitude class supplied by A13 on the directly measured neighborhoods and the stronger A14 requirement that one common attitude index govern all global compensation choices.

A second alternative is to impose A15 before global attitude stability. In that order A15 cannot yet be interpreted as a pure perception restriction, because its compensation-invariance clause also aligns the directly measured attitude scales.

\begin{proposition}\label{prop:perception-local}
Suppose A1--A11 hold. Then A15 holds if and only if:
\begin{enumerate}[label=(\roman*),leftmargin=2.2em]
\item the within-cell beliefs and across-cell weights are fixed across levels, so $p_{v,A}=p_A$ for every $v,A$ and $q_v=q$ for every $v$; and
\item A13 holds, so the directly measured ambiguity-attitude scales belong to one common positive-affine class on overlapping measurement neighborhoods.
\end{enumerate}
\end{proposition}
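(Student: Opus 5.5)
The plan is to work entirely within the local representation supplied by \Cref{prop:local-baseline}. Under A1--A11 there are recovered beliefs $p_{v,A}$ given by \eqref{eq:p-from-rates}, weights $q_v(A)=\beta_{v,A}/\sum_C\beta_{v,C}$, and, by \Cref{lem:additive} and \Cref{lem:common-scale}, measured scales of the form
\[
\psi_{v,A}=\beta_{v,A}\phi_v
\]
on $I_v$. The derived order therefore reads
\[
[a,b]_A\succeq_v^F[c,d]_B
\quad\Longleftrightarrow\quad
\beta_{v,A}\bigl(\phi_v(a)-\phi_v(b)\bigr)\ge\beta_{v,B}\bigl(\phi_v(c)-\phi_v(d)\bigr).
\]
Both directions reduce to manipulating this display together with $r^v_{st}=p_{v,A}(s)/p_{v,A}(t)$.

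\textbf{Sufficiency.} Assume (i) and (ii) hold.
\begin{enumerate}[label=(\alph*),leftmargin=2.2em]
\item A15(i) is immediate from $r_{st}^v=p_A(s)/p_A(t)$.
\item For A15(ii), note that the normalization in \Cref{lem:common-scale} lets us take $\beta_{v,A}=q_v(A)=q(A)$.
\item By (ii) and \Cref{lem:cross-level-scale-alignment}, on $I_v\cap I_w$ the scale $\phi_w$ is a positive affine transform of $\phi_v$. Differences $\phi_w(a)-\phi_w(b)$ are therefore positive multiples $\lambda$ of $\phi_v(a)-\phi_v(b)$.
\item Consequently both sides of the displayed comparison scale by the same $\lambda>0$ when passing from $v$ to $w$, so the cross-cell order is the same at both levels.
\end{enumerate}

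\textbf{Necessity.} Assume A15.
\begin{enumerate}[label=(\alph*),leftmargin=2.2em]
\item Within-cell beliefs: as in the sketch of \Cref{thm:standard-sa}, A15(i) fixes all ratios within each cell, so $p_{v,A}=p_A$. Singleton cells are trivial.
\item A13: taking $A=B$ in A15(ii) gives invariance of the same-cell order $\succeq^\Delta$ across levels on $I_v\cap I_w$. This is exactly A13, which yields (ii).
\item Scale alignment: by \Cref{lem:cross-level-scale-alignment}, normalize so that $\phi_v=\phi_w=\phi$ on $I_v\cap I_w$.
\item A nonzero measured difference: pick $a\neq b$ in the overlap, which is a nonempty open interval. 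Then $\phi(a)-\phi(b)\neq0$.
\item An exact cross-cell balance at level $v$: by solvability in A9 and \Cref{lem:derived-exact-balance}, find $c,d$ in the overlap with $[a,b]_A\sim_v^F[c,d]_B$. Shrinking $[a,b]$ if necessary keeps everything inside the overlap, using continuity of $\phi$ and the fact that small differences around the interior are attainable.
\item Transfer to level $w$: A15(ii) gives $[a,b]_A\sim^F_w[c,d]_B$. Dividing the two equalities yields $\beta_{v,A}/\beta_{v,B}=\beta_{w,A}/\beta_{w,B}$, hence $q_v(A)/q_v(B)=q_w(A)/q_w(B)$.
\item Propagation: for levels whose neighborhoods do not overlap, pass along a finite chain of overlapping $I_{v_k}$ covering the segment between them. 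Such a chain exists by compactness of the segment and connectedness. Normalizing then gives $q_v=q$.
\end{enumerate}

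\textbf{Main obstacle.} I expect step (e) of necessity to be the hardest. One must guarantee that an exact cross-cell equivalence with all four endpoints in $I_v\cap I_w$ exists, rather than merely in $I_v$. The first approach is to choose $[a,b]$ inside a small compact core of the overlap and apply A9 solvability with the interval $[c,d]$ taken within that core. If a direct application fails, the fallback is to use the continuity of $\phi$ together with the $K_v$ argument to shrink $[a,b]$ until the compensating change fits. A subsidiary point is making explicit that the $\beta$-ratios are invariant to the normalization of $\phi_v$, which follows from the common-unit uniqueness in \Cref{lem:additive}.
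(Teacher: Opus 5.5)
Your sufficiency direction is fine, and necessity steps (c)--(g) go through once A13 is in hand. The gap is step (b) of necessity. A15(ii) quantifies only over \emph{distinct} revealed cells $A\neq B$, so you cannot specialize it to $A=B$. A15 therefore says nothing directly about the same-cell order $\succeq_v^\Delta$, and A13 is exactly what has to be derived. The paper treats this as the substantive content of the proposition: cross-cell invariance \emph{implies} A13. Your steps (c) and (f) use the scale alignment supplied by A13 to cancel the attitude differences, so the whole necessity argument currently rests on this unproved step. The derivation is also not a formality, for two reasons. First, a same-cell derived comparison $[a,b]_A\succeq_v^F[c,d]_A$ is built from cross-cell direct comparisons whose intermediate changes may have endpoints in $I_v\setminus I_w$, where A15(ii) is silent. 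Second, a large $A$-change inside $U=I_v\cap I_w$ need not be matchable by any $B$-change inside $U$ when $\beta_{v,A}/\beta_{v,B}$ is large.

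The repair is to route same-cell comparisons through a second cell using only small changes. Since $\phi_v$ is continuous and strictly increasing on the open interval $U$, any $[c,d]_A$ with a sufficiently small represented difference is matched by a $B$-change $K$ with endpoints in $U$ and $K\sim_v^F[c,d]_A$. A15(ii), applied in both cell orientations, transfers to level $w$ both this indifference and the comparison of $[a,b]_A$ with $K$. Transitivity of $\succeq_v^F$ and $\succeq_w^F$ then gives $[a,b]_A\succeq_v^F[c,d]_A\iff[a,b]_A\succeq_w^F[c,d]_A$ for small changes. Hence $F=\phi_w\circ\phi_v^{-1}$ maps equal small differences on $\phi_v(U)$ to equal differences. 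By the locally-additive, continuous argument of \Cref{lem:common-scale}, $F$ is affine there, which is A13 on $U$.

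The paper takes an equivalent but more economical route. It defines the transfer map $h$ from level-$v$ to level-$w$ represented differences near zero. Cross-cell indifferences show that $h$ is the same for every cell and is strictly increasing and additive, hence linear. This yields $\beta_{w,A}[\phi_w(a)-\phi_w(b)]=\lambda_{vw}\beta_{v,A}[\phi_v(a)-\phi_v(b)]$ on $U$, which delivers the weight-ratio equality and A13 at once.

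Finally, step (e), which you flagged as the main obstacle, is easy. It needs neither A9 solvability, whose solutions lie in $I_v$ and not necessarily in $U$, nor \Cref{lem:derived-exact-balance}, which concerns primitive exact balance rather than the derived order. The difference sets of $\beta_{v,A}\phi_v$ and $\beta_{v,B}\phi_v$ over $U$ both contain a neighborhood of zero. So a common small nonzero value gives the required indifference directly from \Cref{lem:additive}.
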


The second implication is important. A15(ii) requires more than constancy of the recovered cell weights: it requires every directly measured cross-cell utility-difference comparison to repeat across levels. That stronger invariance also implies A13 and hence aligns the directly measured ambiguity-attitude scales.

Adding A12 yields the global counterpart of Proposition \(\ref{prop:perception-local}\): a fixed-perception global DI representation whose directly measured attitude scales are aligned across levels.

\begin{corollary}\label{cor:perception-gbc}
Suppose A1--A11 hold. Then A12 and A15 are equivalent to a global DI representation with fixed $\{p_A\}_{A\in\T}$ and $q$ whose level-specific global attitude indices can be normalized to agree with one common attitude scale on the directly measured neighborhoods $I_v$. The global indices may still differ away from those neighborhoods unless A14 is imposed in addition.
\end{corollary}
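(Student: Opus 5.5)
The statement to prove is Corollary \ref{cor:perception-gbc}. The plan is to combine Proposition \ref{prop:perception-local} with Theorem \ref{thm:global-di-baseline}, treating the two directions separately.

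\emph{Forward direction.} Assume A1--A12 and A15.
\begin{enumerate}[label=(\arabic*),leftmargin=2.2em]
\item Theorem \ref{thm:global-di-baseline} supplies a global DI representation $(p_{v,A},q_v,\Phi_v)$. In it the perception components are exactly the behaviorally recovered objects (by \eqref{eq:p-from-rates} and the $\beta_{v,A}$ normalization), and each $\Phi_v$ restricts to the measured scale $\phi_v$ on $I_v$.
\item Proposition \ref{prop:perception-local}, applied under A1--A11, gives $p_{v,A}=p_A$ and $q_v=q$ for all $v$, together with A13.
\item Lemma \ref{lem:cross-level-scale-alignment}, using A13, renormalizes the $\phi_v$ positive-affinely so that they agree on overlaps. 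They then define one continuous strictly increasing $\phi:(0,1)\to\R$ with $\phi|_{I_v}=\phi_v$ after normalization.
\item Replace each $\Phi_v$ by the corresponding positive affine transform $\alpha_v+\lambda_v\Phi_v$. This leaves the threshold relation \eqref{eq:global-di} unchanged, because both sides transform by the same positive affine map and $q_v$ sums to one. It also preserves A12(iii)-compatibility. Hence the renormalized global indices agree with $\phi$ on each $I_v$, which is the claimed representation. This is essentially the content of Proposition \ref{prop:gbc-a12}, specialized to fixed perception.
\end{enumerate}

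\emph{Converse direction.} Take a global DI representation with fixed $p_A,q$ whose $\Phi_v$ agree, after normalization, with a common $\phi$ on the $I_v$.
\begin{itemize}
\item Theorem \ref{thm:global-di-baseline} (ii)$\Rightarrow$(i) yields A12, with A1--A11 given.
\item For A15(i), the recovered neutral rates are $r^v_{st}=p_A(s)/p_A(t)$. This is the identity derived after \eqref{eq:p-from-rates}, and it shows the rates are level-independent.
\item For A15(ii), the key input is that on $I_v$ the derived compensation order is represented via Lemmas \ref{lem:additive} and \ref{lem:common-scale} by $q(A)[\Phi_v(a)-\Phi_v(b)]\ge q(B)[\Phi_v(c)-\Phi_v(d)]$. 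To justify this, I would check that the scales $\psi_{v,A}$ coincide, up to the common unit, with $q(A)\Phi_v$ restricted to $I_v$. The argument: exact balancing in the threshold representation, using background independence and Lemma \ref{lem:derived-exact-balance}, forces $\sim^F_v$ to coincide with equality of weighted $\Phi_v$-differences. Strict comparisons then follow from strict monotonicity and continuity.
\item With that in hand, on $I_v\cap I_w$ the normalized $\Phi_v$ and $\Phi_w$ both equal $\phi$ up to positive affine maps, and $q$ is fixed. So the order is $q(A)[\phi(a)-\phi(b)]\gtreqless q(B)[\phi(c)-\phi(d)]$ at both levels, which gives A15(ii).
\end{itemize}

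The main obstacle is the step in the converse that identifies the derived order $\succeq^F_v$ with the represented weighted differences. It requires showing that the chain and matched-decomposition construction does not create comparisons beyond those the representation encodes. I would handle it by invoking the converse half of Proposition \ref{prop:local-baseline}, which already establishes A6--A10 from a representation. I would then observe that the uniqueness in Lemma \ref{lem:additive} forces $\psi_{v,A}\propto q(A)\Phi_v|_{I_v}$ up to an origin.

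Finally, the statement that the global indices may differ away from the $I_v$ unless A14 holds is not part of the equivalence. I would support it by pointing to the nonuniqueness already exhibited in \Cref{ex:nonunique-global-continuation}.
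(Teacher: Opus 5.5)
Your proposal is correct and follows essentially the paper's route. The forward direction uses \Cref{thm:global-di-baseline}, \Cref{prop:perception-local} and, in unrolled form, \Cref{prop:gbc-a12}. The converse uses the global-DI converse plus fixed perception and the common measured cross-cell order, and your appeal to the converse half of \Cref{prop:local-baseline} (the derived order on $I_v$ is exactly the weighted-difference order) makes precise what the paper states tersely.

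One small correction concerns the closing caveat. \Cref{ex:nonunique-global-continuation} satisfies A14, so it only shows representational nonuniqueness. The example that shows A14 is genuinely needed within this fixed-perception class is \Cref{ex:a14-transport-failure}: it has singleton cells, constant weights $1/3$, and $\Phi_v=\phi$ on every $J_v$, so A15 and A13 hold while A14(ii) fails, and hence no common global index exists.
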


Maintaining A1--A12 throughout, Table~\ref{tab:global-classes} collects the global representation classes generated by alternative combinations of the global attitude-stability restrictions A13--A14 and the perception-stability restriction A15. Because these restrictions are not behaviorally independent, the table records alternative restriction paths rather than a partition of preferences into mutually exclusive classes.

\begin{table}[ht]
\centering
\small
\begin{tabular}{@{}>{\raggedright\arraybackslash}p{0.21\textwidth}>{\raggedright\arraybackslash}p{0.34\textwidth}>{\raggedright\arraybackslash}p{0.34\textwidth}@{}}
\toprule
&
\textbf{A15 not imposed}
&
\textbf{A15 imposed}
\\
\midrule
\textbf{A13--A14 not imposed}
&
Global DI (\Cref{thm:global-di-baseline})
&
Fixed-perception global DI; measured scales aligned (\Cref{cor:perception-gbc})
\\[1ex]
\textbf{A13--A14 imposed}
&
Global stable-attitude model (\Cref{thm:main})
&
Standard identifiable smooth ambiguity (\Cref{thm:standard-sa})
\\
\bottomrule
\end{tabular}
\caption{Global representation classes generated by alternative combinations of stability restrictions.}
\label{tab:global-classes}
\end{table}

The asymmetry is clearest in the two off-diagonal classes. By \Cref{prop:perception-local}, imposing perception stability through A15 also entails the cross-level attitude consistency in A13; by contrast, imposing global attitude stability through A13--A14 does not analogously restrict ambiguity perception. Consequently, \Cref{cor:perception-gbc} characterizes a narrower class than the purely representation-theoretic case in which perception is fixed while the level-specific attitude indices are otherwise unrestricted.

\subsection{Local stable attitude}\label{sec:local-stable-attitude}

Table~\ref{tab:global-classes} concerns global representation classes obtained once A12 is maintained. A separate intermediate question is whether the common attitude class aligned by A13 already governs compensation throughout the larger local representation neighborhoods \(J_v\), without imposing the full global transport requirement A14. To study this question, return to the A1--A11 local environment and strengthen A13 as follows.

\begin{conditionstar}[Local common-scale consistency]
A13\(^{*}\) consists of the following two sequential requirements:
\begin{enumerate}[label=(\roman*),leftmargin=2.2em]

\item \emph{Cross-level local attitude consistency.}
For every \(v,w\in(0,1)\), if
\(a,b,c,d\in I_v\cap I_w\), then

$$
[a,b]\succeq_v^\Delta[c,d]
\quad\Longleftrightarrow\quad
[a,b]\succeq_w^\Delta[c,d].
$$

Thus A13 holds. Let \([\phi]\) denote the common positive-affine attitude class generated by this requirement through \Cref{lem:cross-level-scale-alignment}. For each \(v\), given the chosen normalized directly measured representative \(\phi_v\), let \(\phi^v\) denote the unique representative of \([\phi]\) whose restriction to \(I_v\) equals \(\phi_v\).

\item \emph{Local common-scale transport.}
For every $v$, every pair of distinct revealed cells $A,B$, every locally
exactly balanced pair
\(
I=[a,b]_A
\;\text{and}\;
J=[c,d]_B
\)
with endpoints in $J_v$, and every locally supportable replacement
\(
I'=[a',b']_A
\;\text{and}\;
J'=[c',d']_B
\)
in the same two cells, if
\(
\phi^v(a')-\phi^v(b')
=
\phi^v(a)-\phi^v(b)
\)
and
\(
\phi^v(c')-\phi^v(d')
=
\phi^v(c)-\phi^v(d),
\)
then $(I',J')$ is also locally exactly balanced at level $v$.

\end{enumerate}
\end{conditionstar}

Part~(i) is A13: it aligns the directly measured attitude-difference orders across certainty-equivalent levels and thereby generates the common class \([\phi]\). Part~(ii) requires that, after this class is expressed in level-\(v\) units, its cardinal differences continue to organize exact compensation throughout the larger local representation neighborhood \(J_v\). The representative \(\phi^v\) is unique conditional on the chosen directly measured representative \(\phi_v\); the normalization \(\phi_v(v)=0\) alone still leaves a positive multiplicative freedom.

\begin{proposition}\label{prop:local-stable}
Suppose A1--A11 hold. The following are equivalent:
\begin{enumerate}[label=(\roman*),leftmargin=2.2em]
\item A13$^*$ holds.
\item There is one continuous strictly increasing $\phi:(0,1)\to\mathbb R$, unique up to one common positive affine transformation, such that for every $v$ and every act whose vector of cell means lies in $J_v^{\T}$,
\begin{equation}\label{eq:local-stable}
V(f)\gtreqless v
\quad\Longleftrightarrow\quad
\sum_{A\in\T}q_v(A)\phi\!\left(m_{v,A}(f)\right)
\gtreqless\phi(v).
\end{equation}
\end{enumerate}
\end{proposition}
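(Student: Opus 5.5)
For (i)$\Rightarrow$(ii), I will build a locally valid index from the common class and then rerun the local balancing argument of \Cref{prop:local-baseline} with it. For (ii)$\Rightarrow$(i), I will read A13$^*$ directly off the representation.

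\textbf{From A13$^*$ to the representation.} Start from A13$^*$(i). By \Cref{lem:cross-level-scale-alignment}, this yields a continuous strictly increasing $\phi:(0,1)\to\mathbb R$, unique up to one common positive affine transformation, whose restriction to each $I_v$ lies in $[\phi_v]$. For each $v$, take the level-$v$ representative $\phi^v=\alpha_v+\lambda_v\phi$, chosen so that $\phi^v|_{I_v}=\phi_v$. The key observation is that $\phi^v|_{J_v}$ belongs to $\mathcal E_v(\phi_v;I_v,J_v)$. A13$^*$(ii) is exactly the replacement requirement of A11(ii) with $\Theta_v=\phi^v|_{J_v}$, so $\phi^v|_{J_v}\in\mathcal E_v^L$.

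I then rerun the sufficiency half of \Cref{prop:local-baseline} with this particular locally valid index in place of an arbitrary one. That argument uses only three inputs: background invariance (A11(i)), the transport property of the chosen $\Theta_v$, and the weights $q_v$ from \Cref{lem:common-scale}. It delivers \eqref{eq:local-di} with $\Theta_v=\phi^v$. Since $\phi^v$ is a positive affine transform of $\phi$ and the weights $q_v$ sum to one, both sides of the threshold inequality transform by the same map. This gives \eqref{eq:local-stable} for every act whose cell means lie in $J_v^{\T}$.

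The step I expect to be the main obstacle is this rerun. The proof of \Cref{prop:local-baseline} fixes some $\Theta_v\in\mathcal E_v^L$, and I need its finite balancing argument to hold for any member of $\mathcal E_v^L$, not just the one it produces. I will check that the weights entering the balancing are the $\beta_{v,A}$-ratios from \Cref{lem:common-scale}, which are pinned down on $I_v$ and therefore do not depend on how $\Theta_v$ is extended. I will also check that small calibrated changes inside $K_v$ can be transported outward using only $\Theta_v$-differences. If the appendix proof does use a specific feature of its own $\Theta_v$, I will instead compare directly: two indices in $\mathcal E_v^L$ agree on $I_v$, and both transport exact balance, so the indifference set built from the pairwise moves out of $v\mathbf 1$ coincides for the two.

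Uniqueness of $\phi$ in (ii) up to a common positive affine map follows from two facts. Any representation satisfying \eqref{eq:local-stable} induces, on each $I_v$, the measured difference order. By \Cref{lem:additive} and \Cref{lem:common-scale}, that order is cardinal on $I_v$. Since the intervals $I_v$ cover $(0,1)$ and overlap along chains, the level-wise affine constants must all coincide.

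\textbf{From the representation to A13$^*$.} Given (ii), on $I_v$ the derived order $\succeq_v^F$ is represented by $q_v(A)[\phi(a)-\phi(b)]$; this comes from the same link between \eqref{eq:local-stable} and compensation choices used in the necessity half of \Cref{prop:local-baseline}. The same-cell order $\succeq_v^\Delta$ is therefore represented by $\phi$-differences alone, with no level index, and A13 follows. The common class generated by A13 is then $[\phi]$, so $\phi^v$ is a positive affine transform of $\phi$. A locally supportable pair with endpoints in $J_v$ is exactly balanced precisely when $q_v(A)\Delta_\phi(a,b)=q_v(B)\Delta_\phi(c,d)$. Equal $\phi^v$-differences preserve this equality, which gives A13$^*$(ii).
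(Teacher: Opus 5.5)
Your proposal is correct and follows essentially the same route as the paper: A13 and \Cref{lem:cross-level-scale-alignment} give the common class, A13$^*$(ii) makes $\phi^v|_{J_v}$ a member of $\mathcal E_v^{L}$, the balancing argument of \Cref{prop:local-baseline} is rerun with it (your check is right, since that argument only uses $\Theta_v=\phi_v$ on $K_v\subset I_v$ together with the replacement and background-invariance properties, so it works for any locally valid index), positive-affine invariance of the threshold then gives \eqref{eq:local-stable}, and the converse reads off A13 and A13$^*$(ii) exactly as you describe. Your explicit uniqueness argument via cardinality on each $I_v$ and propagation along overlapping intervals is a small addition that the paper leaves implicit in \Cref{lem:cross-level-scale-alignment}.
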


Thus A13$^*$ is exactly the local stable-attitude condition: A13 aligns the common attitude class on the directly measured neighborhoods \(I_v\), while A13$^*$(ii) requires that this same class govern the full local representation domains \(J_v\). Since \Cref{prop:perception-local} shows that A15 already implies A13, when A15 and A13$^*$ are imposed together the additional content of A13$^*$ is precisely this local common-scale transport requirement.

To obtain one common attitude index on the larger local representation neighborhoods \(J_v\), it remains to impose the local common-scale transport requirement A13\(^*\)(ii).

\begin{corollary}\label{cor:perception-local-common}
Under A1--A11, A15 together with A13$^*$ is equivalent to a local DI representation with fixed $\{p_A\}_{A\in\T}$, fixed $q$, and one common ambiguity-attitude index $\phi$ that represents the level-\(v\) threshold relation on the entire local domain \(J_v^{\mathcal T}\) for every \(v\).
\end{corollary}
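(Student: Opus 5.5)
The corollary follows by combining \Cref{prop:perception-local} with \Cref{prop:local-stable}, under the maintained A1--A11. The two directions are taken in turn.

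\emph{From the axioms to the representation.} Assume A15 and A13$^*$. \Cref{prop:perception-local}(i) gives level-independent perception margins: $p_{v,A}=p_A$ and $q_v=q$ for every $v$ and $A$. Since A13$^*$(i) is A13, \Cref{prop:local-stable} gives one continuous strictly increasing $\phi:(0,1)\to\mathbb R$ such that \eqref{eq:local-stable} holds for every $v$ and every act whose cell means lie in $J_v^{\T}$. The substantive point is that the $q_v$ and $m_{v,A}$ in \eqref{eq:local-stable} are the recovered objects. These were shown to be constant, so we may substitute $q$ and $p_A$. This yields a local DI representation in the sense of \Cref{def:local-di} with $\Theta_v=\phi|_{J_v}$ for every $v$. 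Clauses (i), (ii) and (iv) of that definition are inherited unchanged from \Cref{prop:local-baseline}, because the neutral-separation clause only involves $q_v$, $p_{v,A}$ and the index on $J_v$. Whichever local representation is used, it must agree with $\phi$ on $I_v$ and transport compensation on $J_v$.

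\emph{From the representation to the axioms.} Suppose there is a local DI representation with fixed $p_A$, fixed $q$, and a common $\phi$ on each $J_v^{\T}$. Then \Cref{prop:local-stable} applies in the converse direction and gives A13$^*$. For A15(i), the local threshold representation forces the neutral rate $r^v_{st}$ to equal $p_A(s)/p_A(t)$, since a two-state mean-preserving transfer leaves the $A$-mean unchanged. This rate does not depend on $v$. For A15(ii), the derived order $\succeq_v^F$ on $I_v$ is represented by $\psi_{v,A}=\beta_{v,A}\phi_v$, and $\beta_{v,A}/\beta_{v,B}=q(A)/q(B)$. Hence $[a,b]_A\succeq^F_v[c,d]_B$ holds iff $q(A)[\phi(a)-\phi(b)]\ge q(B)[\phi(c)-\phi(d)]$. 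To use this, we need the directly measured $\phi_v$ to be a positive-affine image of $\phi$ on $I_v$. That follows from the uniqueness clause of \Cref{lem:additive}, because $\phi$ restricted to $I_v$ together with the weights $q$ already represents the direct compensation comparisons through the threshold equation. The resulting criterion contains no $v$, so A15(ii) holds on every overlap $I_v\cap I_w$.

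\emph{Main obstacle.} The delicate step is in the converse: verifying that the representing $\phi$ and $q$ coincide, up to the admissible normalizations, with the measured $\phi_v$ and $\beta_{v,\cdot}$. Only then does the representation's invariance translate into invariance of the behaviorally defined $\succeq^F_v$. I would first try to settle this by reading exact balances off the threshold equation on $I_v$: this shows the pair $(\phi|_{I_v},q)$ generates the same $\sim_v^F$ classes, via \Cref{lem:derived-exact-balance}. Cardinal uniqueness from \Cref{lem:additive} and \Cref{lem:common-scale} then closes the argument.
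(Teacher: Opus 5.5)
Your proposal is correct and follows the paper's own route: both directions come from combining \Cref{prop:perception-local} with \Cref{prop:local-stable}. The only difference is in the converse, where you verify A15 by hand instead of citing the converse of \Cref{prop:perception-local} (fixed $p_A$ and $q$, together with the A13 contained in A13$^*$), and you spell out that the representation's $(q,\phi)$ coincides with the recovered $(\beta_{v,\cdot},\phi_v)$ on $I_v$, a step the paper treats as immediate.
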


The restrictions play distinct but non-independent roles. A12 globalizes the level-specific DI representation, while A15 stabilizes ambiguity perception and already implies the cross-level attitude consistency in A13. Hence, when A15 is combined with A13\(^{*}\), the additional content of A13\(^{*}\) is its local common-scale transport requirement, which makes the A13-common attitude class govern the entire local representation domains \(J_v^{\mathcal T}\).

\subsection{Beyond identifiable smooth ambiguity}
\label{sec:beyond-identifiable}

The decomposition developed above has a representation-level analogue beyond statistically identifiable model families. In a general smooth-ambiguity specification, ambiguity perception at certainty-equivalent level \(v\) may be represented by a second-order belief \(\mu_v\) over probabilistic models \(p\in\Delta(S)\), while ambiguity attitude is represented by an increasing index \(\Phi_v\). The corresponding doubly implicit threshold representation takes the form

\[
V(f)\gtreqless v
\quad\Longleftrightarrow\quad
\int_{\Delta(S)}
\Phi_v\!\left(\E_p[u(f)]\right)\,d\mu_v(p)
\gtreqless
\Phi_v(v).
\]

At this level of generality, the same conceptual hierarchy remains available. Stability of ambiguity attitude asks whether the family \(\{\Phi_v\}_v\) can be replaced by one common index \(\phi\), while stability of ambiguity perception asks whether the family \(\{\mu_v\}_v\) can be replaced by one common second-order belief \(\mu\). Thus the representation-side sequence
\(
(\mu_v,\Phi_v)
\longrightarrow
(\mu_v,\phi)
\longrightarrow
(\mu,\phi)
\)
parallels the main characterization path of the paper.

What does not extend directly is the present behavioral recovery argument. Statistical-model identifiability is what converts the finite-state perception pair into a partition of states, conditional beliefs within cells, and weights across cells. This structure permits the neutral-transfer experiments to recover within-cell probability ratios and the cross-cell compensation experiments to recover cell weights and ambiguity attitude separately. With unrestricted smooth ambiguity, distinct probabilistic models may assign positive probability to the same states, so there need not be a partition whose coordinates can be perturbed independently in this way. A statewise transfer may then affect the expectations of several represented models simultaneously. The neutral-transfer and cross-cell-compensation constructions used above therefore cannot in general be carried over merely by replacing the identifiable perception pair with an unrestricted second-order belief.

This distinction separates a representation question from an identification question. At the representation level, one can define progressively more stable smooth-ambiguity classes according to whether both \((\mu_v,\Phi_v)\) vary with \(v\), the attitude index is common across levels, or the second-order belief is common as well. But the availability of such representations does not imply that observed choice separately reveals whether variation comes from ambiguity perception or ambiguity attitude. The stronger behavioral question is whether these two components can themselves be separately recovered from choice and their cross-level stability tested independently. Outside the identifiable class, such a behavioral decomposition requires additional identifying structure or a different elicitation argument. In particular, behavior need not distinguish different second-order beliefs: when ambiguity attitude is affine, smooth aggregation depends only on the predictive mean of the second-order belief, so different \(\mu_v\)'s with the same mean are observationally equivalent.

The identifiable model studied here should therefore be viewed as a domain on which the distinction between changing perception and changing attitude is not only representational but behaviorally operational. Extending that decomposition to unrestricted smooth ambiguity raises a separate identification problem: under what conditions can changes in a second-order belief over overlapping probabilistic models be distinguished from changes in ambiguity attitude? Resolving that question would require new behavioral structure beyond the partition-based neutral-transfer construction used here.

\section{Related literature}\label{sec:literature}

The paper first relates to work separating ambiguity from attitudes toward ambiguity and identifying the objects that enter smooth-ambiguity representations. \citet{GhirardatoMarinacci2002} develop comparative notions of ambiguity and ambiguity aversion, while \citet{GhirardatoMaccheroniMarinacci2004} separate an unambiguous preference relation from attitudes toward ambiguity. \citet{KMM2005} make the distinction explicit in smooth ambiguity through a vNM index, a separate ambiguity-attitude transformation, and second-order beliefs over probabilistic models. \citet{DentiPomatto2022} then provide an axiomatic foundation for smooth ambiguity with identifiable subjective statistical models and show how model and predictive uncertainty can be recovered from preferences. The present paper adds a different identification problem to this line of work. When ambiguity-sensitive behavior changes across endogenous certainty-equivalent levels, both perception and the level-specific attitude representation may change. In finite states, identifiable perception is represented by within-cell conditional beliefs and across-cell weights; the behavioral construction recovers those margins alongside a local attitude scale, and the cross-level restrictions determine which components are stable. The paper therefore combines two questions that this literature has largely treated separately: decomposition of ambiguity perception from ambiguity attitude, and endogenous variation of that decomposition across utility levels.

A second connection is to empirical and theoretical work on wealth, stakes, and context dependence in ambiguity-sensitive choice. \citet{BaillonPlacido2019} reject constant absolute and relative ambiguity aversion and report evidence in the direction of decreasing ambiguity aversion as individuals become better off. \citet{BouchouichaEtAl2017} document stake effects, while \citet{BaillonBleichrodt2015} find variation across gain--loss and likelihood domains; \citet{TrautmannWakker2018} and \citet{KonigKerstingKopsTrautmann2023} document failures of restrictions retained by prominent ambiguity models. On the theoretical side, \citet{CherbonnierGollier2015} derive conditions for decreasing aversion under ambiguity, including smooth ambiguity, while \citet{CerreiaVioglioMaccheroniMarinacci2022} give behavioral characterizations of increasing or decreasing ambiguity aversion as wealth changes. These contributions ask whether ambiguity-sensitive behavior or ambiguity aversion changes with wealth, stakes, domains, or related environments. The present paper asks a prior identification question: when ambiguity-sensitive behavior varies, does the variation reflect a change in ambiguity attitude, ambiguity perception, or both? The evidence cited above therefore motivates allowing ambiguity-sensitive behavior to vary, but it is not direct evidence of dependence on the certainty-equivalent level. In particular, the certainty-equivalent index \(v\) is not an exogenous wealth or stake variable; it is endogenously determined by the act being evaluated. Exogenous changes in wealth or common certain components may nevertheless shift the relevant certainty-equivalent level and thereby generate comparative statics through \(v\).

The implicit nature of the representation connects the paper to models in which the evaluation supporting an indifference class may depend on the value assigned to the alternative. \citet{Dekel1986} provides an early example by weakening independence and allowing the supporting evaluation to vary with utility. Related work under uncertainty, including \citet{GrantKajiiPolak2000}, \citet{GrantRoordaYang2025}, and \citet{RoordaJoosten2026}, likewise permits local evaluations to depend on the act's represented value or balancing level; in particular, \citet{GrantRoordaYang2025} include specifications in which ambiguity attitude may vary with the balancing value. The present paper introduces such endogenous level dependence within identifiable smooth ambiguity and separates two channels through which it can operate: the act's certainty-equivalent level may select both ambiguity perception and the level-specific attitude representation. The characterization then determines when either source of variation can be removed, rather than treating level dependence as variation in one undifferentiated supporting evaluation.

Finally, \citet{Ozbek2026ULD} develops utility-level-dependent ambiguity in reduced form by allowing a capacity $\nu_v$ to depend directly on the certainty-equivalent level. That model therefore permits the ambiguity object itself to vary with utility but does not decompose the source of that variation. The present paper replaces the reduced-form map $v\mapsto\nu_v$ by the structural decomposition into within-cell beliefs, across-cell weights, and an ambiguity-attitude class, and asks which of these components must vary with $v$. In this sense, the two approaches address complementary questions: the reduced-form model identifies utility-level dependence in ambiguity evaluation, whereas the present smooth-ambiguity framework decomposes such dependence into changing perception, changing attitude, or both.

\section{Conclusion}\label{sec:conclusion}

This paper develops a behavioral theory of implicit smooth ambiguity in which ambiguity perception may depend on the act's endogenous certainty-equivalent level and the global doubly implicit representation may use level-specific ambiguity-attitude indices. The characterization separates direct measurement, local representation, and globalization. Neutral transfers recover the revealed partition and within-cell conditional beliefs; cross-cell compensation recovers cell weights and the directly measured ambiguity-attitude scale; and local calibration determines whether that measured scale can represent choice throughout the larger local neighborhoods. Full-domain calibration then yields the global doubly implicit baseline without requiring the particular level-specific global attitude indices to be uniquely identified.

The cross-level restrictions distinguish progressively stronger forms of stability. A13 asks whether the attitude-difference orders directly revealed at different certainty-equivalent levels can be aligned into one common positive-affine class. A14 asks whether that common class extends to the boundary and governs compensation on the full domain. Conditional on this common global ambiguity-attitude class, A15 tests whether the two components of ambiguity perception, the within-cell conditional beliefs and the across-cell weights, are also invariant across levels. When all of these restrictions hold, the model reduces to standard identifiable smooth ambiguity within the maintained full-support, neutrally separated fixed-partition environment. The alternative characterization paths show that local attitude stability, global attitude stability, and perception stability can also be studied separately and in different orders.

The decomposition has a direct identification interpretation. Perception and ambiguity attitude are recovered from different choice comparisons, and their stability is therefore separately testable. After globalization, ambiguity perception is behaviorally identified at each certainty-equivalent level independently of the auxiliary elicitation construction. The attitude side is more subtle: level-specific global DI indices may remain nonunique, but the cross-level attitude restrictions identify the unique common positive-affine ambiguity-attitude class that can govern compensation consistently across levels and on the full domain. Apparent variation in a particular family of global indices therefore need not itself constitute behavioral variation in ambiguity attitude.

This distinction also gives the model an empirical interpretation. Neutral-transfer experiments recover within-cell probability ratios, while cross-cell compensation experiments recover across-cell weights and local attitude differences. Repeating these measurements at different certainty-equivalent levels makes it possible to test separately whether ambiguity attitude and ambiguity perception remain stable. Observed utility-level dependence in ambiguity-sensitive choice therefore need not automatically be interpreted as changing ambiguity aversion: it may instead arise from changes in conditional beliefs, changes in the relative weights assigned to ambiguity cells, or genuine changes in the locally revealed attitude scale. In applications, this distinction can help organize evidence on how ambiguity-sensitive behavior changes with wealth, stakes, or other economic circumstances that shift the relevant certainty-equivalent level.

Several extensions remain. The two-cell case requires a different identification argument because the present cross-cell measurement uses a third balancing cell. Allowing the revealed partition itself to vary with the certainty-equivalent level would require a theory of how ambiguity cells appear, disappear, split, or merge across levels. Beyond statistically identifiable smooth ambiguity, the representation-level distinction between level-dependent perception and attitude remains meaningful, but new behavioral tools would be needed to recover more general second-order beliefs. Dynamic extensions would additionally require consistency conditions linking perception and attitude across information histories. These directions would clarify how far the behavioral decomposition developed here extends beyond the finite identifiable environment.

\appendix

\section{Preliminary identification and measurement results}\label[appendix]{app:prelim}

\subsection{Finite-state identifiability}\label[appendix]{app:finite-ident}
This subsection proves the finite-state reduction from identifiable model families to a partition and predictive probability used in the behavioral analysis.

\begin{proof}[\textbf{Proof of \Cref{lem:finite-ident}}]
Because $A_p=k^{-1}(p)$, the sets $A_p$ are pairwise disjoint, their union is $S$, and the identifying condition gives $p(A_p)=1$, so every $A_p$ is nonempty. Hence $p'(A_p)=0$ whenever $p'\neq p$. Therefore $\pi(A_p)=\sum_{p'}\mu(p')p'(A_p)=\mu(p)$. Full support of $\pi$ and nonemptiness of $A_p$ imply $\pi(A_p)>0$, hence $\mu(p)>0$. If $s\in A_p$, then only $p$ assigns positive mass to $s$, so $\pi(s)=\mu(p)p(s)$ and therefore $p(s)=\pi(s)/\pi(A_p)$. This gives $p=\pi(\cdot\mid A_p)$.  Conversely, if $k(s)=p_A$ for $s\in A$, then $p_A(\{s:k(s)=p_A\})=p_A(A)=1$.
\end{proof}

\subsection{Verification of the certainty-equivalent construction}\label[appendix]{app:ce-index}
This subsection establishes the certainty-equivalent index used throughout the behavioral construction. Recall that, under A1–A3, every act \(f\) is indifferent to the canonical act \(f_{u(f)}\) and has a unique canonical certainty equivalent \(\ell_{V(f)}\). The resulting index \(V:\mathcal F\to[0,1]\) is continuous.

\begin{proof}
By statewise substitution, replacing each lottery $f(s)$ by its canonical lottery with the same objective expected utility leaves the act indifferent, so every act is indifferent to its canonical utility act. The constant acts $\ell_0$ and $\ell_1$ bracket every canonical utility act by statewise monotonicity. Define
\(
U_f:=\{a\in[0,1]:\ell_a\succeq f\}
\;\text{and}\;
L_f:=\{a\in[0,1]:f\succeq\ell_a\}.
\)
Completeness implies $U_f\cup L_f=[0,1]$, the endpoint bracketing makes both sets nonempty, and continuity makes both closed. Since $[0,1]$ is connected, two nonempty closed sets whose union is the whole interval must intersect. Hence $f\sim\ell_v$ for some $v\in[0,1]$. Strict state monotonicity makes this $v$ unique; write it as $V(f)$. Finally, if $f^n\to f$, compactness of $[0,1]$ implies that every subsequence of $\{V(f^n)\}$ has a convergent further subsequence, say $V(f^{n_k})\to\bar v$. Since $f^{n_k}\sim\ell_{V(f^{n_k})}$ and preference has closed graph, $f\sim\ell_{\bar v}$; uniqueness of the canonical certainty equivalent gives $\bar v=V(f)$. Hence every convergent subsequence has the same limit and $V(f^n)\to V(f)$.
\end{proof}

\subsection{Nested local support}\label[appendix]{app:local-solvability}

\begin{proof}[\textbf{Proof of \Cref{lem:local-solvability}}]
Fix $v$, an open representation neighborhood $J_v\ni v$, and two distinct cells $A,B$. Choose a third cell $C$. Pick $r^-<v<r^+$ with $r^-,r^+\in J_v$. With every cell other than $C$ fixed at $v$, strict state monotonicity gives
\[
\widehat f_{(v,\ldots,r^-,\ldots,v)}\prec\ell_v
\quad\text{and}\quad
\widehat f_{(v,\ldots,r^+,\ldots,v)}\succ\ell_v.
\]
Because closed-graph continuity of the weak preference implies that the strict preference relation is open, the same strict comparisons hold when the prescribed $A$- and $B$-coordinates are sufficiently close to $v$. For every such pair $(b,c)$, continuity in the $C$-coordinate therefore gives a balancing value between $r^-$ and $r^+$, hence inside $J_v$. Intersect the finitely many neighborhoods obtained for all ordered cell pairs and shrink once more so that its closure lies in $J_v$. The resulting interval is $I_v$.
\end{proof}

\subsection{Within-cell transfers and local compensation measurement}\label[appendix]{app:additive}

\begin{proof}[\textbf{Proof of \Cref{lem:within-cell-connect}}]
Let $x^0=x$. If $x^k\neq y$, the equality $p\cdot x^k=p\cdot y$ implies that some coordinate $s$ lies above its target and some coordinate $t$ lies below. Put
\[
\lambda=\min\{p_s(x_s^k-y_s),\ p_t(y_t-x_t^k)\}>0
\]
and change the two coordinates by $x_s^{k+1}=x_s^k-\lambda/p_s$ and $x_t^{k+1}=x_t^k+\lambda/p_t$, leaving all others fixed. The $p$-mean is unchanged, each of the two coordinates remains between its current value and its target value, and at least one of $s,t$ reaches its target. A coordinate that has reached its target need not be moved again. Since the cell is finite, the procedure reaches $y$ after finitely many steps, and every intermediate vector stays coordinatewise between the current point and the target.
\end{proof}

\begin{proof}[\textbf{Proof of \Cref{lem:local-mean-suff}}]
Fix \(v\) and first consider one feasible two-state transfer segment inside a revealed cell that preserves the recovered \(p_{v,A}\)-mean and whose cell-mean vector lies in \(J_v^{\T}\). Suppose two points on that segment were ranked differently relative to \(\ell_v\). If one of them were indifferent to \(\ell_v\), A5 applied at that point would make every feasible point on the same mean-preserving segment indifferent to \(\ell_v\), contradicting the ranking of the other point. If instead one point were strictly preferred to \(\ell_v\) and the other strictly worse, continuity of \(V\) would yield an intermediate point indifferent to \(\ell_v\), and A5 would again imply that every feasible point on the segment is indifferent to \(\ell_v\), a contradiction. Hence the ranking relative to \(\ell_v\) is constant along every such segment.

Now take two canonical acts with the same vector of recovered cell means in \(J_v^{\T}\). Within one cell, \Cref{lem:within-cell-connect} joins their state-utility vectors by finitely many feasible two-state transfers that preserve that cell mean and never leave the coordinate box spanned by the endpoints. The vector of cell means is therefore unchanged and remains in \(J_v^{\T}\) throughout. The preceding argument preserves the ranking relative to \(\ell_v\) along every step. Apply the construction cell by cell. For arbitrary acts \(f\) and \(g\), A2 gives \(f\sim f_{u(f)}\) and \(g\sim f_{u(g)}\). If \(f\) and \(g\) have the same vector of recovered cell means, their two canonical utility acts do as well, so the preceding transfer argument applies to those canonical acts. Transferring the resulting comparison back through A2 proves that \(f\) and \(g\) have the same ranking relative to \(\ell_v\).
\end{proof}

\begin{proof}[\textbf{Proof of \Cref{lem:derived-order}}]
A8 makes any two directly measurable changes comparable after matched subdivision, so $\succeq_v^F$ is complete. For transitivity, suppose $I\succeq_v^FJ$ and $J\succeq_v^FK$ but $I\not\succeq_v^FK$. Completeness then gives $K\succ_v^FI$. Expand the matched decompositions and all direct chains supporting the three derived comparisons. Their endpoint-record differences add to
\(
(\chi_I-\chi_J)+(\chi_J-\chi_K)+(\chi_K-\chi_I)=0.
\)
At least one constituent direct comparison must be strict; otherwise every chain could be reversed, contradicting $K\succ_v^FI$. A7 therefore rules out the cycle. Hence $I\succeq_v^FK$, proving transitivity and therefore that $\succeq_v^F$ is a weak order.

For the direct-to-derived property, a one-piece decomposition gives $I\succeq_v^FJ$ whenever $I\succeq_v^{\mathrm{dir}}J$. If $I\succ_v^{\mathrm{dir}}J$ but also $J\succeq_v^FI$, expand the chains establishing the latter comparison. Their endpoint-record differences telescope to $\chi_J-\chi_I$. Adding the original strict direct comparison, whose difference is $\chi_I-\chi_J$, gives an A7 cancellation cycle with a strict comparison, a contradiction. Thus $I\succ_v^{\mathrm{dir}}J$ implies $I\succ_v^FJ$. If the direct verdict is indifference, the same one-piece argument in both orientations gives derived indifference.

All zero changes have the same endpoint record. A8 makes any two zero changes comparable. If one were strictly ranked above another, expand the chains supporting that strict derived comparison. The total endpoint-record difference is zero, while at least one constituent direct comparison must be strict; otherwise all chains could be reversed. This contradicts A7. Hence all zero changes are mutually indifferent and define the class $0$.

For reversal, suppose $I\succeq_v^FJ$ but $-J\not\succeq_v^F-I$. Completeness gives $-I\succ_v^F-J$. Expanding the two derived comparisons into constituent direct comparisons, the combined endpoint record of $(I,-I)$ equals that of $(J,-J)$, namely zero. A7 is therefore violated. Hence $I\succeq_v^FJ$ iff $-J\succeq_v^F-I$.

Finally, suppose $[a,b]_A\succeq_v^F[a',b']_B$ and $[b,c]_A\succeq_v^F[b',c']_B$. If $[a',c']_B\succ_v^F[a,c]_A$, expansion of the three comparisons gives a cancellation cycle because
\[
\chi_{[a,b]_A}+\chi_{[b,c]_A}=\chi_{[a,c]_A},
\qquad
\chi_{[a',b']_B}+\chi_{[b',c']_B}=\chi_{[a',c']_B}.
\]
This contradicts A7. Completeness gives the stated concatenation conclusion; if one premise is strict, the same argument rules out an indifferent conclusion. This proves all parts of the lemma.
\end{proof}

The derived order therefore supplies the qualitative algebra required for measurement. The next lemma provides the abstract difference-measurement result used to construct cardinal cell scales. The only external measurement-theoretic input in its proof is the within-cell algebraic-difference representation of \citet[Ch.~4, Sec.~4.4.1, Def.~3 and Thm.~2; proof in Sec.~4.5.2]{KLST1971}. Under the identification of an ordered pair \(ab\) with the tagged change \([a,b]_A\), KLST Axioms 1--5 correspond respectively to weak order, reversal, concatenation, interval solvability, and the bounded-standard-sequence condition. All cross-cell cardinal alignment is established separately in the proof. In the application below, in the proof of \Cref{lem:additive}, \(D_A=I_v\) for every cell \(A\); allowing cell-specific domains keeps the measurement lemma abstract.

\begin{lemma}\label{lem:measurement-bridge}
Consider finitely many cells. For each cell \(A\), let \(D_A\) be an open interval of admissible endpoint values, so that a tagged directed change in cell \(A\) is a pair \([a,b]_A\) with \(a,b\in D_A\). Let \(\succeq\) be a weak order on the collection of all such tagged directed changes.
 Suppose all zero changes are mutually indifferent and denote their class by $0$. Define standard and strictly bounded sequences exactly as above, with $D_A$ replacing $I_v$ and $\succeq$ replacing $\succeq_v^F$. Assume: (i) reversal; (ii) concatenation with preservation of strictness; (iii) interval solvability: for every cell $A$, every $a,b\in D_A$, and every tagged directed change $I$ (in any cell), whenever $[a,b]_A\succeq I\succeq0$, there are $x,y\in D_A$ such that $[a,x]_A\sim I\sim[y,b]_A$; (iv) no infinite standard sequence is strictly bounded; and (v) for every pair of cells $A,B$, the comparison set
\(
\{(a,b,c,d)\in D_A^2\times D_B^2:[a,b]_A\succeq[c,d]_B\}
\)
is relatively closed; and (vi) scalar orientation holds: for every cell $A$ and $a,b\in D_A$, $a>b$ if and only if $[a,b]_A\succ0$. Then there are continuous strictly increasing functions $\psi_A:D_A\to\mathbb R$ such that
\[
[a,b]_A\succeq[c,d]_B
\quad\Longleftrightarrow\quad
\psi_A(a)-\psi_A(b)\ge\psi_B(c)-\psi_B(d).
\]
The family is unique up to cell-specific additive constants and one common positive multiplicative factor.
\end{lemma}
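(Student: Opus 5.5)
The plan has three stages. First, apply the algebraic-difference theorem of \citet{KLST1971} separately to each cell. Second, fix the orientation and continuity of each resulting scale. Third, align the cell scales onto a common unit using the cross-cell comparisons.

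\emph{Step 1 (within-cell scales).} For each cell $A$, restrict $\succeq$ to tagged changes in $A$ and identify $[a,b]_A$ with the ordered pair $ab$. Weak order is inherited. Reversal, concatenation, interval solvability (applied with $I$ in cell $A$) and the Archimedean condition (iv) are KLST Axioms 1--5 for algebraic-difference structures. KLST Ch.~4, Thm.~2 therefore gives $\varphi_A:D_A\to\mathbb R$ with
\[
[a,b]_A\succeq[c,d]_A\iff\varphi_A(a)-\varphi_A(b)\ge\varphi_A(c)-\varphi_A(d),
\]
unique up to positive affine transformation.

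\emph{Step 2 (orientation and continuity).} Scalar orientation (vi) gives $a>b\iff[a,b]_A\succ0=[b,b]_A\iff\varphi_A(a)>\varphi_A(b)$, so $\varphi_A$ is strictly increasing. A strictly increasing function on an interval is continuous iff its image has no gaps. Suppose there were a jump at $a_0$. Take $a_n\uparrow a_0$ with $[a_n,b]_A\preceq[c,d]_A$ for a fixed target change $[c,d]_A$ whose $\varphi_A$-difference lies strictly inside the gap. Such a target exists by interval solvability applied to a change straddling $a_0$. Closedness (v), with $B=A$, passes the inequality to the limit. This yields $[a_0,b]_A\preceq[c,d]_A$, contradicting the jump.

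\emph{Step 3 (cross-cell alignment).} This is where I expect the main work. Fix a reference cell $A_0$ and another cell $B$, and normalize $\varphi_{A_0}(v)=0$ at some interior point. Use interval solvability across cells: for any $[c,d]_B\succeq0$ dominated by a sufficiently large $A_0$-change, there is $x$ with $[x,b]_{A_0}\sim[c,d]_B$. Define $g_B(c,d):=\varphi_{A_0}(x)-\varphi_{A_0}(b)$. Transitivity of the weak order makes $g_B$ independent of the choice of $b$ and $x$. Concatenation with strictness and reversal make $g_B$ additive over adjacent $B$-changes, i.e.\ $g_B(c,e)=g_B(c,d)+g_B(d,e)$. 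Within-cell comparisons show that $g_B$ orders $B$-changes exactly as $\varphi_B$-differences do. Hence $g_B(c,d)=h(\varphi_B(c)-\varphi_B(d))$ for a strictly increasing, additive (Cauchy) $h$ on an interval of differences. Continuity from Step 2 and closedness (v) make $h$ monotone, hence linear: $h(t)=\beta_Bt$ with $\beta_B>0$. Set $\psi_B:=\beta_B\varphi_B$, with arbitrary additive constants. The representation for nonnegative cross-cell comparisons then follows from weak order. Negative changes are handled by reversal. For changes too large to be solved directly, I would subdivide them into small pieces (concatenation) and use additivity. The delicate part is keeping solvability available on the bounded open domains $D_A$, which requires working with small differences near an interior point and then extending by concatenation.

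\emph{Uniqueness.} Within-cell uniqueness from KLST leaves a positive affine freedom in each cell. Any nontrivial cross-cell indifference $[a,b]_A\sim[c,d]_B$ with nonzero differences then forces the multiplicative factors of $A$ and $B$ to be equal. Only cell-specific origins and one common unit remain.
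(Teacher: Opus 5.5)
Your proposal is correct and follows essentially the same route as the paper: KLST within each cell, orientation from (vi), continuity of each $\varphi_A$, a well-defined, strictly increasing, locally additive matching map into a reference cell made linear by the monotone Cauchy argument, extension to large changes by subdivision and concatenation, and uniqueness from a single nonzero cross-cell indifference. The one variation is that you get continuity from closedness (v), whereas the paper gets it from solvability at the jump point; in your version the target change is best produced by taking $b$ to be a continuity point of $\varphi_A$ and using $[a_0,d]_A$ with $d$ slightly above $b$, since your appeal to solvability already presupposes a difference smaller than the jump.
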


\begin{proof}
Restrict the relation first to one cell $A$. Identify the ordered pair $ab\in D_A\times D_A$ in \citet[Ch.~4, Sec.~4.4.1, Def.~3 and Thm.~2; proof in Sec.~4.5.2]{KLST1971} with the tagged change $[a,b]_A$. KLST Axioms 1--5 are supplied respectively by the weak-order hypothesis, reversal, concatenation, interval solvability, and the exclusion of infinite strictly bounded standard sequences. Hence the algebraic-difference theorem yields a function $\varphi_A:D_A\to\mathbb R$, unique up to positive affine transformation, such that, for all $a,b,c,d\in D_A$,
\[
[a,b]_A\succeq[c,d]_A
\quad\Longleftrightarrow\quad
\varphi_A(a)-\varphi_A(b)
\ge
\varphi_A(c)-\varphi_A(d).
\]
This is the only step imported from KLST. Scalar orientation then selects the increasing orientation of the scale, while continuity is established from the present relative-closedness and solvability assumptions as follows. To make the continuity step explicit, fix $b\in D_A$ and vary the first endpoint $a$. Assumption (v) makes the weak upper and lower comparison sections of the changes $[a,b]_A$ relatively closed, while completeness makes the corresponding strict sections relatively open. Suppose the strictly increasing representative $\varphi_A$ had a left jump at some $a^*\in D_A$. A monotone real function has continuity points arbitrarily close to $a^*$. Hence one can choose $r>s$ in $D_A$, in the same cell $A$, so close to such a continuity point that the positive difference
\(
\varphi_A(r)-\varphi_A(s)
\)
is strictly smaller than the size of the jump. Put $I=[r,s]_A$. Choosing $b<a^*$ sufficiently close makes $[a^*,b]_A\succ I\succ0$. Interval solvability then requires an $x\in D_A$ with $[a^*,x]_A\sim I$, and therefore
\(
\varphi_A(a^*)-\varphi_A(x)
=
\varphi_A(r)-\varphi_A(s).
\)
Thus $\varphi_A(x)$ lies strictly between the left limit of $\varphi_A$ at $a^*$ and $\varphi_A(a^*)$, which is impossible for a strictly increasing function with a left jump at $a^*$. The symmetric solvability clause rules out right jumps in the same way. Hence $\varphi_A$ has neither left nor right jumps and is continuous on $D_A$. Relative closedness ensures that the limiting weak comparisons used in this argument remain valid. No cross-cell cardinal comparison is imported from the cited theorem. Write the resulting within-cell difference as
\(
d_A(a,b):=\varphi_A(a)-\varphi_A(b).
\)

It remains to place the cell scales on one common unit. We first record the elementary topological implication used below. Because $\succeq$ is complete and every cross-cell comparison set is relatively closed, its strict part is relatively open: if $I\succ J$ and sequences $I_n\to I$, $J_n\to J$ failed eventually to satisfy $I_n\succ J_n$, completeness would give a subsequence with $J_n\succeq I_n$, and relative closedness would imply $J\succeq I$, a contradiction.

Fix a reference cell $A_0$ and a nonzero reference change $P=[p^+,p^0]_{A_0}\succ0$. By this openness of strict preference, changes sufficiently close to a zero change in any cell lie strictly between $P$ and $-P$. Interval solvability in the reference cell therefore implies that, for every cell $A$, there is a neighborhood $N_A$ of zero such that each $x\in N_A$ can be realized by an $A$-change and matched by indifference to an $A_0$-change. For $x<0$ use reversal of the corresponding positive comparison.

For $x\in N_A$, choose an $A$-change $I$ with $d_A(I)=x$ and an $A_0$-change $J$ with $I\sim J$, and define
\(
h_A(x):=d_{A_0}(J).
\)
This is well defined. If $J$ and $J'$ match two $A$-changes having the same $d_A$-value, the within-cell representation makes those two $A$-changes indifferent; transitivity then gives $J\sim J'$, so $d_{A_0}(J)=d_{A_0}(J')$. Cross-cell order preservation similarly makes $h_A$ strictly increasing and gives $h_A(0)=0$.

The matching values approach zero with $x$. To see this, realize a sequence $x_n\to0$ by changes whose endpoints converge to the same interior point of $D_A$, and choose their reference matches inside the fixed reference interval between $p^0$ and $p^+$. If the represented reference differences failed to converge to zero, compactness of that fixed interval would give a subsequential limit equal to a nonzero reference change. Relative closedness would then make that nonzero change indifferent to a zero change, contradicting scalar orientation. Hence $h_A(x)\to0$ as $x\to0$.

Take sufficiently small $x,y,x+y\in N_A$. By openness of $D_A$ and continuity of $\varphi_A$, realize $x$ and $y$ by adjacent $A$-changes. Since $h_A(x)$ and $h_A(y)$ are also sufficiently small, openness of $D_{A_0}$ and continuity of $\varphi_{A_0}$ allow adjacent reference-cell changes with represented values $h_A(x)$ and $h_A(y)$. Each reference change is indifferent to the corresponding $A$-change by the definition and well-definedness of $h_A$. Concatenation in both directions therefore makes the combined changes indifferent, so
\(
h_A(x+y)=h_A(x)+h_A(y).
\)
Thus $h_A$ is strictly increasing and locally additive. The monotone Cauchy argument gives
\(
h_A(x)=\lambda_Ax
\)
on a neighborhood of zero for some $\lambda_A>0$. Put $\psi_A:=\lambda_A\varphi_A$ (and normalize the reference cell so that $\lambda_{A_0}=1$). We now verify that these calibrated scales represent \emph{all}, not only sufficiently small, cross-cell comparisons.

Take arbitrary changes $I=[a,b]_A$ and $J=[c,d]_B$, and write
\(
X=\psi_A(a)-\psi_A(b)
\;\text{and}\;
Y=\psi_B(c)-\psi_B(d).
\)
Choose a common integer $m$ large enough that $X/m$ and $Y/m$ lie in the respective locally calibrated ranges. Since each $\psi_A$ is continuous and strictly increasing on an interval, subdivide $I$ and $J$ monotonically into $m$ pieces $I_1,\ldots,I_m$ and $J_1,\ldots,J_m$ satisfying
\(
\Delta\psi_A(I_k)=X/m
\;\text{and}\;
\Delta\psi_B(J_k)=Y/m
\;\text{for every }k.
\)
Each small piece can be matched by indifference to a reference-cell change whose $\psi_{A_0}$-difference equals its calibrated value. Hence, if $X\ge Y$, the within-reference-cell difference representation and transitivity give $I_k\succeq J_k$ for every $k$; repeated concatenation yields $I\succeq J$. If $X<Y$, the same argument with the roles reversed gives $J\succ I$, using preservation of strictness under concatenation. Therefore
\(
I\succeq J
\;\Longleftrightarrow\;
\Delta\psi_A(I)\ge\Delta\psi_B(J),
\)
which establishes the asserted cross-cell representation on the full difference domains.

For uniqueness, let $\{\widetilde\psi_A\}_A$ be any other family representing the same cross-cell difference order. Within-cell uniqueness of the algebraic-difference representation gives, for each cell $A$,
\(
\widetilde\psi_A=\alpha_A+\rho_A\psi_A,
\; \rho_A>0.
\)
Fix the reference cell $A_0$. For any other cell $A$, choose a nonzero sufficiently small $A$-change $I$ and a reference-cell change $J$ with $I\sim J$; the construction above guarantees such a pair. Under the constructed common-unit family, the corresponding differences satisfy
\(
\Delta\psi_A(I)=\Delta\psi_{A_0}(J)\ne0.
\)
Because the alternative family represents the same indifference,
\(
\rho_A\,\Delta\psi_A(I)=\rho_{A_0}\,\Delta\psi_{A_0}(J).
\)
The common nonzero difference therefore implies $\rho_A=\rho_{A_0}$. Since $A$ was arbitrary, all cell-specific multipliers coincide with one common $\lambda>0$, while the origins $\alpha_A$ remain unrestricted because only differences are represented. Hence any alternative family has the form
\(
\widetilde\psi_A=\alpha_A+\lambda\psi_A
\)
with the same $\lambda$ for every cell, proving the stated common-unit uniqueness.
\end{proof}

\begin{proof}[\textbf{Proof of \Cref{lem:additive}}]
By \Cref{lem:derived-order}, A6--A8 give a weak order with a common zero class, reversal, and concatenation. A9 gives relative closedness, the interval-solvability condition in \Cref{lem:measurement-bridge}, and the bounded-standard-sequence condition. It remains to verify scalar orientation. If $a>b$, compare $[a,b]_A$ with a zero change in another cell $B$. If $A\triangleleft B$, the direct experiment raises the $A$-coordinate and leaves the $B$-coordinate unchanged, so A3 gives a strict improvement over $\ell_v$. If $B\triangleleft A$, the definition of $\succeq_v^{\mathrm{dir}}$ implements the reversed $A$-change, which is a strict worsening, and reverses the comparison with $\ell_v$. In either ordering, $[a,b]_A\succ_v^{\mathrm{dir}}0$. Then \Cref{lem:derived-order} gives $[a,b]_A\succ_v^F0$. Reversal gives the opposite ranking when $a<b$.

All hypotheses of \Cref{lem:measurement-bridge} therefore hold with $D_A=I_v$. We obtain continuous strictly increasing $\psi_{v,A}:I_v\to\mathbb R$ satisfying \eqref{eq:local-additive}. \Cref{lem:measurement-bridge} also gives the uniqueness result
\(
\widetilde\psi_{v,A}=\alpha_A+\lambda\psi_{v,A},
\; \lambda>0,
\)
with the same $\lambda$ for every cell. Subtracting $\psi_{v,A}(v)$ normalizes each scale to zero at $v$ without changing represented differences.
\end{proof}

\begin{proof}[\textbf{Proof of \Cref{lem:derived-exact-balance}}]
Fix directly measurable changes $I=[a,b]_A$ and $J=[c,d]_B$ in distinct cells with $I\sim_v^FJ$. By \Cref{lem:local-solvability}, the forward compensation experiment is supportable. Suppose first that $A\triangleleft B$. For any supporting $v$-indifferent background with coordinates $(b,c)$, the same post-change block act determines both orientations of the direct comparison: $I\succeq_v^{\mathrm{dir}}J$ exactly when it is weakly above $\ell_v$, while $J\succeq_v^{\mathrm{dir}}I$ exactly when it is weakly below $\ell_v$. If the post-change act were strictly above $\ell_v$, then $I\succ_v^{\mathrm{dir}}J$; if it were strictly below, then $J\succ_v^{\mathrm{dir}}I$. Either strict direct ranking would imply the same strict derived ranking by \Cref{lem:derived-order}(i), contradicting $I\sim_v^FJ$. Hence the post-change act is indifferent to $\ell_v$, so the pair is locally exactly balanced.

Now suppose $B\triangleleft A$. By the definition of the direct order, both comparisons are oriented through the reverse compensation experiment: start from a supporting $v$-indifferent block vector with coordinates $(a,d)$ and move to $(b,c)$. If the terminal act were strictly below $\ell_v$, then $I\succ_v^{\mathrm{dir}}J$; if it were strictly above, then $J\succ_v^{\mathrm{dir}}I$. Either conclusion would contradict $I\sim_v^FJ$ through \Cref{lem:derived-order}(i). The terminal vector with coordinates \((b,c)\) is therefore itself \(v\)-indifferent. It can consequently serve as a supporting background for the compensation experiment defining local exact balance for the pair \((I,J)\): applying \(I\) in cell \(A\) and the reverse of \(J\) in cell \(B\) moves the coordinates from \((b,c)\) to \((a,d)\), which is precisely the original \(v\)-indifferent vector. Hence \((I,J)\) is locally exactly balanced in this ordering as well.
\end{proof}

\begin{proof}[\textbf{Proof of \Cref{lem:common-scale}}]
Choose a reference cell $A_0$ and put $\phi_v=\psi_{v,A_0}$. For any cell $A$, define $F_A=\psi_{v,A}\circ\phi_v^{-1}$ on the interval $\phi_v(I_v)$. A10 says that $F_A$ preserves the ordering of all objective differences. Hence equal reference-scale differences have equal $F_A$-differences. Therefore $F_A(x+t)-F_A(x)$ depends only on $t$ whenever the displayed points remain in $\phi_v(I_v)$. The induced increment map is locally additive and continuous, so it is linear near zero with some slope $\beta_{v,A}>0$. Any larger feasible difference can be subdivided into sufficiently small pieces; summing the local identities shows that the same slope applies throughout the interval. Thus
\(
F_A(x)=\alpha_{v,A}+\beta_{v,A}x.
\)
After normalizing each $\psi_{v,A}$ and $\phi_v$ to vanish at $v$, $\alpha_{v,A}=0$, proving $\psi_{v,A}=\beta_{v,A}\phi_v$.
\end{proof}

\subsection{A finite balancing argument}\label[appendix]{app:finite-transfer}

The following elementary lemma converts an aggregate zero condition into a finite sequence of pairwise weighted cancellations; it will be used in the local, global-DI, and stable-attitude characterization proofs.

\begin{lemma}\label{lem:balancing-schedule}
Let $q_1,\ldots,q_K>0$ and let $y_1,\ldots,y_K\in\mathbb R$ satisfy $\sum_iq_i y_i=0$. Starting from zero, one can reach $y$ in finitely many steps, each step changing one positive remaining component and one negative remaining component so that their weighted changes cancel, with each changed component remaining between its current value and its target value.
\end{lemma}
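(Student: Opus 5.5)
The plan is a greedy induction on the number of components that have not yet reached their targets. It mirrors the argument already used for \Cref{lem:within-cell-connect}, but now with weights $q_i$ in place of probabilities and with the target vector $y$ playing the role of the endpoint.

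Start from the current vector $x^0=0$. Maintain the invariant $\sum_i q_i x_i^k=0$, which holds initially. Write $D^k=\{i:x_i^k\neq y_i\}$ for the set of unfinished components.

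If $D^k\neq\varnothing$, combine the invariant with $\sum_i q_iy_i=0$ to get
\[
\sum_i q_i\,(y_i-x_i^k)=0 .
\]
Since this sum vanishes and some term is nonzero, there is an index $s$ with $y_s-x_s^k>0$ and an index $t$ with $y_t-x_t^k<0$. Read ``positive remaining component'' as one whose remaining change toward its target is positive, and ``negative'' likewise. Then set
\[
\lambda=\min\{q_s(y_s-x_s^k),\;q_t(x_t^k-y_t)\}>0,
\]
and update only these two coordinates:
\[
x_s^{k+1}=x_s^k+\lambda/q_s,\qquad x_t^{k+1}=x_t^k-\lambda/q_t .
\]
The two weighted changes are $+\lambda$ and $-\lambda$, so they cancel and the invariant is preserved. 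The choice of $\lambda$ as the minimum guarantees that neither coordinate overshoots: each moved coordinate ends between its current value and its target. It also guarantees that at least one of $s,t$ lands exactly on its target.

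A coordinate that has reached its target is never selected again, because selection requires a nonzero remaining difference. Hence $|D^k|$ strictly decreases at every step. The procedure therefore stops after at most $K$ steps, and it stops exactly when $D^k=\varnothing$, that is, when $x^k=y$.

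Monotonicity along each coordinate also gives a stronger property, useful later: every intermediate value lies between $0$ and $y_i$. This is what the applications need to keep block coordinates inside $J_v$ or $[0,1]$.

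I expect no real obstacle here. The only care points are to interpret ``positive/negative remaining component'' as the sign of $y_i-x_i^k$, and to make sure the minimum in $\lambda$ both prevents overshooting and forces termination.
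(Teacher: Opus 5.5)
Your proposal is correct and is essentially the paper's own greedy argument: pick one coordinate with a positive residual and one with a negative residual, move both with cancelling weighted changes until one of them reaches its target, and conclude termination because each step completes at least one coordinate. Your explicit choice of $\lambda$ as a minimum, and your remark that each coordinate moves monotonically from $0$ toward $y_i$, just make the paper's terse argument precise.
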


\begin{proof}
At any unfinished step choose one positive and one negative residual and move the pair until one residual becomes zero while preserving their weighted sum. At least one additional coordinate is completed at each step, so the procedure ends after finitely many steps.
\end{proof}

\section{Proofs for the local and global doubly implicit characterizations}\label[appendix]{app:di-baseline-proofs}

\subsection{Local doubly implicit characterization}\label[appendix]{app:baseline}

\begin{proof}[\textbf{Proof of \Cref{prop:local-baseline}}]
We prove the two directions separately.

\emph{A1--A11 imply regular neutrally separated local DI.}
A1--A3 give the normalized vNM index $u$, statewise canonicalization, strict state monotonicity, and the continuous certainty-equivalent index $V$. A4 gives the fixed revealed partition and, through \eqref{eq:p-from-rates}, the full-support within-cell beliefs $p_{v,A}$.

Fix \(v\). By \Cref{lem:local-mean-suff}, whenever the vector of cell means lies in \(J_v^{\T}\), an act can be replaced by the block act with the same vector of recovered conditional means without changing its ranking relative to \(\ell_v\).
By \Cref{lem:additive,lem:common-scale}, A6--A10 give the directly measured scale $\phi_v:I_v\to\mathbb R$, the positive coefficients $\beta_{v,A}$, and the normalized recovered weights $q_v(A)=\beta_{v,A}/\sum_C\beta_{v,C}$. By A11(ii), choose $\Theta_v\in\mathcal E_v^{L}$; by definition $\Theta_v=\phi_v$ on $I_v$.

We first characterize the $v$-indifferent block acts in $J_v^{\T}$. Use the auxiliary compact core $K_v\subset I_v$ fixed above, with $v\in\operatorname{int}K_v$. Suppose
\(
\sum_Aq_v(A)\Theta_v(z_A)=\Theta_v(v).
\)
Apply \Cref{lem:balancing-schedule} to the targets $y_A=\Theta_v(z_A)-\Theta_v(v)$. Each scheduled step changes two cells, say $A$ and $B$, by $\Theta_v$-increments $\Delta_A$ and $\Delta_B$ satisfying
\(
q_v(A)\Delta_A+q_v(B)\Delta_B=0.
\)
Because $v\in\operatorname{int}K_v$ and $\Theta_v=\phi_v$ on $K_v$, the set of differences generated inside $K_v$ contains a neighborhood of zero. Choose a common integer $N$ large enough that both $|\Delta_A|/N$ and $|\Delta_B|/N$ lie in that difference range, and subdivide the two scheduled coordinate moves into $N$ monotone pieces having respective $\Theta_v$-increments $\Delta_A/N$ and $\Delta_B/N$. Each paired substep therefore preserves weighted cancellation:
\(
q_v(A)\frac{\Delta_A}{N}
+
q_v(B)\frac{\Delta_B}{N}
=0.
\)

Because $\Theta_v=\phi_v$ on $I_v$, and hence on $K_v$, choose core changes in cells $A$ and $B$ with these same respective differences. Their weighted measured values cancel, so \eqref{eq:local-additive} makes them indifferent under $\sim_v^F$. By \Cref{lem:derived-exact-balance}, the core pair is locally exactly balanced. Write the scheduled coordinate moves in the same orientation as $I_k=[x_k,x_{k-1}]_A$ and $-J_k=[y_k,y_{k-1}]_B$, so that $J_k=[y_{k-1},y_k]_B$ records the compensating change. The current $v$-indifferent block vector supplies the starting coordinates $(x_{k-1},y_{k-1})$, hence the scheduled pair is locally supportable. A11(ii) transports the core balance to $(I_k,J_k)$; A11(i) makes the verdict independent of the other cell coordinates throughout $J_v$. Every coordinate remains between its value in $v\one$ and its value in $z$, so every intermediate block vector remains in the coordinate box spanned by $v\one$ and $z$ and therefore lies in $J_v^{\T}$. Iterating the finite schedule reaches $z$ while preserving indifference.

If the displayed sum is strictly above $\Theta_v(v)$, choose $c\in J_v$ below $v$ and below every coordinate of $z$. The aggregate at $c\one$ equals $\Theta_v(c)<\Theta_v(v)$. Continuity along the segment from $c\one$ to $z$ gives a block vector $z^0\le z$, strict in at least one coordinate, with aggregate exactly $\Theta_v(v)$. The zero case gives $\widehat f_{z^0}\sim\ell_v$, and strict state monotonicity gives $\widehat f_z\succ\ell_v$. The case of an aggregate below $\Theta_v(v)$ is symmetric, using a constant block vector above both $v$ and all coordinates of $z$. Hence, for every $z\in J_v^{\T}$ and every pair $(R,r)\in\{(\succ,>),(\sim,=),(\prec,<)\}$,
\[
\widehat f_z\,R\,\ell_v
\quad\Longleftrightarrow\quad
\sum_Aq_v(A)\Theta_v(z_A)\,r\,\Theta_v(v).
\]
Local within-cell mean sufficiency now gives \eqref{eq:local-di} for arbitrary acts whose cell-mean vector lies in $J_v^{\T}$.

It remains to verify local neutral separation. If distinct revealed cells $A,B$ and states $s\in A$, $t\in B$ satisfied the local representation-side two-sided neutral equality for all sufficiently small $\varepsilon$, the local block formula at the constant background $v\one$ would make the corresponding two-cell block perturbation indifferent to $\ell_v$. \Cref{lem:local-mean-suff} would transfer that verdict to the two-state perturbation with the same vector of cell means, yielding $s\sim_v^0t$, contrary to $A\ne B$. Thus the local representation is neutrally separated.

\emph{Regular neutrally separated local DI implies A1--A11.}
Clause (i) of \Cref{def:local-di}, continuity of $V$, and the vNM restriction on constants give A1; statewise utility invariance gives A2; clause (ii) gives A3. If $s,t$ belong to one represented cell $A$, the rate $p_{v,A}(s)/p_{v,A}(t)$ leaves the cell mean unchanged, so the local threshold formula at the constant act $\ell_v$ makes every sufficiently small two-sided transfer at that rate indifferent to $\ell_v$. If $s,t$ lie in distinct represented cells, local neutral separation rules out such a rate. Because the represented partition is fixed across $v$ and contains at least three cells, the behavioral neutral-link relation coincides with it. The probability ratios automatically satisfy the multiplicative rate identities required by A4; reciprocity follows separately from the two-sided neutral-transfer definition and uniqueness under A3.

For A5, a within-cell transfer at the revealed rate leaves the complete vector of represented cell means unchanged. Whenever that vector lies in $J_v^{\T}$, the local threshold formula therefore preserves indifference. On $I_v$, put
\(
W_v([a,b]_A)=q_v(A)[\Theta_v(a)-\Theta_v(b)].
\)
The local DI formula gives
\(
I\succeq_v^{\mathrm{dir}}J
\;\Longleftrightarrow\;
W_v(I)\ge W_v(J),
\)
independently of the untouched supporting coordinates; hence A6 holds. If a finite list of direct comparisons has the same endpoint records on the two sides, additivity of $W_v$ makes the two total represented changes equal. Therefore none of the component inequalities can be strict, giving A7.

For A8, directly measurable changes in distinct cells are directly comparable through the displayed $W_v$ order. If two changes $I,J$ lie in the same cell, choose a third cell. Suppose first that $W_v(I)\ge W_v(J)$. Choose a common integer $m$ large enough that $W_v(I)/m$ and $W_v(J)/m$ both lie in the neighborhood of zero generated by sufficiently small third-cell changes. By continuity and strict monotonicity of $\Theta_v$, subdivide $I$ into $m$ monotone pieces each having represented value $W_v(I)/m$, and subdivide $J$ into $m$ monotone pieces each having value $W_v(J)/m$. For each matched pair choose a third-cell change whose represented value lies between these two values. Each $I$-piece is then linked by a direct chain to the corresponding $J$-piece with the same orientation. The case $W_v(J)\ge W_v(I)$ is symmetric. Hence matched decompositions with a common orientation exist, giving A8. The same construction shows that the derived order is exactly the $W_v$ order. Continuity of $\Theta_v$ gives the closedness part of A9; intermediate-value solvability gives its solvability clause; and a nonzero standard sequence has represented accumulated differences growing linearly in the number of steps, so it cannot remain strictly between two fixed finite bounds. Hence A9 holds.

Same-cell comparisons cancel the positive factor $q_v(A)$, so the objective-difference order is the same in every cell, giving A10. By the local DI formula, the sign of every supportable two-cell compensation in $J_v$ depends only on the two weighted $\Theta_v$-differences. This gives A11(i) directly on all of $J_v$; restricting it to endpoints in $I_v$ gives A6. The scale recovered on $I_v$ is unique up to a positive affine transformation; normalize it so that $\phi_v=\Theta_v$ on $I_v$. Then $\Theta_v\in\mathcal E_v(\phi_v;I_v,J_v)$, and equal $\Theta_v$-differences in the same two cells preserve local exact balance, giving A11(ii). Thus A1--A11 hold.
\end{proof}

\subsection{Global behavioral calibratability}\label[appendix]{app:gbc}

\begin{lemma}\label{lem:di-global-balance-suff}
Suppose A1--A12 hold and let $\Phi_v$ be supplied by A12(iii). If a balance-feasible pair $I=[a,b]_A$, $J=[c,d]_B$ satisfies
\(
q_v(A)[\Phi_v(a)-\Phi_v(b)]
=
q_v(B)[\Phi_v(c)-\Phi_v(d)],
\)
then the pair is globally exactly balanced.
\end{lemma}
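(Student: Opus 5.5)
The plan is to transport a small, directly measured exact balance onto the given pair by way of A12(iii). The idea is to build a \emph{core} pair inside the auxiliary compact interval $K_v\subset I_v$ whose weighted $\phi_v$-differences cancel, certify that the core pair is exactly balanced, and then move from the core pair to $(I,J)$ in several stages. Each stage replaces one change by another with the same $\Phi_v$-difference, so the replacement clause of A12(iii) preserves balance at every stage. The main obstacle is magnitude. The given differences $\Phi_v(a)-\Phi_v(b)$ and $\Phi_v(c)-\Phi_v(d)$ may be far too large to be realized within $K_v$, whose difference range is only a neighborhood of zero. So the pair cannot be matched to a core pair in one step; it has to be subdivided, and the subdivision has to be compatible with global exact balance.

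\textbf{Step 1 (zero case).} If the common weighted difference is zero, then strict monotonicity of $\Phi_v$ forces $a=b$ and $c=d$. In that case $z^{I,-J}=z$ for every $z\in\mathcal Z_v^{A,B}(b,c)$, and this set is nonempty by balance-feasibility, so the pair is balanced and nothing remains to prove.

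\textbf{Step 2 (core pair).} Otherwise, write $X=\Phi_v(a)-\Phi_v(b)$ and $Y=\Phi_v(c)-\Phi_v(d)$, so that $q_v(A)X=q_v(B)Y$. Choose $N$ large enough that $X/N$ and $Y/N$ lie in the difference range of $\phi_v$ on $K_v$. Pick core changes $I^\circ=[a^\circ,b^\circ]_A$ and $J^\circ=[c^\circ,d^\circ]_B$ in $K_v$ with these differences. Because $\Phi_v=\phi_v$ on $I_v$ and $\psi_{v,C}=\beta_{v,C}\phi_v$ with $q_v\propto\beta_v$, \eqref{eq:local-additive} gives $I^\circ\sim_v^FJ^\circ$. \Cref{lem:derived-exact-balance} then makes the core pair locally exactly balanced. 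Its supporting block vector lies in $J_v^{\T}\subset[0,1]^{\T}$, so the core pair is also globally exactly balanced.

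\textbf{Step 3 (subdivision and chaining).} Subdivide $I$ monotonically into $N$ pieces $I_k=[x_k,x_{k-1}]_A$, each with $\Phi_v$-difference $X/N$, and subdivide $J$ into $J_k=[y_{k-1},y_k]_B$, each with $\Phi_v$-difference $Y/N$. Continuity and strict monotonicity of $\Phi_v$ make both subdivisions possible. Now fix $z\in\mathcal Z_v^{A,B}(b,c)$ and run the schedule from $z$: at stage $k$, apply $I_k$ together with the reverse of $J_k$. Suppose the current block vector is $v$-indifferent. Then the pair $(I_k,J_k)$ is balance-feasible at level $v$, because the current vector is a supporting background with coordinates $(x_{k-1},y_{k-1})$. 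The pair $(I_k,J_k)$ has the same two $\Phi_v$-differences as the globally balanced core pair, so A12(iii), applied as a two-sided replacement, makes $(I_k,J_k)$ globally exactly balanced. A12(ii) makes that verdict independent of the background, so in particular it holds on the current vector, and the next vector is again $v$-indifferent.

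\textbf{Step 4 (conclusion).} Induction over $k$ shows that after $N$ steps the vector $z^{I,-J}$ is reached and satisfies $\widehat f_{z^{I,-J}}\sim\ell_v$. This is exactly global exact balance of $(I,J)$.

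The delicate point is the transport in Step 3. It requires the core pair to be globally, not merely locally, exactly balanced, and it requires each intermediate pair to be balance-feasible. The first requirement holds because local support uses backgrounds in $J_v^{\T}\subset[0,1]^{\T}$. The second holds because it is generated inductively by the $v$-indifference of the current vector. If one prefers to avoid two-sided replacement, the transport can instead be done in two one-sided replacements (first $I$, then $J$), but each intermediate pair then needs its own balance-feasibility check. That is why I would use the two-sided replacement directly.
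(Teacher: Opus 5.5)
Your proposal is correct and follows essentially the same route as the paper. The paper likewise subdivides both changes into $N$ pieces with equal $\Phi_v$-increments, certifies a core pair in $K_v$ as locally and hence globally exactly balanced via \eqref{eq:local-additive} and \Cref{lem:derived-exact-balance}, and then chains A12(iii) and A12(ii) inductively from a supporting background. The only differences are cosmetic: you reuse one core pair for every step (the paper picks one per step, all with the same differences) and you treat the zero case separately, which the paper does not need.
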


\begin{proof}
Keep the ordered pair $(I,J)$ and its original starting coordinates $(b,c)$ throughout. Put
\[
\Delta_A=\Phi_v(a)-\Phi_v(b),
\qquad
\Delta_B=\Phi_v(c)-\Phi_v(d),
\]
so that $q_v(A)\Delta_A=q_v(B)\Delta_B$. Because A12(iii) requires $\Phi_v=\phi_v$ on $I_v$, the difference range generated inside the auxiliary compact core $K_v$ fixed above contains a neighborhood of zero. Choose a common integer $m$ large enough that $\Delta_A/m$ and $\Delta_B/m$ both lie in that range.

Define the actual coordinate paths by
\[
\Phi_v(x_k)=\Phi_v(b)+\frac{k}{m}\Delta_A,
\qquad
\Phi_v(y_k)=\Phi_v(c)-\frac{k}{m}\Delta_B,
\qquad k=0,\ldots,m.
\]
Thus $x_0=b$, $x_m=a$, $y_0=c$, and $y_m=d$. At step $k$ write
\[
I_k=[x_k,x_{k-1}]_A,
\qquad
J_k=[y_{k-1},y_k]_B.
\]
The joint compensation experiment applies $I_k$ and the reverse of $J_k$, so it moves the actual coordinates from $(x_{k-1},y_{k-1})$ to $(x_k,y_k)$. Moreover,
\[
q_v(A)\bigl[\Phi_v(x_k)-\Phi_v(x_{k-1})\bigr]
=
q_v(B)\bigl[\Phi_v(y_{k-1})-\Phi_v(y_k)\bigr].
\]
For every $k$, choose core changes $\bar I_k$ in cell $A$ and $\bar J_k$ in cell $B$, with endpoints inside $K_v$, having these same respective $\Phi_v$-differences. Since $\Phi_v=\phi_v$ on $K_v$, the two core changes have equal measured weighted values and hence are indifferent under $\sim_v^F$. By \Cref{lem:derived-exact-balance}, each core pair is locally, and hence globally, exactly balanced.

Because the original pair is balance-feasible, choose a supporting $v$-indifferent block vector whose $A$- and $B$-coordinates are $(x_0,y_0)=(b,c)$. Inductively, suppose the current $v$-indifferent block vector has coordinates $(x_{k-1},y_{k-1})$. It therefore makes $(I_k,J_k)$ balance-feasible. A12(iii) transports the global exact balance of $(\bar I_k,\bar J_k)$ to $(I_k,J_k)$, and A12(ii) makes that verdict independent of the particular supporting background. After applying $I_k$ together with $-J_k$, the new block vector remains $v$-indifferent and has coordinates $(x_k,y_k)$, so it supports the next step. Iterating to $k=m$ proves that the original pair is globally exactly balanced.
\end{proof}

\begin{proof}[\textbf{Proof of \Cref{thm:global-di-baseline}}]
Assume A1--A12 and fix $v$. Let $\Phi_v$ be supplied by A12(iii). By \Cref{lem:di-global-balance-suff}, every balance-feasible pair with cancelling weighted $\Phi_v$-differences is globally exactly balanced. Put
\(
H_v(z)=\sum_Aq_v(A)\Phi_v(z_A).
\)
If $H_v(z)=\Phi_v(v)$, apply \Cref{lem:balancing-schedule} to the targets $\Phi_v(z_A)-\Phi_v(v)$, starting from $v\one$. Translating each scheduled increment back through the strictly increasing $\Phi_v$ gives a finite sequence of pairwise coordinate changes that keeps every coordinate between its current value and its target value. Inductively, the current block vector is $v$-indifferent, so the next scheduled pair is balance-feasible on that current background. Its two weighted $\Phi_v$-increments cancel, and \Cref{lem:di-global-balance-suff} therefore makes the post-change block vector $v$-indifferent as well. The finite schedule reaches $z$, hence $\widehat f_z\sim\ell_v$.

If $H_v(z)>\Phi_v(v)$, consider the segment from $0\one$ to $z$. Since $H_v(0\one)=\Phi_v(0)<\Phi_v(v)$, continuity gives $z^0\le z$, strict in at least one coordinate, with $H_v(z^0)=\Phi_v(v)$. The zero case gives $\widehat f_{z^0}\sim\ell_v$ and A3 gives $\widehat f_z\succ\ell_v$. The case $H_v(z)<\Phi_v(v)$ is symmetric, using the segment from $z$ to $1\one$. Thus the global threshold formula holds for every block act.

It remains to pass from block acts to arbitrary acts. A12(i) gives within-cell mean invariance on the full domain. The same continuity argument used in \Cref{lem:local-mean-suff} shows that the ranking relative to \(\ell_v\) is constant along each feasible mean-preserving transfer segment: otherwise an indifferent intermediate act would exist and A12(i) would make the entire segment indifferent. \Cref{lem:within-cell-connect} then implies that any two acts with the same vector of recovered conditional means have the same ranking relative to \(\ell_v\). Replacing an arbitrary act by the block act with the same cell means therefore gives \eqref{eq:global-di}. Because \(\Phi_v\) agrees with the directly measured scale on \(I_v\), the local neutral-separation property already obtained under A1--A11 applies to sufficiently small cross-cell perturbations and supplies the neutral-separation clause of the global representation. This is the global DI representation.

Conversely, a regular neutrally separated global DI representation restricts to a regular neutrally separated local DI representation on every admissible \(J_v\), so \Cref{prop:local-baseline} gives A1--A11. The global formula makes within-cell mean-preserving transfers neutral everywhere, giving A12(i). For any two supporting backgrounds, the post-change act's ranking relative to \(\ell_v\) depends only on the same two weighted \(\Phi_v\)-differences and is therefore independent of the supporting background, giving A12(ii). On \(I_v\), the recovered directly measured attitude scale is positive-affinely equivalent to \(\Phi_v|_{I_v}\); normalize it so that they coincide. Equal \(\Phi_v\)-differences in the same two cells preserve global exact balance throughout the full domain, giving A12(iii). Hence all three clauses of A12 hold.
\end{proof}

\section{Proofs for the stability refinements}\label[appendix]{app:stability-proofs}

\subsection{Cross-level scale alignment}\label[appendix]{app:cross-level-scale-alignment}

\begin{proof}[\textbf{Proof of \Cref{lem:cross-level-scale-alignment}}]
If $I_v\cap I_w\neq\varnothing$, A13 makes the directly measured objective-difference orders coincide on the overlap. By the uniqueness of the difference measures recovered under A6--A10, there are unique constants $\alpha_{vw}\in\mathbb R$ and $\gamma_{vw}>0$ such that
\(
\phi_w=\alpha_{vw}+\gamma_{vw}\phi_v
\quad\text{on }I_v\cap I_w.
\)
The transition is unique because a nonempty intersection of two open intervals contains more than one point.

Any two measurement neighborhoods can be connected by a finite chain of pairwise-overlapping $I$-intervals; otherwise the union of the intervals connected to a fixed base point and the union of the remaining intervals would separate $(0,1)$. Fix one base scale and propagate its normalization along such a chain. We show that the resulting normalization does not depend on the chosen chain.

It is enough to check closed overlap chains. For intervals on the real line, any finite overlap cycle of length at least four has a chord between two nonconsecutive intervals; repeated use of such chords reduces the cycle to triangles. For three pairwise-overlapping open intervals, the interval property implies a nonempty common intersection. On that common overlap, both the direct transition from the first scale to the third and the composition of the first-to-second and second-to-third transitions transform the same nonconstant function into the third scale. Uniqueness of the positive-affine transition therefore makes the direct and composed maps identical. Hence the transition around every triangle, and therefore every finite closed chain, is the identity.

The propagated normalizations are consequently path independent. The normalized local scales agree on every overlap and define one continuous strictly increasing function $\phi:(0,1)\to\mathbb R$. Changing the initial normalization changes every normalized local scale by the same positive affine transformation, giving uniqueness of the common positive-affine class.
\end{proof}

\subsection{Global stable-attitude characterization}\label[appendix]{app:main-proof}

\begin{lemma}\label{lem:global-difference}
Suppose A1--A14 hold. Let $\phi:(0,1)\to\mathbb R$ be the common attitude scale supplied by A13. Then there is a unique continuous strictly increasing extension, denoted $\phi:[0,1]\to\mathbb R$, such that
\(
\delta(a,b)=\phi(a)-\phi(b)
\;(a,b\in[0,1]).
\)
\end{lemma}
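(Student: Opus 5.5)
The plan is to build the extension directly from the boundary values of $\delta$. Fix an interior anchor $c\in(0,1)$ and define $\widehat\phi(a):=\phi(c)+\delta(a,c)$ for every $a\in[0,1]$. For $a\in(0,1)$ this gives $\widehat\phi(a)=\phi(c)+\phi(a)-\phi(c)=\phi(a)$, so $\widehat\phi$ extends $\phi$. Continuity of $\widehat\phi$ on $[0,1]$ is inherited from continuity of $\delta$ in its first argument. Real-valuedness at the endpoints is exactly the content of A14(i), namely that $\delta$ is a finite continuous extension.

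Next I will verify the difference identity on all of $[0,1]^2$. The remark after A14 records that continuity extends $\delta(a,c)=\delta(a,b)+\delta(b,c)$ to all of $[0,1]^3$, and $\delta(b,c)=-\delta(c,b)$ likewise extends by continuity from the interior. Hence
\[
\widehat\phi(a)-\widehat\phi(b)=\delta(a,c)-\delta(b,c)=\delta(a,c)+\delta(c,b)=\delta(a,b).
\]
I will also check that the definition does not depend on the anchor $c$: a different anchor $c'$ changes $\widehat\phi$ by $\phi(c)-\phi(c')+\delta(c,c')=0$.

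Strict monotonicity is the step needing care, since limits of strictly increasing functions are only weakly increasing. On $(0,1)$, $\widehat\phi=\phi$ is strictly increasing. By continuity, $\widehat\phi(0)=\lim_{a\downarrow0}\phi(a)$, and this limit is finite by A14(i). Because $\phi$ is strictly increasing on the interior, for any $a\in(0,1)$ we can pick $a'\in(0,a)$ and get $\widehat\phi(0)\le\phi(a')<\phi(a)$. The symmetric argument at $1$ gives $\phi(a)<\widehat\phi(1)$. Finally $\widehat\phi(0)<\widehat\phi(1/2)<\widehat\phi(1)$, so strictness holds on the whole closed interval.

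Uniqueness is immediate: any continuous extension of $\phi$ from the dense set $(0,1)$ is determined at $0$ and $1$ by limits. So the extension $\phi:[0,1]\to\mathbb R$ is unique, and it satisfies $\delta(a,b)=\phi(a)-\phi(b)$. I expect the only real obstacle to be making sure the finiteness and continuity supplied by A14(i) are used at the boundary. That is the point where an unbounded $\phi$, such as a logarithm, would otherwise break the argument.
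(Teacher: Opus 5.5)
Your proposal is correct and follows essentially the same route as the paper. Both anchor at an interior point, define the extension as $\phi(r)+\delta(a,r)$, extend the additivity of $\delta$ to the boundary by continuity, get strict monotonicity at the endpoints by combining the weak limit inequality with strict interior monotonicity, and obtain uniqueness from the endpoint limits. The only cosmetic difference is that you invoke antisymmetry of $\delta$ separately, whereas the paper reads $\delta(a,b)=\delta(a,r)-\delta(b,r)$ directly off the extended additivity identity.
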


\begin{proof}
By A14(i), the interior difference function $\delta^\circ(a,b)=\phi(a)-\phi(b)$ has a continuous real-valued extension $\delta$ to $[0,1]^2$. On the interior,
\(
\delta^\circ(a,c)=\delta^\circ(a,b)+\delta^\circ(b,c).
\)
Approximating any boundary triple by interior triples and using continuity extends this identity to all $a,b,c\in[0,1]$. Fix $r\in(0,1)$ and define
\(
\bar\phi(a)=\phi(r)+\delta(a,r),\; a\in[0,1].
\)
Additivity gives
\(
\delta(a,b)=\delta(a,r)-\delta(b,r)=\bar\phi(a)-\bar\phi(b).
\)
For interior $a$, $\delta(a,r)=\phi(a)-\phi(r)$, so $\bar\phi$ agrees with the previously constructed $\phi$ on $(0,1)$. Continuity of $\delta$ makes $\bar\phi$ continuous on $[0,1]$.

The extension is strictly increasing. Interior strict monotonicity is inherited directly. If $0<a<1$, choose $0<x_0<c<a$. Continuity gives $\bar\phi(0)=\lim_{x\downarrow0}\phi(x)\le\phi(x_0)$, while strict interior monotonicity gives $\phi(x_0)<\phi(c)<\phi(a)=\bar\phi(a)$; hence $\bar\phi(0)<\bar\phi(a)$. The argument at 1 is symmetric. Any other continuous real-valued extension agreeing with the interior scale has the same endpoint limits, so the extension is unique. We henceforth write it as $\phi$.
\end{proof}

\begin{lemma}\label{lem:stable-global-balance-suff}
Suppose A1--A14 hold. If a balance-feasible pair $I=[a,b]_A$, $J=[c,d]_B$ satisfies
\(
q_v(A)[\phi(a)-\phi(b)]
=
q_v(B)[\phi(c)-\phi(d)],
\)
then the pair is globally exactly balanced at level $v$.
\end{lemma}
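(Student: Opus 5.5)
The plan mirrors the proof of \Cref{lem:di-global-balance-suff}, with the common global scale \(\phi\) of \Cref{lem:global-difference} replacing the level-specific \(\Phi_v\), and A14(ii) replacing A12(iii) as the transport device.

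\emph{Setup.} Fix \(v\) and the balance-feasible pair \(I=[a,b]_A\), \(J=[c,d]_B\). Put
\[
\Delta_A=\phi(a)-\phi(b),\qquad \Delta_B=\phi(c)-\phi(d),
\]
so that \(q_v(A)\Delta_A=q_v(B)\Delta_B\). By \Cref{lem:global-difference}, \(\delta(x,y)=\phi(x)-\phi(y)\) on \([0,1]^2\).

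\emph{Step 1: align the scales on the core.} By \Cref{lem:cross-level-scale-alignment}, the restriction of \(\phi\) to \(I_v\) is a positive-affine transform of the directly measured scale: \(\phi=\alpha_v+\gamma_v\phi_v\) on \(I_v\), with \(\gamma_v>0\). Hence, for \(x,y,x',y'\in I_v\), equal weighted \(\phi\)-differences are equivalent to equal weighted \(\phi_v\)-differences. The common factor \(\gamma_v\) cancels, and the weights \(q_v\) are those recovered from the \(\phi_v\) scaling in \Cref{lem:common-scale}. Because \(v\in\operatorname{int}K_v\) and \(\phi\) is continuous and strictly increasing, the set of \(\phi\)-differences generated inside \(K_v\) contains a neighborhood of zero.

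\emph{Step 2: subdivide.} Choose \(m\) so large that \(\Delta_A/m\) and \(\Delta_B/m\) both lie in that neighborhood. Define the paths \(x_k,y_k\) by
\[
\phi(x_k)=\phi(b)+\tfrac{k}{m}\Delta_A,\qquad \phi(y_k)=\phi(c)-\tfrac{k}{m}\Delta_B .
\]
These are well defined by the intermediate value theorem, since \(\phi\) is continuous and strictly increasing on \([0,1]\), and each coordinate stays between its start and target. Set \(I_k=[x_k,x_{k-1}]_A\) and \(J_k=[y_{k-1},y_k]_B\).

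\emph{Step 3: balance the core pairs.} For each \(k\), pick core changes \(\bar I_k\) in cell \(A\) and \(\bar J_k\) in cell \(B\), with endpoints in \(K_v\), having the same respective \(\phi\)-differences as \(I_k\) and \(J_k\). By Step 1, their weighted \(\phi_v\)-values coincide. Then \eqref{eq:local-additive} together with \Cref{lem:common-scale} gives \(\bar I_k\sim_v^F\bar J_k\). \Cref{lem:derived-exact-balance} makes \((\bar I_k,\bar J_k)\) locally exactly balanced. The supporting block vector lies in \(J_v^{\T}\subset[0,1]^{\T}\), so the pair is also globally exactly balanced.

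\emph{Step 4: induct along the path.} Start from a supporting \(v\)-indifferent block vector with coordinates \((b,c)\), which exists by balance-feasibility. Suppose the current block vector is \(v\)-indifferent with coordinates \((x_{k-1},y_{k-1})\). Then \((I_k,J_k)\) is balance-feasible. A14(ii), applied with \(\delta\)-equal replacements of the globally exactly balanced pair \((\bar I_k,\bar J_k)\), shows that \((I_k,J_k)\) is globally exactly balanced. A12(ii) makes this verdict hold on the particular current background, so the updated block vector with coordinates \((x_k,y_k)\) is again \(v\)-indifferent. After \(m\) steps the original starting background has been carried to \((a,d)\) while remaining indifferent to \(\ell_v\). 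Hence \((I,J)\) is globally exactly balanced.

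\emph{Main obstacle.} The delicate point is Step 1 (feeding Step 3): checking that the interior common scale \(\phi\), which was aligned across levels, still matches the level-\(v\) directly measured scale on \(K_v\) up to a positive-affine map. This is needed so that equal weighted \(\phi\)-differences really produce \(\sim_v^F\). I would settle it by tracing the normalization in \Cref{lem:cross-level-scale-alignment}, whose normalized local scales are positive-affine images of each \(\phi_v\). I would also check that the weights \(q_v\) are invariant to rescaling of \(\phi_v\), since the \(\beta_{v,A}\) scale jointly with \(\phi_v\) and their normalized ratios are unchanged.
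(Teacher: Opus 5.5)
Your proposal is correct and follows essentially the same route as the paper's proof. You subdivide the two $\phi$-changes into $m$ equal pieces and match each piece with a core pair that is $\sim_v^F$-indifferent, because $\phi$ is a positive-affine transform of $\phi_v$ on $I_v$; that pair is therefore locally, hence globally, exactly balanced by \Cref{lem:derived-exact-balance}, and you then carry the original background along the path using A14(ii) and A12(ii), with drawing core changes from $K_v$ rather than $I_v$ being only a cosmetic difference.
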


\begin{proof}
Put $\Delta_A=\phi(a)-\phi(b)$ and $\Delta_B=\phi(c)-\phi(d)$. Since $I_v$ is an open neighborhood of $v$ and the common interior attitude scale is a positive-affine normalization of the directly measured level-$v$ scale on $I_v$, the difference set generated by $\phi(I_v)$ contains a neighborhood of zero. Choose a common integer $m$ so large that $\Delta_A/m$ and $\Delta_B/m$ both lie in that difference set.

Define $x_0=b,x_m=a$ and $y_0=c,y_m=d$ by
\[
\phi(x_k)=\phi(b)+\frac{k}{m}\Delta_A,
\qquad
\phi(y_k)=\phi(c)-\frac{k}{m}\Delta_B,
\qquad k=0,\ldots,m.
\]
At step $k$ set $I_k=[x_k,x_{k-1}]_A$ and $J_k=[y_{k-1},y_k]_B$. Thus the joint experiment applies $I_k$ and $-J_k$, moving the actual coordinates from $(x_{k-1},y_{k-1})$ to $(x_k,y_k)$, and the weighted $\phi$-differences cancel. Choose directly measurable core changes $\bar I_k,\bar J_k$ in $I_v$ with the same respective $\phi$-differences. On $I_v$ the common scale differs from the measured level-$v$ scale only by one positive affine transformation, so equality of the $q_v$-weighted common-scale differences gives $\bar I_k\sim_v^F\bar J_k$. By \Cref{lem:derived-exact-balance}, each core pair is locally, and hence globally, exactly balanced.

Start from a supporting $v$-indifferent background for the original balance-feasible pair. If the current background has coordinates $(x_{k-1},y_{k-1})$, then $(I_k,J_k)$ is balance-feasible. A14(ii) transports global exact balance from $(\bar I_k,\bar J_k)$ to $(I_k,J_k)$ because the two component $\delta$-differences coincide. A12(ii) makes the resulting verdict hold on the current supporting background. The resulting block vector is again $v$-indifferent and supports the next step. Finite iteration reaches $(a,d)$ and proves global exact balance of the original pair.
\end{proof}

\begin{proof}[\textbf{Proof of \Cref{thm:main}}]
Assume A1--A14. By \Cref{thm:global-di-baseline}, choice already has a global DI representation. A13 and \Cref{lem:cross-level-scale-alignment} give one common ambiguity-attitude scale on the directly measured neighborhoods; A14(i) and \Cref{lem:global-difference} extend it to $[0,1]$.

Fix $v$. Suppose
\(
\sum_Aq_v(A)\phi(z_A)=\phi(v).
\)
Apply \Cref{lem:balancing-schedule} to the targets $\phi(z_A)-\phi(v)$, starting from the constant block vector $v\one$. At each scheduled step two coordinates move with cancelling $q_v$-weighted $\phi$-increments. Write the first coordinate move as a change $I_k$ and the second actual move as the reverse of a change $J_k$. The current block vector is $v$-indifferent, so $(I_k,J_k)$ is balance-feasible, and the cancelling weighted increments give the hypothesis of \Cref{lem:stable-global-balance-suff}. That lemma therefore preserves $v$-indifference at every step. Because every coordinate remains between its initial and target values and the schedule terminates after finitely many moves, it reaches $z$, proving $\widehat f_z\sim\ell_v$.

The strict cases follow by continuity and strict state monotonicity, exactly as in the proof of \Cref{thm:global-di-baseline}. If the aggregate is above $\phi(v)$, continuity along the segment from $0\one$ to $z$ gives $z^0\le z$, strict in at least one coordinate, with aggregate exactly $\phi(v)$; hence $\widehat f_{z^0}\sim\ell_v$ and A3 gives $\widehat f_z\succ\ell_v$. If the aggregate is below $\phi(v)$, the symmetric argument along the segment from $z$ to $1\one$ gives $z^1\ge z$, strict in at least one coordinate, on the zero contour, and therefore $\widehat f_z\prec\ell_v$. Thus, for every block vector $z$ and every pair $(R,r)\in\{(\succ,>),(\sim,=),(\prec,<)\}$,
\[
\widehat f_z\,R\,\ell_v
\quad\Longleftrightarrow\quad
\sum_Aq_v(A)\phi(z_A)\,r\,\phi(v).
\]
A12(i) extends the revealed conditional-mean reduction to arbitrary acts, yielding the stable-attitude representation in \Cref{def:iisa}.

Conversely, a global representation with one common $\phi$ is a special case of global DI and therefore gives A1--A12. Its objective-difference order is the same at every level, giving A13; its common difference scale has the required continuous boundary extension and replacing a globally exactly balanced pair by one with the same two $\phi$-differences preserves global exact balance, giving A14. Hence A13 and A14 hold.
\end{proof}

\subsection{Standard identifiable smooth-ambiguity characterization}\label[appendix]{app:stable-perception}

\begin{proof}[\textbf{Proof of \Cref{thm:standard-sa}}]
Assume A1--A15. By \Cref{thm:main}, the preference has the global stable-attitude representation with one common $\phi:[0,1]\to\mathbb R$. A15(i) fixes the within-cell beliefs. For a singleton cell this is immediate. If $|A|\ge2$, then for every pair of distinct states $s,t\in A$ and every pair of levels,
\(
\frac{p_{v,A}(s)}{p_{v,A}(t)}
=
\frac{p_{w,A}(s)}{p_{w,A}(t)}.
\)
Equality of all pairwise ratios together with normalization gives $p_{v,A}=p_A$.

It remains to fix the across-cell weights. Take $v,w$ with $U=I_v\cap I_w\neq\varnothing$ and two cells $A,B$. Because $U$ is open and $\phi$ is continuous and strictly increasing, its difference set contains a neighborhood of zero. Choose sufficiently small nonzero differences $x=\phi(a)-\phi(b)$ and $y=\phi(c)-\phi(d)$, with $a,b,c,d\in U$, satisfying
\(
q_v(A)x=q_v(B)y.
\)
At level $v$ the corresponding directly measured changes are indifferent. A15(ii) transfers that indifference to level $w$, where the common global $\phi$ represents the same objective differences. Hence
\(
q_w(A)x=q_w(B)y.
\)
Since $x,y\ne0$,
\(
\frac{q_v(A)}{q_v(B)}
=
\frac{q_w(A)}{q_w(B)}.
\)
This holds for every pair of cells on every overlap. The finite-overlap-chain argument proved in \Cref{lem:cross-level-scale-alignment} applies to the open cover $\{I_v\}_{v\in(0,1)}$, so the ratio equalities propagate across all levels. Normalization gives one common vector $q$.

Therefore
\[
H(f)=\sum_{A\in\T}q(A)\phi\!\left(\sum_{s\in A}p_A(s)u(f(s))\right)
\]
is independent of the certainty-equivalent level. Setting $v=V(f)$ in the global threshold representation gives $H(f)=\phi(V(f))$ whenever $V(f)\in(0,1)$. If $V(f)=0$, the canonical utility vector of $f$ must be $0\one$: otherwise strict state monotonicity would give $f\succ\ell_0$, contradicting $f\sim\ell_0$. Hence $H(f)=\phi(0)$. Similarly, $V(f)=1$ forces the canonical utility vector to be $\one$ and therefore $H(f)=\phi(1)$. Thus $V(f)=\phi^{-1}(H(f))$ on the full domain, which is the standard identifiable smooth-ambiguity representation.

Conversely, suppose the standard fixed-perception representation holds. It is a special case of the global DI baseline, so \Cref{thm:global-di-baseline} yields A1--A12, and it is a stable-attitude representation, so \Cref{thm:main} yields A13 and A14. The neutral rate within cell $A$ is $p_A(s)/p_A(t)$ at every level, giving A15(i). The directly measured cross-cell order at every level is represented by
\[
q(A)[\phi(a)-\phi(b)]\ge q(B)[\phi(c)-\phi(d)],
\]
which contains no level-specific object, so A15(ii) holds. Hence A15 follows.
\end{proof}

\subsection{Alternative orders of refinement}\label[appendix]{app:alternative-refinements}

\begin{proof}[\textbf{Proof of \Cref{prop:gbc-a12}}]
Under A1--A12, \Cref{thm:global-di-baseline} supplies at least one global DI representation with level-specific indices $\Phi_v$. If A13 holds, \Cref{lem:cross-level-scale-alignment} gives one common scale on the directly measured neighborhoods. Positive-affine renormalization of each $\Phi_v$ does not change the level-$v$ threshold representation, so each can be normalized to agree with that common scale on $I_v$. Conversely, if the global indices admit such normalizations, the same common objective-difference comparison represents every level on overlaps, which is A13.
\end{proof}

\begin{proof}[\textbf{Proof of \Cref{prop:perception-local}}]
Assume A15. For the within-cell perception margin, the conclusion is immediate on a singleton cell, whose conditional belief is necessarily degenerate. If $|A|\ge2$, A15(i) and uniqueness of neutral-transfer rates under A3 give $r_{st}^v=r_{st}^w$ for every pair of distinct states in $A$ and every pair of levels. Since
\(
r_{st}^v=\frac{p_{v,A}(s)}{p_{v,A}(t)},
\)
equality of all pairwise ratios together with normalization implies $p_{v,A}=p_{w,A}$. Thus for every cell we may write the common conditional belief as $p_A$.

Now fix $v,w$ with $U:=I_v\cap I_w\neq\varnothing$. On $U$, the level-$v$ and level-$w$ derived compensation orders are represented by
\(
\psi_{v,A}=\beta_{v,A}\phi_v
\;\text{and}\;
\psi_{w,A}=\beta_{w,A}\phi_w.
\)
A15(ii) says that every cross-cell comparison generated by these systems is the same at the two levels. For each cell $A$, let $X_{v,A}$ be the set of level-$v$ represented $\psi_{v,A}$-differences generated by intervals in $U$. Since $U$ is open and there are finitely many cells, choose a symmetric neighborhood $N$ of zero contained in every $X_{v,A}$; shrink it if necessary so that sufficiently small sums can be represented by adjacent intervals inside $U$.

For $x\in N$ and any cell $A$, choose an $A$-interval whose level-$v$ represented difference is $x$, and define $h_A(x)$ as its level-$w$ represented difference. This is well defined. If two $A$-intervals have the same level-$v$ value $x$, choose in a distinct cell an interval with that same value. Both cross-cell comparisons are indifferent at level $v$ and therefore, by A15(ii), at level $w$; the two level-$w$ values must coincide. The same argument across two cells gives $h_A(x)=h_B(x)$, so there is one map $h:N\to\mathbb R$. Cross-cell order invariance makes $h$ strictly increasing: if $x<y$, realize the two values in distinct cells and transfer their strict comparison from level $v$ to level $w$.

Take $x,y,x+y\in N$ small enough to be represented by adjacent intervals in one cell. Additivity of numerical differences at both levels gives
\(
h(x+y)=h(x)+h(y).
\)
A strictly increasing additive function on a neighborhood of zero is linear there, so $h(x)=\lambda_{vw}x$ for some $\lambda_{vw}>0$. For an arbitrary interval in $U$, subdivide its level-$v$ represented difference into sufficiently small monotone pieces in $N$. Applying the local proportionality piece by piece and adding level-$w$ differences gives
\begin{equation}\label{eq:perception-common-unit}
\beta_{w,A}[\phi_w(a)-\phi_w(b)]
=
\lambda_{vw}\beta_{v,A}[\phi_v(a)-\phi_v(b)]
\qquad(a,b\in U).
\end{equation}
Choose $a\ne b$ and compare this identity across two cells. The common attitude differences cancel, yielding
\(
\frac{\beta_{w,A}}{\beta_{w,B}}
=
\frac{\beta_{v,A}}{\beta_{v,B}}.
\)
After normalization, $q_w=q_v$ on every overlap. Thus A15(ii) first fixes the ratios of the cross-cell coefficients and hence the normalized cell weights. The open neighborhoods $\{I_v\}_{v\in(0,1)}$ cover the connected interval $(0,1)$, so equality propagates along finite chains of overlaps; write the common vector as $q$. Only after this normalization step do we return to \eqref{eq:perception-common-unit}, which gives
\(
\phi_w(a)-\phi_w(b)
=
\gamma_{vw}[\phi_v(a)-\phi_v(b)]
\)
on every overlap for some $\gamma_{vw}>0$. Hence A13 holds. This proves that A15 implies fixed $\{p_A\}_{A\in\T}$, fixed $q$, and A13.

Conversely, suppose the within-cell beliefs and normalized cell weights are fixed across levels and A13 holds. Fixed $p_A$ gives the same neutral rate $p_A(s)/p_A(t)$ at every level, hence A15(i). On an overlap, A13 permits the two directly measured attitude scales to be positively affinely normalized to the same scale. Because the normalized weight vector is also the same, the represented cross-cell compensation order is identical at the two levels; common positive multipliers in the unnormalized $\beta$ coefficients do not affect the ordering. Hence A15(ii) holds. This proves the converse.
\end{proof}

\begin{proof}[\textbf{Proof of \Cref{cor:perception-gbc}}]
If A12 and A15 hold, \Cref{thm:global-di-baseline} gives global DI and \Cref{prop:perception-local} fixes $\{p_A\}_{A\in\T}$ and $q$ and gives A13. By \Cref{prop:gbc-a12}, the global attitude indices can therefore be normalized to agree with one common scale on the directly measured neighborhoods. Conversely, such a global representation gives A12 through the global-DI theorem and A15 through fixed perception plus the common measured cross-cell ordering.
\end{proof}

\subsection{Local stable attitude}\label[appendix]{app:local-stable-attitude}

\begin{proof}[\textbf{Proof of \Cref{prop:local-stable}}]
If A13$^*$ holds, its A13 component and \Cref{lem:cross-level-scale-alignment} give one common positive-affine attitude class on the directly measured neighborhoods. For each $v$, the representative $\phi^v$ agrees with the directly measured scale $\phi_v$ on $I_v$, and A13$^*$(ii) is exactly the local common-scale replacement property required in the definition of local validity. Hence
\(
\phi^v|_{J_v}\in\mathcal E_v^{L}.
\)
The balancing argument in the proof of \Cref{prop:local-baseline} applies to any such locally valid index, so $\phi^v$ generates the local threshold representation throughout $J_v$. Because the threshold comparison is unchanged by a positive affine transformation of the level-$v$ attitude index, any single representative $\phi$ of the common class can therefore be used in every local threshold formula, giving \eqref{eq:local-stable}. Conversely, a local representation with one common $\phi$ makes the directly measured difference order independent of $v$ on every overlap, giving A13; after level-$v$ affine normalization, the common index agrees with $\phi_v$ on $I_v$ and equal common-scale differences preserve local exact balance on $J_v$, so its restriction belongs to $\mathcal E_v^{L}$ and A13$^*$(ii) follows.
\end{proof}

\begin{proof}[\textbf{Proof of \Cref{cor:perception-local-common}}]
By \Cref{prop:perception-local}, A15 fixes $\{p_A\}_{A\in\T}$ and $q$ and implies A13. Adding the strengthening required by A13$^*$ lets the common directly measured attitude scale represent every local domain $J_v$, so \Cref{prop:local-stable} gives the fixed-perception local stable-attitude representation. The converse is immediate from the same two propositions.
\end{proof}

\section{Identification and uniqueness}\label[appendix]{app:identification}
This appendix proves the two layers of construction invariance collected in \Cref{lem:family-invariance}: first, construction invariance of the level-specific perception objects and of the locally revealed attitude class in a neighborhood of each level under the global DI baseline; second, the additional uniqueness of the common global ambiguity-attitude class on the stable-attitude path.

\subsection{Construction invariance across elicitation constructions}

\begin{proof}[\textbf{Proof of \Cref{lem:family-invariance}}]
\emph{Part (i): fixed-level identification.}
The partition $\T$ and the conditional probabilities $p_{v,A}$ are constructed from the primitive neutral-transfer relation and its unique rates. They are therefore independent of the local-solvability domains, the orientation order, and the arbitrary reference state used to recover within-cell probabilities. It remains to compare the weights.

Fix a level $v$. Let the two constructions use measurement neighborhoods $I_v$ and $\widetilde I_v$, and put $U_v=I_v\cap\widetilde I_v$, an open interval containing $v$. Because both global DI representations describe the same certainty-equivalent index $V$, their zero contours agree on $U_v^{\T}$. Centering the two global indices at $v$ gives, for every $z\in U_v^{\T}$,
\begin{equation}\label{eq:baseline-family-zero-set}
\sum_A q_v(A)[\Phi_v(z_A)-\Phi_v(v)]=0
\quad\Longleftrightarrow\quad
\sum_A \widetilde q_v(A)[\widetilde\Phi_v(z_A)-\widetilde\Phi_v(v)]=0.
\end{equation}

For each cell $A$, let
\(
X_A=q_v(A)[\Phi_v(U_v)-\Phi_v(v)],
\)
an open interval containing zero, and for $y\in X_A$ define
\[
g_A(y)
:=
\widetilde q_v(A)
\left[
\widetilde\Phi_v\!\left(
\Phi_v^{-1}\!\left(\Phi_v(v)+\frac{y}{q_v(A)}\right)
\right)
-
\widetilde\Phi_v(v)
\right].
\]
Each $g_A$ is continuous, strictly increasing, and satisfies $g_A(0)=0$. Equation \eqref{eq:baseline-family-zero-set} becomes, on the product of the $X_A$'s,
\(
\sum_A y_A=0
\;\Longleftrightarrow\;
\sum_A g_A(y_A)=0.
\)
Because there are finitely many cells and every \(X_A\) is an open interval containing zero, choose \(\eta>0\) such that \((-2\eta,2\eta)\subseteq X_A\) for every cell \(A\). Fix three distinct cells \(A,B,C\). For \(|x|,|y|<\eta\),
$
g_A(x)+g_B(y)+g_C(-x-y)=0.
$

Setting \(y=0\) and then \(x=0\) gives
$
g_A(t)=-g_C(-t)
\;\text{and}\;
g_B(t)=-g_C(-t)
$
for all sufficiently small \(t\), so \(g_A=g_B\) locally. Setting \(y=-x\) in the three-cell identity gives
$
g_A(x)+g_B(-x)=0.
$
Hence the common local map \(g_A=g_B\) is odd. Combining this with
$
g_A(t)=-g_C(-t)
$
shows that \(g_C=g_A=g_B\) locally. Denote this common local map by \(g\). Substituting back into the three-cell identity yields
$
g(x+y)=g(x)+g(y)
$
whenever the relevant arguments lie in the common neighborhood. Continuity therefore implies
$
g(t)=\lambda_v t
$
locally for some \(\lambda_v>0\).

Finally, pairing any remaining cell \(D\) with \(C\) in a two-coordinate zero-sum vector \((t,-t)\) shows that \(g_D(t)=\lambda_v t\) locally as well. Hence the same formula holds for every cell.
Hence, after shrinking to a common neighborhood $N_v\ni v$ if necessary,
\begin{equation}\label{eq:baseline-family-local-affine}
\widetilde q_v(A)[\widetilde\Phi_v(a)-\widetilde\Phi_v(v)]
=
\lambda_v q_v(A)[\Phi_v(a)-\Phi_v(v)]
\qquad(a\in N_v)
\end{equation}
for every cell $A$. Comparing \eqref{eq:baseline-family-local-affine} for two cells at the same $a\ne v$ gives
\(
\frac{\widetilde q_v(A)}{\widetilde q_v(B)}
=
\frac{q_v(A)}{q_v(B)}.
\)
Both vectors are normalized to sum to one, so $\widetilde q_v=q_v$. Substituting this equality back into \eqref{eq:baseline-family-local-affine} and cancelling the common positive factor $q_v(A)$ gives
\[
\widetilde\Phi_v(a)-\widetilde\Phi_v(v)
=
\lambda_v[\Phi_v(a)-\Phi_v(v)]
\qquad(a\in N_v),
\]
which is the local attitude conclusion in part~(i). Since $v$ was arbitrary, the weights and these locally revealed attitude classes are construction invariant at every level. Together with construction invariance of $p_{v,A}$, this behaviorally identifies $\pi_v(s)=q_v(A)p_{v,A}(s)$ and therefore, by \Cref{lem:finite-ident}, behaviorally identifies $(P_v,\mu_v)$ up to relabelling of cells (equivalently, models). The argument does not link the local affine normalizations across levels or establish uniqueness of the global continuations.

\medskip
\emph{Part (ii): stable-attitude identification.}
By part~(i), the two constructions already recover the same $p_{v,A}$ and $q_v$ at every level. It remains to compare their common stable-attitude indices $\phi$ and $\widetilde\phi$.

Fix $v$ and put $U_v=I_v\cap\widetilde I_v$. Applying the fixed-level argument from part~(i) to the two stable-attitude representations, and using equality of the weight vectors, yields an open interval $N_v\ni v$ and constants $\alpha_v\in\mathbb R$, $\lambda_v>0$ such that
\[
\widetilde\phi(a)=\alpha_v+\lambda_v\phi(a)
\qquad(a\in N_v).
\]
The intervals $\{N_v\}_{v\in(0,1)}$ cover the connected interval $(0,1)$. Whenever two overlap, the two positive-affine formulas relate the same two nonconstant functions on an open interval, so their slopes and intercepts coincide. Connectedness propagates this equality across the cover. Hence there are one $\lambda>0$ and one $\alpha\in\mathbb R$ such that
\(
\widetilde\phi=
\alpha+\lambda\phi
\;\text{on }(0,1).
\)
Both stable-attitude indices are continuous on $[0,1]$, so the same affine equality extends to the endpoints by one-sided limits. This proves the claim.
\end{proof}

The following example illustrates the representational nonuniqueness left open by \Cref{lem:family-invariance}: a preference may admit an identified common stable-attitude class while also admitting alternative level-specific DI indices outside that class. \Cref{sec:identification,cor:identification} record the resulting identification conclusions; the role of the example is specifically to show that A12-compatible full-domain continuations may remain nonunique even though A13--A14 behaviorally identify one common global attitude class.

\begin{example}\label{ex:nonunique-global-continuation}
The failure of uniqueness of $\Phi_v$ under the global DI baseline can be genuine. Let
$S=\{s_1,s_2,s_3\}$ and let the revealed partition consist of the three singleton cells
$A_i=\{s_i\}$. Give every cell weight $q_v(A_i)=1/3$ and write
$x_i=u(f(s_i))\in[0,1]$. Consider the preference represented by
\[
V(f)
=
\left(\frac{x_1^2+x_2^2+x_3^2}{3}\right)^{1/2}.
\]
It has the common stable-attitude index $\phi(x)=x^2$, with singleton conditional beliefs and equal cell weights. The representation is full support and neutrally separated: for distinct singleton cells, a candidate two-sided neutral rate $r>0$ around a constant $v$ would require
\[
(v+\varepsilon)^2+(v-r\varepsilon)^2-2v^2
=
2v(1-r)\varepsilon+(1+r^2)\varepsilon^2
=0
\]
for all sufficiently small feasible $\varepsilon$, which is impossible. The displayed regular stable-attitude representation directly satisfies A1--A4, so \Cref{lem:local-solvability} supplies admissible nested neighborhoods and hence an admissible elicitation construction. Relative to any such construction, the converse direction of \Cref{thm:global-di-baseline} gives A1--A12 and the converse direction of \Cref{thm:main} gives A13--A14.

At every level $v$, one valid centered DI index is
\(
\Phi_v(x)=x^2-v^2
\;\text{with}\;
\Phi_v(v)=0.
\)
Fix $v_0<1/\sqrt{3}$ and put $M=\sqrt{3}v_0<1$. Shrink the directly measured neighborhood at $v_0$, if necessary, so that $I_{v_0}\subset(0,M)$. Every $v_0$-indifferent block vector satisfies
\[
x_1^2+x_2^2+x_3^2=3v_0^2=M^2,
\]
and therefore every one of its coordinates is at most $M$. Values above $M$ never occur on the $v_0$-indifference surface.

For any $\eta>0$, define a second continuous strictly increasing continuation by
\[
\widetilde\Phi_{v_0}(x)
=
\begin{cases}
 x^2-v_0^2, & 0\le x\le M,\\[1mm]
 2v_0^2+\eta(x-M), & M<x\le1.
\end{cases}
\]
The two indices agree on $[0,M]$, and hence on $I_{v_0}$, but are not positive-affine transformations of one another on $[0,1]$: because they coincide on a nondegenerate interval, any global positive-affine relation would have to be the identity, whereas their upper tails differ.

Nevertheless they generate exactly the same $v_0$-threshold relation. If every coordinate satisfies $x_i\le M$, the two aggregate indices coincide. If some $x_i>M$, then
\[
\widetilde\Phi_{v_0}(x_i)>2v_0^2,
\qquad
\widetilde\Phi_{v_0}(x_j)\ge -v_0^2
\quad(j\ne i),
\]
so
\(
\sum_j\widetilde\Phi_{v_0}(x_j)>0
\).
At the same time $x_i^2>3v_0^2$, so
\(
\sum_j\Phi_{v_0}(x_j)>0
\).
Thus, for every block vector $x\in[0,1]^3$,
\[
\sum_i\Phi_{v_0}(x_i)\gtreqless0
\quad\Longleftrightarrow\quad
\sum_i\widetilde\Phi_{v_0}(x_i)\gtreqless0.
\]

The alternative continuation is also compatible with the replacement requirement in A12(iii). Every globally exactly balanced pair at level $v_0$ starts and ends on the $v_0$-indifference surface, so all of its relevant endpoints lie in $[0,M]$, where $\Phi_{v_0}$ and $\widetilde\Phi_{v_0}$ have identical differences. If a balance-feasible replacement has those same $\widetilde\Phi_{v_0}$-differences, the two weighted changes cancel from a $v_0$-indifferent starting background. The post-change aggregate under $\widetilde\Phi_{v_0}$ is therefore again zero, and the threshold equivalence above makes the resulting act $v_0$-indifferent. Keeping $\Phi_v(x)=x^2-v^2$ at every $v\ne v_0$ therefore gives a second A12-compatible family of global DI continuations for the same preference.

The example does not conflict with stable-attitude identification. The common index $\phi(x)=x^2$ works simultaneously at every certainty-equivalent level and is the structural attitude object identified, up to one common positive affine transformation, by A13--A14 and part~(ii) of \Cref{lem:family-invariance}. In particular, after expressing this common class in the level-$v$ normalization, $\phi^v(x)=x^2-v^2$ is valid on the full domain and therefore on every local representation neighborhood $J_v$; hence A13$^*$ also holds. The modified $\widetilde\Phi_{v_0}$ does not contradict A13$^*$: that condition requires the common class itself to provide a locally valid index on every $J_v$, not that every admissible level-specific DI continuation belong to the common class. Indeed, keeping the canonical quadratic continuation at all other levels yields a global DI family whose level-specific indices are not all positive-affine transformations of one another, even though the same preference admits the stable-attitude representation with common class $[x^2]$. Hence A13--A14 identify the common cross-level structural class $[\phi]$; they do not make every separately admissible level-specific continuation $\Phi_v$ unique.
\end{example}

\section{Examples separating the attitude-stability restrictions}\label[appendix]{app:attitude-examples}

The next two examples complement \Cref{ex:nonunique-global-continuation}. That example concerns representational nonuniqueness even when global stable attitude, and hence A13$^*$, holds. Here the purpose is different: to separate failure of directly measured cross-level alignment from failure of globalization. The first example violates A13 itself; the second satisfies the stronger local condition A13$^*$ but violates A14(ii).

\begin{example}\label{ex:a13-failure}
Let $S=\{s_1,s_2,s_3\}$, let the revealed partition consist of the three singleton cells $A_i=\{s_i\}$, and set $q_v(A_i)=1/3$ for every $v$ and $i$. Write $x_i=u(f(s_i))\in[0,1]$. Fix $\eta\in(0,1)$ and define the level-specific indices
\(
\Phi_v(x)=x+\eta v x^2.
\)
For each $x=(x_1,x_2,x_3)$, write
\[
\bar x:=\frac13\sum_{i=1}^3x_i,
\qquad
\overline{x^2}:=\frac13\sum_{i=1}^3x_i^2,
\]
and put
\[
H_x(v)
:=
\frac13\sum_{i=1}^3\Phi_v(x_i)-\Phi_v(v)
=
\bar x+\eta v\,\overline{x^2}-v-\eta v^3.
\]
Since
\[
\frac{\partial H_x(v)}{\partial v}
=
\eta\overline{x^2}-1-3\eta v^2
\le \eta-1<0,
\]
while $H_x(0)\ge0$ and $H_x(1)\le0$, there is a unique $V(x)\in[0,1]$ satisfying $H_x(V(x))=0$. Hence
\[
V(x)\gtreqless v
\quad\Longleftrightarrow\quad
\frac13\sum_{i=1}^3\Phi_v(x_i)
\gtreqless
\Phi_v(v).
\]
Since $H_x(v)$ is jointly continuous in $(x,v)$, strictly decreasing in $v$, and has a unique zero, the root $V(x)$ depends continuously on $x$. It fixes constants. Moreover, increasing any coordinate $x_i$ strictly raises $H_x(v)$ for every $v$, so the unique zero $V(x)$ strictly increases; hence $V$ is strictly state-monotone. The singleton representation has full support. It is also neutrally separated: a two-sided neutral rate $r>0$ between two distinct singleton cells around a constant $v$ would require
\(
(1+2\eta v^2)(1-r)\varepsilon
+
\eta v(1+r^2)\varepsilon^2
=0
\)
for all sufficiently small feasible $\varepsilon$, which is impossible. Thus the displayed global DI representation directly satisfies A1--A4. Hence \Cref{lem:local-solvability} supplies admissible nested neighborhoods and therefore an admissible elicitation construction. Fix any such construction. Relative to this construction, the converse direction of \Cref{thm:global-di-baseline} implies A1--A12.

A13, however, fails. At level $v$ the directly measured objective differences are represented locally by
\(
D_v(a,b)
=
\Phi_v(a)-\Phi_v(b)
=
(a-b)\bigl[1+\eta v(a+b)\bigr].
\)
Take distinct sufficiently close levels $v\ne w$ for which $I_v\cap I_w$ is a nondegenerate interval. If A13 held, the two algebraic-difference orders would coincide on that overlap, so positive-affine uniqueness of the local difference representation would imply
\(
x+\eta w x^2
=
\alpha+\lambda\bigl(x+\eta v x^2\bigr)
\)
throughout an open interval, for some $\lambda>0$. Equality of polynomial coefficients gives $\alpha=0$, $\lambda=1$, and $w=v$, a contradiction. Hence the global DI preference satisfies A1--A12 but fails A13 and therefore admits no global stable-attitude representation. The failure is behavioral: the locally revealed attitude-difference order itself changes across certainty-equivalent levels.
\end{example}

\begin{example}\label{ex:a14-transport-failure}
Again take three singleton cells with equal weights $q_v(A_i)=1/3$. Let
\[
\phi(x)=x+x^2,
\qquad
\rho=\frac14,
\qquad
\kappa=\frac12,
\qquad
h(t)=(t-\rho)_+^2,
\qquad y_+:=\max\{y,0\},
\]
and define
\[
\Phi_v(x)
=
\phi(x)+\kappa v(1-v)h(x-v).
\]
Because $h(0)=0$, $\Phi_v(v)=\phi(v)$. For a block vector $x$, set
\(
H_x(v)
:=
\frac13\sum_i\Phi_v(x_i)-\Phi_v(v).
\)
Since \(x_i-v\in[-1,1]\), we have
$ 0\le h(t)\le \frac{9}{16}, \; 0\le h'(t)\le \frac32 \;\text{for all }t\in[-1,1], $
and \(v(1-v)\le1/4\), so
\[
\frac{\partial H_x(v)}{\partial v}
=
-(1+2v)
+
\frac12\left[
(1-2v)\frac13\sum_{i=1}^3 h(x_i-v)
-
v(1-v)\frac13\sum_{i=1}^3 h'(x_i-v)
\right] < -1+\frac{15}{32}<0.
\]
Together with $H_x(0)\ge0$ and $H_x(1)\le0$, this gives a unique zero $V(x)$. Joint continuity of $H_x(v)$ in $(x,v)$ and strict monotonicity in $v$ make $V$ continuous, while increasing any coordinate $x_i$ strictly raises $H_x(v)$ for every $v$ and therefore strictly raises the unique zero. Thus $V$ is strictly state-monotone and satisfies
\(
V(x)\gtreqless v
\;\Longleftrightarrow\;
\frac13\sum_i\Phi_v(x_i)
\gtreqless\Phi_v(v).
\)
For sufficiently small transfers around a constant \(v\), the perturbation term vanishes, so neutral separation is determined by the quadratic scale \(\phi\): a candidate rate \(r>0\) would have to satisfy
$
(1+2v)(1-r)\varepsilon+(1+r^2)\varepsilon^2=0
$
for all sufficiently small feasible \(\varepsilon\) of either sign. No \(r>0\) can satisfy this identity, since its quadratic coefficient \(1+r^2\) is strictly positive. Hence the representation is neutrally separated. Together with the regularity established above and the full-support perception specification, the representation is therefore regular, full support, and neutrally separated.
 It directly satisfies A1--A4, so before invoking the converse global characterization fix an admissible elicitation construction as follows. For every $v$, choose an open local representation neighborhood $J_v\ni v$ sufficiently small that
\(
J_v\subset (0,v+\rho)\cap(0,1),
\)
and apply \Cref{lem:local-solvability} to choose a corresponding directly measured neighborhood $I_v$ with $\overline I_v\subset J_v$. Relative to this construction, the converse of \Cref{thm:global-di-baseline} gives A1--A12. Since $h(x-v)=0$ throughout $J_v$,
\(
\Phi_v(x)=\phi(x)=x+x^2
\; \text{for all} \;x\in J_v.
\)
Thus the same common attitude class governs the entire local representation domain at every level. More precisely, in the level-$v$ normalization the representative
\(
\phi^v(x):=\phi(x)-\phi(v)
\)
agrees with the normalized directly measured scale on $I_v$ and is a locally valid index throughout $J_v$. Hence A13$^*$ holds, and therefore A13 holds as well. The common interior difference scale also extends continuously to the boundary as
\(
\delta(a,b)=\phi(a)-\phi(b),
\)
so A14(i) holds.

A14(ii) nevertheless fails. Fix $v_0=1/2$ and $\Delta=1/20$. Since the perturbation starts only above $3/4$, choose
\[
b=\frac12,
\qquad
c=\frac9{20},
\qquad
\phi(a)=\phi(b)+\Delta,
\qquad
\phi(d)=\phi(c)-\Delta.
\]
All four endpoints lie below $3/4$. Moreover the starting coordinates $(b,c)$ can be completed by a third coordinate $z<3/4$ satisfying
\(
\phi(b)+\phi(c)+\phi(z)=3\phi(v_0).
\)
Because $\Phi_{v_0}=\phi$ below $3/4$, the pair $I=[a,b]_{A_1}$, $J=[c,d]_{A_2}$ is balance-feasible and globally exactly balanced, with
\(
\delta(a,b)=\delta(c,d)=\Delta.
\)

Now consider the replacement endpoints
\[
b'=\frac45,
\qquad
c'=\frac3{10},
\qquad
\phi(a')=\phi(b')+\Delta,
\qquad
\phi(d')=\phi(c')-\Delta.
\]
The replacement is balance-feasible. Indeed,
\[
3\phi(v_0)-\Phi_{v_0}(b')-\Phi_{v_0}(c')
=
\frac{1343}{3200}
\in\bigl(0,\phi(3/4)\bigr),
\]
so a unique third coordinate $z'<3/4$ completes $(b',c')$ to a $v_0$-indifferent starting block vector. By construction,
\(
\delta(a',b')=\delta(a,b)=\Delta
\;\text{and}\;
\delta(c',d')=\delta(c,d)=\Delta.
\)
But $a'>b'>3/4$, whereas $c',d'<3/4$. Therefore
\[
\Phi_{v_0}(a')-\Phi_{v_0}(b')
>
\phi(a')-\phi(b')
=
\Delta,
\]
while
\[
\Phi_{v_0}(c')-\Phi_{v_0}(d')
=
\phi(c')-\phi(d')
=
\Delta.
\]
The two actual level-$v_0$ changes no longer cancel, so the replacement pair is not globally exactly balanced. This violates A14(ii). Hence even A13$^*$ together with A14(i) does not suffice for global stable attitude: the same common ambiguity-attitude class can govern every local representation neighborhood $J_v$ while failing to organize compensation on the full domain.
\end{example}

Together, \Cref{ex:a13-failure,ex:a14-transport-failure} separate three levels of attitude stability. In \Cref{ex:a13-failure}, even the directly measured attitude-difference order varies across certainty-equivalent levels, so A13 fails. In \Cref{ex:a14-transport-failure}, the same common attitude class is valid throughout every local representation neighborhood $J_v$, so A13$^*$ holds, but that class does not govern full-domain compensation because A14(ii) fails. Finally, \Cref{ex:nonunique-global-continuation} shows that even when the common attitude class is globally valid, and therefore A13$^*$ and A14 hold, alternative level-specific DI continuations outside that class may remain observationally equivalent. Thus the examples distinguish genuine local attitude variation, failure of local stability to globalize, and purely representational nonuniqueness.

\end{document}